\renewcommand{\ICML@appearing}{}
\theoremstyle{plain}
\newtheorem{theorem}{Theorem}[section]
\newtheorem{proposition}[theorem]{Proposition}
\theoremstyle{definition}
\theoremstyle{remark}
\renewcommand{\vec}[1]{\bm{#1}}
\newcommand{\bp}{\text{bp}}
\newcommand{\alphaMat}{\bm{\alpha}}
\icmltitlerunning{The Multivariate Community Hawkes Model for Continuous-time Networks}
\begin{document}

\twocolumn[
\icmltitle{The Multivariate Community Hawkes Model for Dependent Relational Events in Continuous-time Networks}



\icmlsetsymbol{equal}{*}

\begin{icmlauthorlist}
\icmlauthor{Hadeel Soliman}{ut}
\icmlauthor{Lingfei Zhao}{osu}
\icmlauthor{Zhipeng Huang}{ut}
\icmlauthor{Subhadeep Paul}{osu}
\icmlauthor{Kevin S. Xu}{ut}
\end{icmlauthorlist}

\icmlaffiliation{ut}{Department of Electrical Engineering and Computer Science, University of Toledo, Toledo, OH, USA}
\icmlaffiliation{osu}{Department of Statistics, The Ohio State University, Columbus, OH, USA}

\icmlcorrespondingauthor{Subhadeep Paul}{paul.963@osu.edu}
\icmlcorrespondingauthor{Kevin S. Xu}{kevin.xu@utoledo.edu}

\icmlkeywords{Machine Learning, ICML}

\vskip 0.3in
]



\printAffiliationsAndNotice{}  

\begin{abstract}
The stochastic block model (SBM) is one of the most widely used generative models for network data. 
Many continuous-time dynamic network models are built upon the same assumption as the SBM: edges or events between all pairs of nodes are conditionally independent given the block or community memberships, which prevents them from reproducing higher-order motifs such as triangles that are commonly observed in real networks.
We propose the \emph{multivariate community Hawkes (MULCH) model}, an extremely flexible community-based model for continuous-time networks that introduces \emph{dependence between node pairs} using structured multivariate Hawkes processes. 
We fit the model using a spectral clustering and likelihood-based local refinement procedure. 
We find that our proposed MULCH model is far more accurate than existing models both for predictive and generative tasks.
\end{abstract}

\section{Introduction}

Networks are often used to represent data in the form of relations (edges) between a set of entities (nodes). 
In many settings, the nodes and edges change over time, resulting in dynamic or temporal networks. 
We consider networks observed through \emph{timestamped relational events}, where each event is a triplet $(i,j,t)$ denoting events from node $i$ (sender) to node $j$ (receiver) at timestamp $t$. 
Application settings involving timestamped relational events include interactions (messages, likes, shares, etc.) between users on social media, financial transactions between buyers and sellers, and military actions between countries in diplomatic conflicts.

There has been significant recent interest in generative models for timestamped relational event data. 
Such models typically combine a Temporal Point Process (TPP) model such as a Hawkes process \citep{laub2021elements} for event times with a latent variable network model such as a Stochastic Block Model (SBM) \citep{nowicki2001estimation} for the sender and receiver of the event \citep{Blundell2012,Dubois2013,yang2017decoupling, miscouridou2018modelling,matias2015semiparametric,junuthula2019block,arastuie2020chip, huang2022mutually}. 
We call such models \emph{continuous-time network models} because they provide probabilities of observing events between nodes at arbitrary times.

Continuous-time network models typically assume that the probability of an event occurring between a pair of nodes $(i,j)$ at some time $t$ is conditionally independent of all other node pairs given the latent variable representation of the network. 
The conditional independence between node pairs allows them to be modeled separately using univariate or bivariate TPPs. 
Such an approach makes the model more tractable but prevents it from replicating higher-order structures including triangles and other network motifs \citep{benson2016higher}. 
Of particular interest in the dynamic network setting are \emph{temporal motifs}, which require multiple events to be formed between different nodes within a certain time window \citep{paranjape2017motifs}.

We propose the \emph{multivariate community Hawkes (MULCH) model}, a highly flexible continuous-time network model that introduces dependence between node pairs in a controlled manner. 
We jointly model all node pairs using a multivariate Hawkes process where an event between a node pair $(x,y)$ can increase the probability of an event between a different node pair $(i,j)$. 
To keep the model tractable, we impose an SBM-inspired structure on the excitation matrix $\alphaMat$ of the multivariate Hawkes process. 
We consider several different types of excitation to encourage formation of temporal motifs, inspired by the notion of participation shifts \citep{gibson2003participation, gibson2005taking} from sociology.

Our main contributions are as follows: 
(1) We propose the highly flexible MULCH model for continuous-time networks using multivariate Hawkes processes that incorporate many types of dependence between node pairs including reciprocity and participation shifts.
(2) We develop an efficient estimation approach for MULCH using spectral clustering and likelihood-based local refinement.
(3) We demonstrate the MULCH fits to several real data sets are superior to existing continuous-time network models both at predicting future events and at replicating temporal motifs.
(4) We present a case study using MULCH to analyze military dispute data, revealing how groups of countries act and respond to other actions in such disputes.

\section{Background}

\subsection{Multivariate Hawkes Process}
\label{sec:hawkes}
Temporal point processes (TPPs) are used to model events that occur randomly (i.e.~not at regularly spaced intervals) over time. 
A univariate TPP is characterized by a conditional intensity function $\lambda(t|\mathcal{H}_t)$ such that the expected number of events in an infinitesimal interval $dt$ around $t$ is given by $\lambda(t|\mathcal{H}_t) \, dt$ \citep{rasmussen2018lecture}. 
$\mathcal{H}_t$ denotes the history of the TPP (all event times up to time $t$); we drop it for ease of notation and simply use $\lambda(t)$. 
A univariate Hawkes process is a self-exciting TPP where the occurrence of an event increases the probability of another event occurring shortly afterwards \citep{laub2021elements}. 
Given a sequence of timestamps $\{t_1, t_2, \ldots, t_l\}$ for $l$ events, the conditional intensity function takes the form 
$\lambda(t) = \mu + \alpha\sum_{s: t_s<t}{\gamma(t - t_s)},$ 
where $\mu$ is the background or base intensity, $\alpha$ is the jump size or excitation, and $\gamma(\cdot)$ is the kernel function. 

A multivariate TPP is characterized by a set of conditional intensity functions $\{\lambda_j(t)\}_{j=1}^d$ for the $d$ different variables. 
A multivariate Hawkes process is both self and mutually exciting, so that an event in one variable can also increase the probability of an event in another variable \citep{zhou2013learning, hawkes2018hawkes}.
The conditional intensity function of the $j$th dimension is given by 
$\lambda_j(t) = \mu_j + \sum_{s: t_s<t}\alpha_{ij }{\gamma_{ij}(t - t_s)},$
where the background intensity $\mu_j$ can differ for each dimension, $\alpha_{ij}$ denotes the jump size that an event in dimension $i$ causes to dimension $j$, and $\gamma_{ij}(\cdot)$ denotes the kernel, which may differ across dimensions. 
Notice that the background intensities are now characterized by a $d$-dimensional vector $\bm{\mu}$, while the jump sizes and kernels are characterized by $d \times d$ matrices $\bm{\alpha}$ and $\bm{\gamma}$, respectively.

\subsection{Stochastic Block Model}

The \emph{stochastic block model} (SBM), first formalized by \citet{holland1983stochastic}, is one of the most widely used generative models for network data. 
The SBM was designed for a static network, but many dynamic extensions have since been proposed, which we discuss in the related work section. 
In a (static) SBM with $n$ nodes, every node $i$ is assigned to one and only \emph{block} $Z_i \in \{1, \ldots, K\}$, where $K$ denotes the number of blocks.
For a directed SBM, given the node membership vector $\vec Z = [Z_i]_{i=1}^n$, all off-diagonal entries of the adjacency matrix $A_{ij}$ are independent Bernoulli random variables with parameter $p_{Z_i,Z_j}$, where $p$ is a $K\times K$ matrix of probabilities which is not symmetric. 
Thus, the probability of forming an edge between nodes $i$ and $j$ depends only on the node memberships $Z_i$ and $Z_j$.

Real networks often have many reciprocated edges ($A_{ij} = 1 \Rightarrow A_{ji} = 1$) and triangles ($A_{ij} = 1, A_{jm} = 1 \Rightarrow A_{im} = 1)$.
These are not replicated by the SBM due to the independence of the adjacency matrix entries. 
Replicating triangles usually requires an additional generative process on top of the SBM \citep{peixoto2021disentangling} or generative models which induce dependence \cite{paul2018higher,bollobas2011sparse}.

Fitting an SBM to data involves estimating both the node memberships $\vec{Z}$ and the edge probabilities $p_{Z_i,Z_j}$ between all pairs of blocks \citep{nowicki2001estimation}. 
Several variants of spectral clustering, including regularized versions \cite{chaudhuri2012spectral, amini2013pseudo}, have been shown to be consistent estimators of the node memberships in the SBM and various extensions in several asymptotic settings \citep{rcy11,Sussman2012,qin13,lei2015consistency,chin2015stochastic,han2015consistent,gao2017achieving}. 
Spectral clustering scales to large networks with hundreds of thousands of nodes and is generally not sensitive to initialization, so it is also a practically useful estimator.

\subsection{Related Work}

Most prior models for continuous-time networks utilize low-dimensional latent variable representations of the networks to parameterize univariate or bivariate TPPs, typically Hawkes processes, for the node pairs. 
The self excitation in Hawkes processes has been found to be a good model for conversation event sequences \citep{masuda2013self} among other temporal relational event data. 
The latent variable representations are often inspired by generative models for static networks such as latent space models \citep{hrh02} and stochastic block models. 
Continuous-time network models have been built with continuous latent space representations \citep{yang2017decoupling, huang2022mutually} and latent block or community representations \citep{Blundell2012,Dubois2013,xin2015continuous,matias2015semiparametric, miscouridou2018modelling,corneli2018multiple,junuthula2019block,arastuie2020chip}. 

\begin{figure*}[t]
    \centering
    \hfill
    \begin{subfigure}[c]{0.3\textwidth}
        \includegraphics[width=\textwidth]{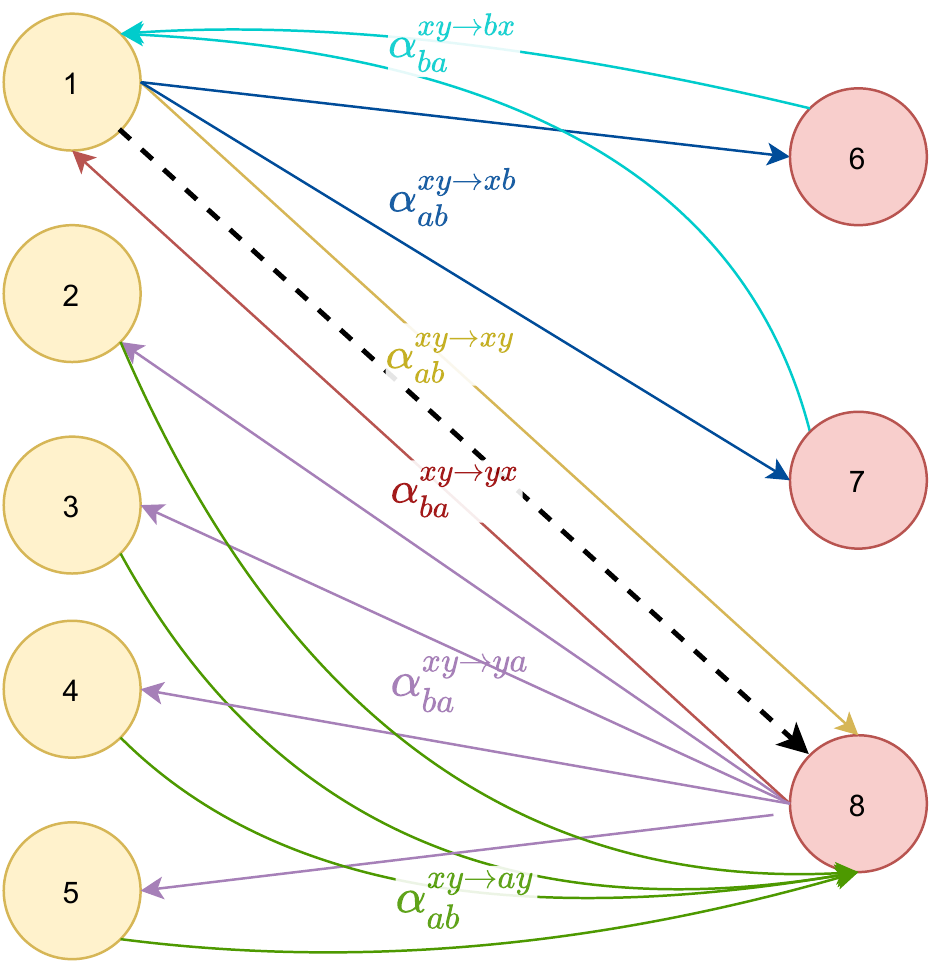}
        \caption{}
        \label{fig:excitationDiagram}
    \end{subfigure}
    \hfill
    \begin{subfigure}[c]{0.39\textwidth}
        \centering
        \includegraphics[width=\textwidth, clip=true, trim={0 0 20 0}]{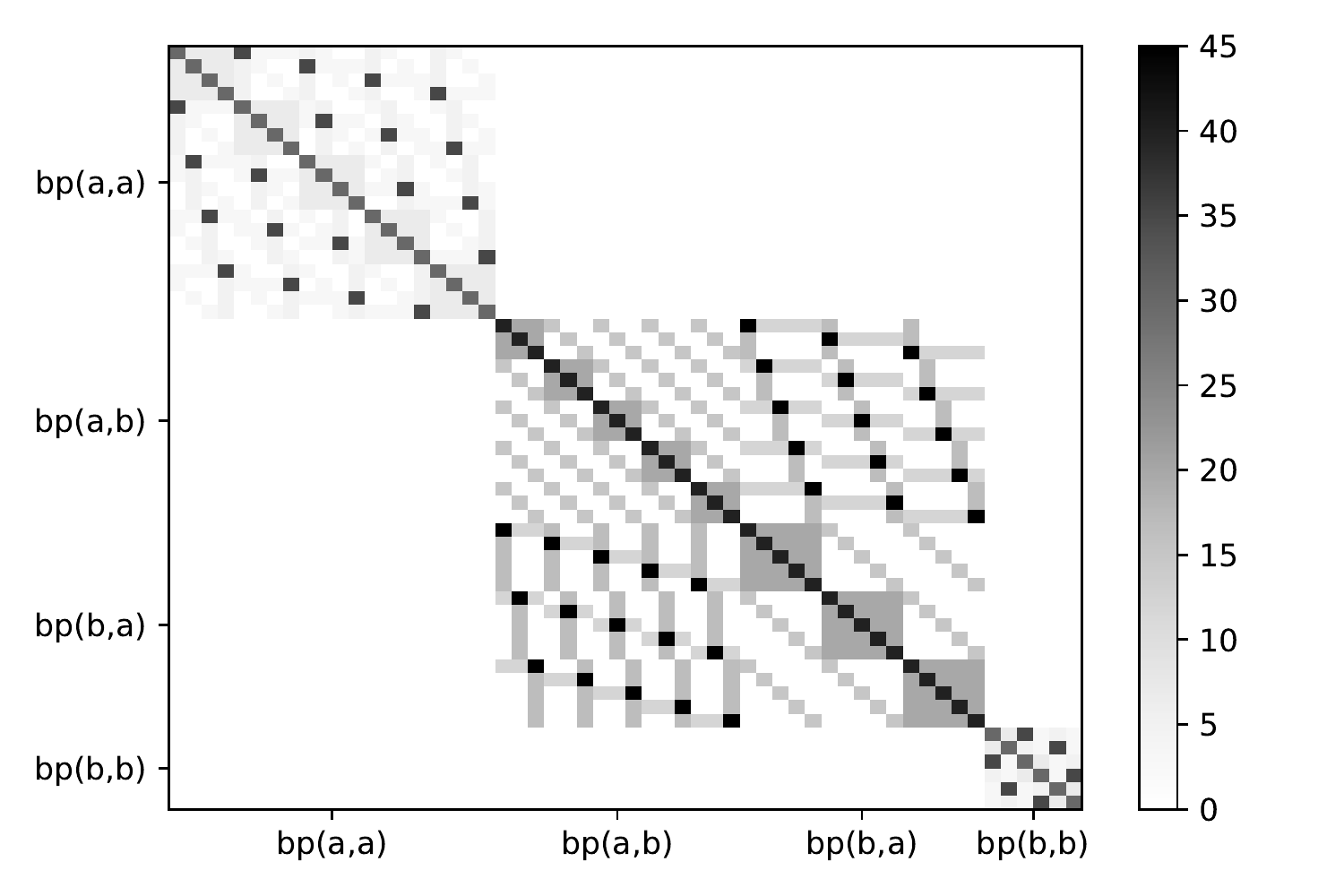}
        \caption{}
        \label{fig:Excitation_matrix_inside}
    \end{subfigure}
    \hfill
    \begin{subfigure}[c]{0.3\textwidth}
        \centering
        \includegraphics[width=\textwidth, clip=true, trim={10 0 20 0}]{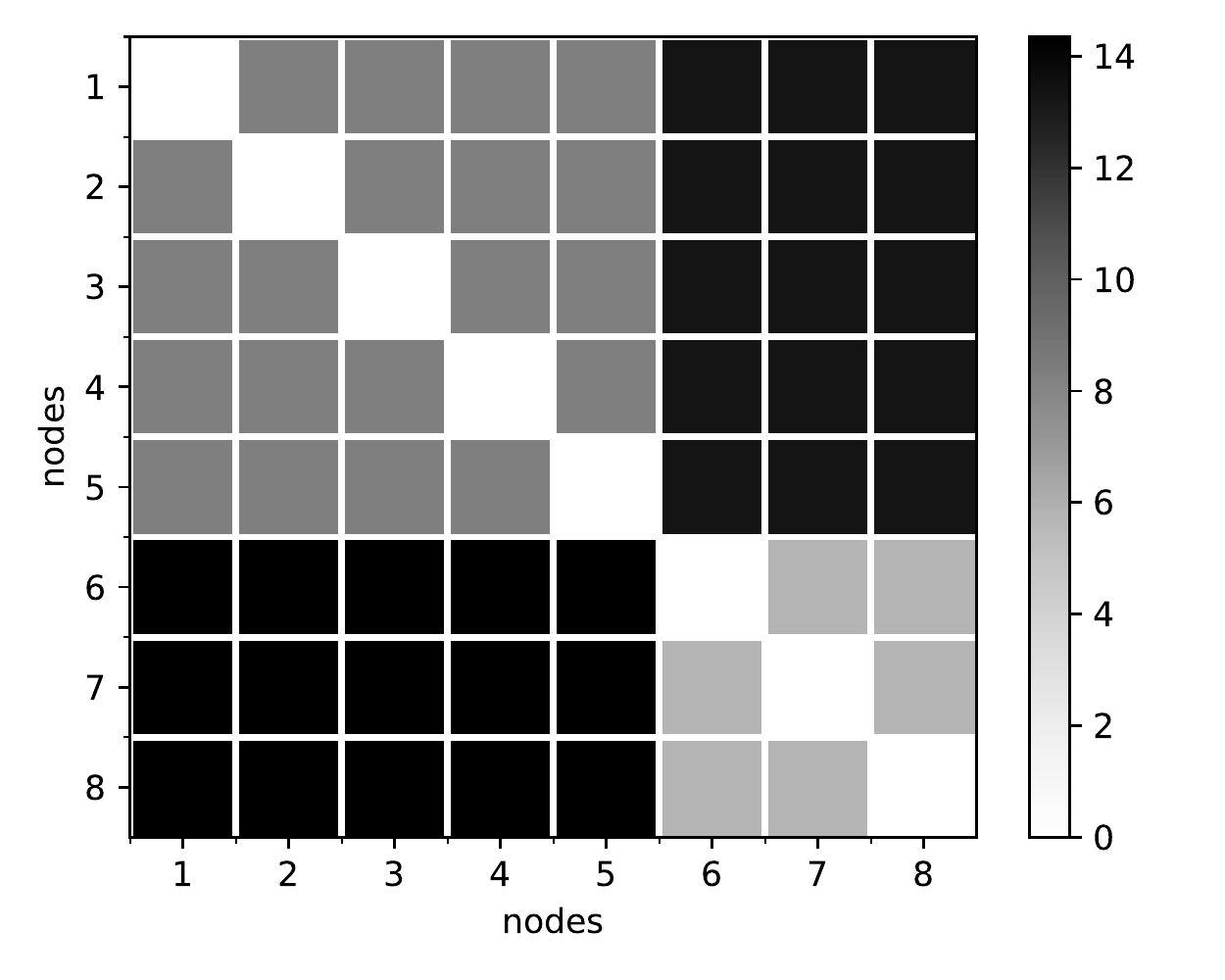}
        \caption{}
        \label{fig:Block_structure_count}
    \end{subfigure}
    \hfill
    \caption{Toy example illustrating the different excitations in the proposed multivariate block Hawkes model.
    Nodes 1-5 are in block $a$, while nodes 6-8 are in block $b$. 
    \subref{fig:excitationDiagram} The event $(1,8)$ (dashed line) excites the processes for the other node pairs shown in solid lines by the specified jump size $\alpha_{ab}$ or $\alpha_{ba}$.
    \subref{fig:Excitation_matrix_inside} Block diagonal structure of the excitation matrix $\alphaMat$ for this toy example.
    \subref{fig:Block_structure_count} Block structure of the expected count matrix (expected number of events between each node pair) for this toy example.}
    \label{fig:MultiBHM_illustration}
\end{figure*}

The CHIP \citep{arastuie2020chip} model uses a univariate Hawkes process to model self excitation for each node pair, with node pairs in the same community pair sharing parameters. 
Bivariate Hawkes process models \citep{Blundell2012,yang2017decoupling,miscouridou2018modelling, huang2022mutually} allow events $i \rightarrow j$, which we denote by the directed pair $(i,j)$, to influence the probability of events $(j,i)$. 
This encourages reciprocal events, which are commonly seen in email and messaging networks, where a reciprocal event typically denotes a user replying to a message.
A weakness of such models is that they still have no mechanism to encourage the formation of higher-order motifs such as triangles given their bivariate nature.
One way to accomplish this would be to move to higher-dimensional Hawkes processes, as we propose in this paper.

High-dimensional multivariate Hawkes processes are frequently used for estimating the structure of a latent or unobserved network from observed events at the nodes 
\citep{zhou2013learning,Linderman2014,Farajtabar2015,Tran2015}. 
These models are often used to estimate \emph{static} networks of diffusion from information cascades.
The information cascade is modeled using an $n$-dimensional Hawkes process with each dimension corresponding to a node, and the objective is to estimate the $n \times n$ excitation matrix $\bm{\alpha}$ of influence strengths between nodes.

\section{Multivariate Community Hawkes Model}

In order to replicate higher-order motifs such as triangles, we must move beyond univariate and bivariate Hawkes process-based models. 
One approach would be to model all (ordered) node pairs $(i,j)$ using an $n(n-1)$-dimensional Hawkes process where an event for any node pair can excite any other node pair, similar to the models used for information cascades \citep{zhou2013learning,Tran2015}. 
Such a model would have an $n(n-1) \times n(n-1)$ excitation matrix $\bm{\alpha}$, which would be computationally intractable even for relatively small networks with a few hundred nodes.

We propose the \emph{multivariate community Hawkes (MULCH)} model, an extremely flexible continuous-time network model based on the SBM where each node $i$ belongs to a block $a \in \{1, \ldots, K\}$, which we denote by $i \in a$ or $Z_i=a$. 
Each node pair $(i,j)$ then belongs to a block pair $(a,b)$, which we denote by $(i,j) \in \bp(a,b)$. 
We assume a block diagonal structure on $\alphaMat$ so that events for node pair $(x,y) \in \bp(a,b)$ can only excite node pairs $(i,j) \in \{\bp(a,b), \bp(b,a)\}$. 
This block diagonal structure is shown in Figure \ref{fig:Excitation_matrix_inside}. 
This structure allows for higher-order motifs to form in the same $(a,b)$ and reciprocal $(b,a)$ block pairs of the initial event for node pair $(x,y)$.

Within the non-zero diagonal blocks of $\alphaMat$, we introduce dependence between node pairs in a controlled manner using different types of excitations. 
Consider a node pair $(i,j) \in \bp(a,b)$, which has conditional intensity function
\begin{equation*}
    \lambda_{ij}(t) = \mu_{ab} + \sum_{\substack{(x,y) \in \bp(a,b) \\ (x,y) \in \bp(b,a)}} \alpha_{ab}^{xy \rightarrow ij} \sum_{t_s \in T_{xy}} \gamma_{xy \rightarrow ij}(t - t_s),
\end{equation*}
where $T_{xy}$ denotes the set of times that event $(x,y)$ happens. 
The parameters $\alpha_{ab}^{xy \rightarrow ij}$ control the types of excitations in the model by denoting which node pairs $(x,y)$ can increase the probability of an event for node pair $(i,j)$ occurring shortly after time $t$.

\subsection{Excitation Parameters}

\begin{table}[t]
    \centering
    \caption{Descriptions of the 6 types of excitations (illustrated in Figure \ref{fig:excitationDiagram}) we consider following an event from node $x$ in block $a$ to node $y$ in block $b$.
    Excitations that trigger events from block $b$ back to block $a$ are modeled by $\alpha_{ba}$, not $\alpha_{ab}$.}
    \label{tab:excitationDescriptions}
    \begin{tabular}{p{0.5in}p{2.5in}}
        \toprule
        Parameter                         & Excitation Type \\
        \midrule
        $\alpha_{ab}^{xy \rightarrow xy}$ & \emph{Self excitation}: continuation of event $(x,y)$ \\
        $\alpha_{ba}^{xy \rightarrow yx}$ & \emph{Reciprocal excitation}: event $(y,x)$ taken in response to event $(x,y)$ \\
        $\alpha_{ab}^{xy \rightarrow xb}$ & \emph{Turn continuation}: $(x,b)$ following $(x,y)$ to other nodes except for $y$ in the same block $b$\\
        $\alpha_{ba}^{xy \rightarrow ya}$ & \emph{Generalized reciprocity}: $(y,a)$ following $(x,y)$ to other nodes except $x$ in block $a$\\
        $\alpha_{ab}^{xy \rightarrow ay}$ & \emph{Allied continuation}: event $(a,y)$ following $(x,y)$ from other nodes except $x$ in block $a$ \\
        $\alpha_{ba}^{xy \rightarrow bx}$ & \emph{Allied reciprocity}: event $(b,x)$ following $(x,y)$ from other nodes except $y$ in block $b$ \\
        \bottomrule
    \end{tabular}
\end{table}

Many variants of our proposed multivariate block Hawkes model are possible depending on the structure of the parameters $\alpha_{ab}^{xy \rightarrow ij}$ corresponding to different types of excitations. 
We consider 6 types of excitations, listed in Table \ref{tab:excitationDescriptions} and illustrated for a sample network in Figure \ref{fig:excitationDiagram}. 
If we had only the excitation parameter $\alpha_{ab}^{xy \rightarrow xy}$, then our model would only incorporate self excitation and reduces to the CHIP model \citep{arastuie2020chip}. 
The remaining 5 parameters denote mutual excitation. 
These different types of mutual excitation correspond to  the notion of participation shifts \citep{gibson2003participation, gibson2005taking} from sociology. 
The terms ``turn continuation'' \citep{gibson2003participation} and ``generalized reciprocity'' \citep{yamagishi2000group} also have origins from sociology.

Each excitation is associated with its own jump size parameter that controls probability of such an event following the initial event $(x,y)$. 
The structure of a non-zero diagonal block of $\alphaMat$ under these 6 excitations is shown in Figure \ref{fig:Excitation_matrix_inside}.
By using specific types of excitations in a multivariate Hawkes process model, we not only reduce the number of parameters, but also improve interpretability through the set of excitation parameters for each block pair.

While other models have incorporated self excitation \citep{arastuie2020chip, junuthula2019block} and reciprocal excitation \citep{Blundell2012, yang2017decoupling, miscouridou2018modelling}, our model is the first to incorporate the 4 other types of excitations.
These newly-introduced excitations incorporate higher-order dependencies beyond a node pair to encourage the formation of motifs such as triangles.

\subsection{Generative Process}
To generate events between node pairs in the network, we first sample the membership of each node $Z_i$ from a categorical distribution with block probability $\bm{\pi}=[\pi_1, \cdots, \pi_K]$. There is a total of $n(n-1)$ node pairs, which are split into $K\times K$ block pairs. All node pairs within one block pair $(a, b)$ share the same base intensity $\mu_{ab}$ and same set of 6 excitation parameters from Table \ref{tab:excitationDescriptions}. 
We denote these 7 parameters for a block pair by $\bm{\theta}_{ab}$. 

By ordering node pairs by block pairs, we can form the  excitation matrix $\alphaMat$ for a high-dimensional multivariate Hawkes process as shown in Figure \ref{fig:Excitation_matrix_inside}.
As a result of MULCH's block diagonal structure on $\alphaMat$, we can simulate events between $(i,j) \in \{\bp(a,b), \bp(b,a)\}$ as separate multivariate Hawkes processes. 
We use a variant of Ogata's thinning algorithm proposed by \citet{xu2020modeling} to simulate the multivariate Hawkes processes. 
Given the block probabilities $\bm{\pi}$, the Hawkes process parameters for all block pairs $\bm{\Theta} = [\bm{\theta}_{ab}]$, and simulation duration $T$, the generative process is summarized in Algorithm \ref{alg:Generation}. 
The generated events are sets of triplets $(i,j,t)$ denoting nodes and timestamps concatenated into a single array $E$.

\begin{algorithm}[t]
\caption{MULCH Model Generative Process}
\label{alg:Generation}
\begin{algorithmic}[1]
\REQUIRE Block probabilities $\bm{\pi}$, Hawkes process parameters $\bm{\Theta}$, Duration $T$
\ENSURE Node memberships $\vec Z$, Array of event triplets $E$
    \FOR{$i=1$ to $n$}
        \STATE $Z_i \sim \text{Categorical}(\bm{\pi})$
    \ENDFOR
    \STATE Split $n\times(n-1)$ node pairs into $K\times K$ block pairs
    \FORALL{{diagonal block pair $(a,a)$}}
        \STATE $E_{aa} \leftarrow \text{MultivariateHawkesProcess}(\bm{\theta}_{aa},T)$
    \ENDFOR
    \FORALL{off-diagonal block pairs $(a,b)$ and $(b,a)$}
        \STATE $E_{ab}, E_{ba} \leftarrow \text{MultivariateHawkesProcess}(\bm{\theta}_{ab},\bm{\theta}_{ba},T)$
    \ENDFOR
    \STATE Concatenate all $E_{aa}$ and $E_{ab}$ into $E$ \\
    \RETURN $\vec Z, E$
\end{algorithmic}
\end{algorithm}

\subsection{Hawkes Process Kernel Selection}
\label{sec:KernelSelection}
A frequent choice of kernel for univariate Hawkes processes is an exponential kernel $\gamma(t) = e^{-\beta (t)}$. 
Estimation of $\beta$ is difficult, and estimators are typically not well-behaved \citep{santos2021surfacing}. 
In many applications involving multivariate Hawkes processes, the exponential kernel is normalized so that $\gamma_{xy \rightarrow ij}(t) = \beta e^{-\beta (t)}$, and $\beta$ is typically assumed to be a fixed rather than estimated parameter \citep{zhou2013learning, bacry2015hawkes, bacry2017tick}. 
While assuming $\beta$ to be fixed greatly simplifies estimation, a poor choice of $\beta$ may result in a much worse fit compared to estimating $\beta$.

One way to mitigate this possibility is to use a weighted sum of different kernels \citep{zhou2013a_learning}. 
Inspired by \citet{yang2017decoupling}, we use a sum of multiple exponential kernels with different decay rates. 
Let $\bm{\beta} = (\beta_1, \cdots, \beta_Q)$ denote a set of $Q$ fixed decays shared among all block pairs. For each $\bp(a,b)$, we introduce a block pair-specific kernel scaling parameter $\bm{C}_{ab} = \big(C_{ab}^1, \cdots, C_{ab}^Q\big)$ that is estimated simultaneously with $\bm{\Theta}$. 
For identifiability, we assume $\sum_{q=1}^Q C_{ab}^{q}=1$, and $C_{ab}^{q}\in[0,1]$. 
The full expression for the MULCH conditional intensity function with the sum of exponential kernels is provided in Appendix \ref{sec:supp_sum_of_kernels}.

\section{Estimation Procedure}

Fitting the MULCH model involves both estimation of node membership vector $\vec Z$ and Hawkes process parameters $\bm{\theta}_{ab}$ for all block pairs $(a,b)$. 
We first aggregate the number of events between each node pair $(i,j)$ to form entry $N_{ij}(T)$ of the count adjacency matrix $\bm{N}(T)$. 
We then apply spectral clustering to the count matrix $\bm{N}(T)$ to obtain an initial estimate of the node memberships $\vec Z$. 
Since the network is directed, we use singular vectors for spectral clustering according to the algorithm of \citet{Sussman2012} with the added step of row normalization of the singular vectors \citep{rohe2016co}, which is beneficial in the presence of degree heterogeneity.

The expectation of the count matrix has a block structure induced by MULCH, as shown in Figure \ref{fig:Block_structure_count}. The expected count matrix is simply $E[\bm{N}(T)] =\boldsymbol{\lambda} T$, where $\boldsymbol{\lambda}$ denotes the expected intensity function.
When the process is stationary, we have the following theorem to ensure that the expected count matrix has the block structure.
\begin{theorem}
\label{thm:block_structure}
For any $(i,j) \in (a,b)$ and $(i',j') \in (a,b)$, $\boldsymbol{\lambda}_{(i,j)} = \boldsymbol{\lambda}_{(i',j')} =  g_{ab} \mu_{ab} + g_{ba} \mu_{ba}$,
where $g_{ab }$ and $g_{ba}$ are real valued functions which depend on model parameters.
\end{theorem}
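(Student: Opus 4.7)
The plan is to invoke the standard mean intensity equation for a stationary multivariate Hawkes process and then exploit the symmetry baked into the MULCH parameterization. Recall that for a stationary multivariate Hawkes process with conditional intensities of the form $\lambda_j(t) = \mu_j + \sum_i \alpha_{ij} \int \gamma_{ij}(t-s)\,dN_i(s)$ and normalized kernels $\int_0^\infty \gamma_{ij}(s)\,ds = 1$ (which holds for our sum-of-exponentials specification because of the constraint $\sum_q C_{ab}^q = 1$), the stationary mean intensities $\bar\lambda_j = E[\lambda_j(t)]$ satisfy the Bartlett-type linear equation $\bar{\boldsymbol{\lambda}} = \boldsymbol{\mu} + \alphaMat^\top \bar{\boldsymbol{\lambda}}$. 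I would state this equation as the starting point, citing Hawkes' original derivation.

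Next, I would exploit the block-diagonal structure of $\alphaMat$ imposed by MULCH (Figure \ref{fig:Excitation_matrix_inside}): entries $\alpha^{xy \to ij}$ are nonzero only when $(x,y)$ and $(i,j)$ both lie in $\bp(a,b) \cup \bp(b,a)$ for a common unordered pair $\{a,b\}$. Consequently the large linear system decouples into independent subsystems, one for each such unordered pair. Fix $\{a,b\}$ and consider the subsystem on $\bp(a,b) \cup \bp(b,a)$. The key symmetry observation is that, for any $(i,j)\in\bp(a,b)$, the number of senders $(x,y)$ of each of the six excitation types contributing to $\bar\lambda_{ij}$ depends only on the block sizes $n_a,n_b$ and not on which $(i,j)$ in $\bp(a,b)$ we pick: one self contribution from $(i,j)$, one reciprocal contribution from $(j,i)$, $n_b-1$ turn-continuation neighbors $(i,y)\in\bp(a,b)$ with $y\neq j$, $n_a-1$ allied-continuation neighbors $(x,j)\in\bp(a,b)$ with $x\neq i$, and analogously $n_b-1$ generalized-reciprocity and $n_a-1$ allied-reciprocity contributions from $\bp(b,a)$. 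This lets me posit $\bar\lambda_{ij} \equiv \bar\lambda_{ab}$ for all $(i,j)\in\bp(a,b)$ and verify that the ansatz is consistent.

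Substituting, the decoupled subsystem collapses to a $2\times 2$ linear system in $(\bar\lambda_{ab},\bar\lambda_{ba})$:
\begin{equation*}
\begin{pmatrix} 1-c_{11} & -c_{12} \\ -c_{21} & 1-c_{22} \end{pmatrix} \begin{pmatrix} \bar\lambda_{ab} \\ \bar\lambda_{ba} \end{pmatrix} = \begin{pmatrix} \mu_{ab} \\ \mu_{ba} \end{pmatrix},
\end{equation*}
where $c_{11} = \alpha_{ab}^{\text{self}} + (n_b-1)\alpha_{ab}^{\text{tc}} + (n_a-1)\alpha_{ab}^{\text{ac}}$, $c_{12} = \alpha_{ab}^{\text{rec}} + (n_b-1)\alpha_{ab}^{\text{gr}} + (n_a-1)\alpha_{ab}^{\text{ar}}$, and $c_{21},c_{22}$ are the analogous expressions with $a$ and $b$ swapped. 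Solving by Cramer's rule yields $\bar\lambda_{ab} = g_{ab}\mu_{ab} + g_{ba}\mu_{ba}$ with $g_{ab} = (1-c_{22})/\Delta$ and $g_{ba} = c_{12}/\Delta$, where $\Delta = (1-c_{11})(1-c_{22}) - c_{12}c_{21}$. Since $\boldsymbol{\lambda}_{(i,j)}$ depends on $(i,j)$ only through its block pair, the expected count matrix $E[\bm N(T)] = \bar{\boldsymbol{\lambda}} T$ has the block structure asserted.

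The main obstacle is the careful bookkeeping for the six excitation types to confirm that the per-pair contribution counts really are the same across all members of a block pair; once this exchangeability is pinned down, the rest reduces to inverting a $2\times 2$ matrix. A secondary point is the usual stability condition that the spectral radius of $\alphaMat$ stay below $1$ so that the mean intensity equation admits a finite nonnegative solution; I would flag this as the standing stationarity assumption of the theorem rather than attempt to derive conditions on $(\mu,\alpha)$ that guarantee it.
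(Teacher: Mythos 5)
Your proof is correct, but it takes a genuinely different route through the middle of the argument than the paper does. Both start from the same place: the stationary Bartlett/Hawkes mean-intensity equation $\bar{\boldsymbol{\lambda}} = \boldsymbol{\mu} + \alphaMat^{\top}\bar{\boldsymbol{\lambda}}$ and the observation that the block-diagonal structure of $\alphaMat$ decouples the system into one subsystem per unordered block pair $\{a,b\}$. From there the paper abstracts away the combinatorics: it notes only that each of the four sub-blocks $\boldsymbol G_{ab\rightarrow ab}, \boldsymbol G_{ab\rightarrow ba}, \boldsymbol G_{ba\rightarrow ab}, \boldsymbol G_{ba\rightarrow ba}$ of $\boldsymbol I - \boldsymbol\Gamma$ has constant row sums, proves a small proposition that constant-row-sum matrices are closed under inversion, products, and differences, and pushes indicator vectors through the $2\times 2$ block-inversion formula to conclude that $\boldsymbol G_{(a,b)}^{-1}$ maps them to vectors that are constant on each of $\bp(a,b)$ and $\bp(b,a)$. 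You instead do the row-sum bookkeeping explicitly (and your counts --- one self, one reciprocal, $n_b-1$ turn-continuation, $n_a-1$ allied-continuation senders from $\bp(a,b)$, and $n_b-1$ generalized-reciprocity, $n_a-1$ allied-reciprocity senders from $\bp(b,a)$ --- are right), posit the constant-per-block ansatz, and collapse to a $2\times 2$ system solved by Cramer's rule. Your version buys explicit closed-form expressions for $g_{ab}$ and $g_{ba}$ in terms of the six excitation parameters and the block sizes, which the paper's proof deliberately leaves implicit; the paper's version buys robustness (it would survive a change in the excitation menu, since it only uses the constant-row-sum property). One small point to tighten: ``verifying that the ansatz is consistent'' shows only that the constant-per-block vector solves the reduced system; to conclude it is \emph{the} stationary mean you also need uniqueness, i.e., invertibility of $\boldsymbol I - \boldsymbol\Gamma^{\top}$. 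You effectively cover this with the spectral-radius-below-one stationarity assumption, but it should be stated as the reason the ansatz solution coincides with $\bar{\boldsymbol{\lambda}}$, not merely as a side condition.
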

Proof of this theorem can be found in Appendix \ref{sec:supp_block_structure}. 
As the duration $T \rightarrow \infty$, the count matrix $\bm{N}(T)$ should approach the expected count matrix, and spectral clustering should succeed in recovering the correct node memberships.
For finite $T$, the estimated node memberships for spectral clustering may not be optimal, so we run an iterative refinement procedure using the likelihood to improve the estimated node memberships, which we describe in Section \ref{sec:refinement}. 

We do not have a theoretical guarantee for spectral clustering estimation accuracy due to the \emph{dependent} adjacency matrix entries in MULCH resulting from the different excitations in Table \ref{tab:excitationDescriptions}. 
This is unlike CHIP \citep{arastuie2020chip}, which had independent adjacency matrix entries and used proof techniques that assumed independence.

After we get the node membership $\vec Z$, we then estimate $\bm{\Theta}$, using a maximum likelihood estimation (MLE) approach. Under the MULCH assumptions, the model's log-likelihood can be expressed as a sum over block pair log-likelihoods
\begin{equation}
\label{eq:total_log_lik}
\ell(\bm \Theta| \vec Z, \mathcal{H}_t) = \sum_{a=1}^K\sum_{b=1}^K \ell_{ab}(\vec \theta_{ab}| \vec Z,\mathcal{H}_t)
\end{equation}
The detailed form for $\ell_{ab}(\vec \theta_{ab}| \vec Z, \mathcal{H}_t)$ is provided in Appendix \ref{sec:full_log_lik}.
Each block pair log-likelihood function $\ell_{ab}(\vec \theta_{ab}| \vec Z,\mathcal{H}_t)$ can be maximized by standard non-linear optimization methods. 
We use the L-BFGS-B \citep{byrd1995limited} optimizer implemented in SciPy. Parameters are initialized to small random numbers, and we set bounds to be between $(10^{-7}, +\infty)$ to ensure that they are positive.

\subsection{Likelihood Refinement Procedure}
\label{sec:refinement}
Next, we propose a likelihood refinement procedure using the MULCH model likelihood to improve the estimation of the node memberships starting from a spectral clustering initial solution. The algorithm is motivated by leave-one-out likelihood maximization procedures that are commonly employed in conjunction with spectral clustering to achieve minimax optimal error rates of community detection  in the context of stochastic block models and extensions \cite{gao2017achieving,gao2018community,chen2020global}. However, in contrast to the previous literature, we do not use the likelihood of the count matrix elements, but instead use the likelihood of the event times to obtain the refined node memberships. 

Our refinement procedure has 3 steps. First, we perform spectral clustering on the count matrix to get the initial node memberships $\vec Z^0$, and then estimate the parameters through maximum likelihood estimation as described in the previous section. Then for each node $i$, we assign it to the block which can maximize the log-likelihood \eqref{eq:total_log_lik} given the estimated node memberships of all other nodes and the estimated parameters. 
Finally, we re-estimate the model parameters using the new block assignment.

The refinement step involves computing a likelihood over significantly less number of events and is thus computationally efficient. 
To get better results, we run the refinement multiple times, each time using the latest refined estimates as the initial values. 
We end the refinement when no node memberships change or at a maximum of $15$ iterations.

\subsection{Model Selection}
Up to this point, we have assumed that both the decay values $\beta_q$ used in the sum of exponential kernels and the number of blocks $K$ are fixed and known. 
The $\beta_q$ values should be chosen appropriately for the type of network. 
For example, for networks with rapid dynamics such as instant messaging networks, suitable $\beta_q$ values will likely be on the order of minutes, hours, or days; whereas, for networks with slower dynamics, suitable $\beta_q$ values may be in the order of weeks or months. 
We choose three $\beta_q$ values in our experiments.

In most application settings, the number of blocks $K$ is usually also unknown. 
We use task-based model selection in our experiments; that is, we choose the value of $K$ that maximizes the evaluation metric we use for predictive accuracy (Section \ref{sec:exp_pred}) or generative accuracy (Section \ref{sec:exp_gen})

\begin{figure*}[t]
    \newcommand{\figwidth}{0.33\textwidth}
    \centering
    \hfill
    \begin{subfigure}[c]{\figwidth}
        \centering
        \includegraphics[width=\textwidth]{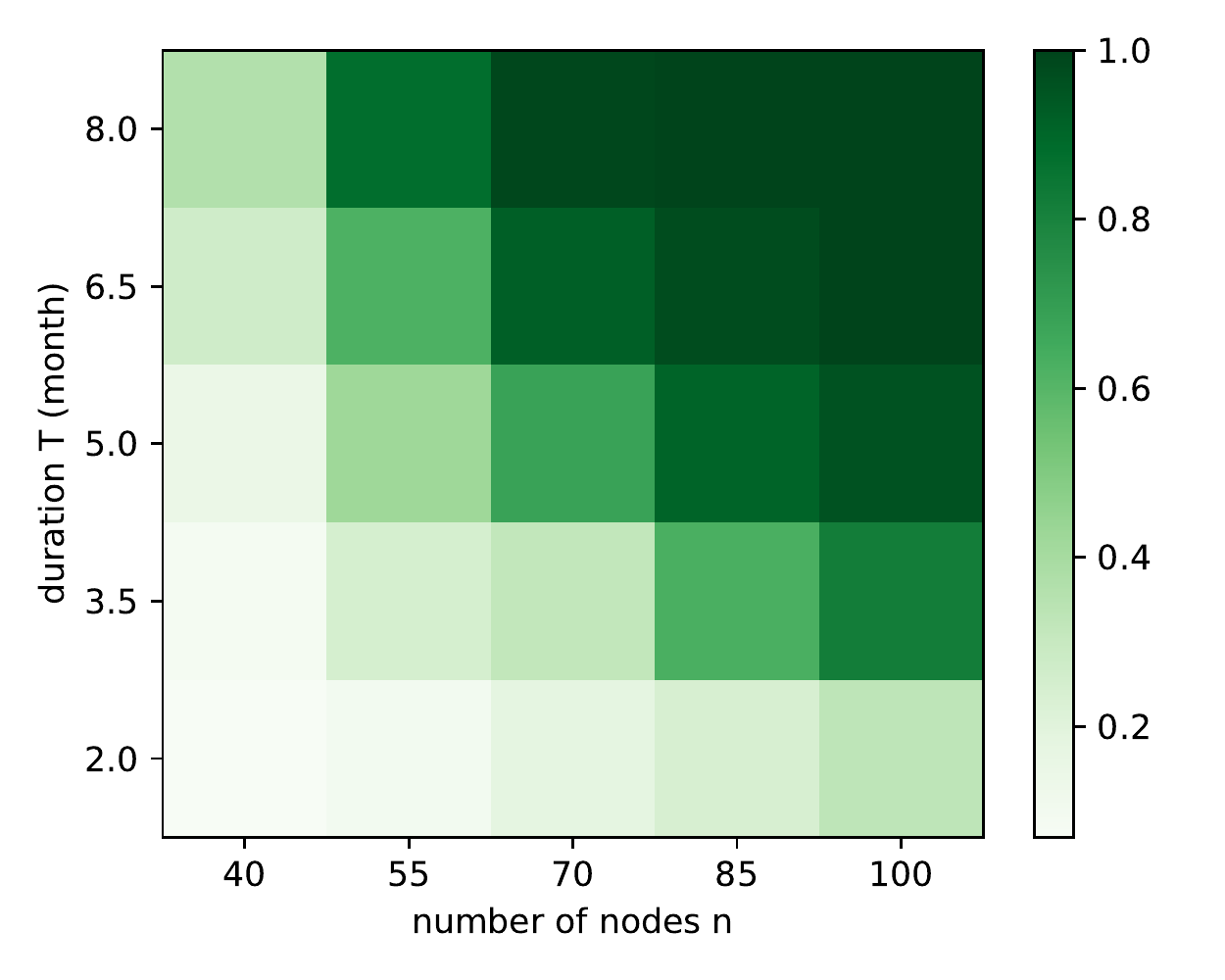}
        \caption{}
        \label{fig:RI_matrix}
    \end{subfigure}
    \hfill
    \begin{subfigure}[c]{\figwidth}
        \centering
        \includegraphics[width=\textwidth]{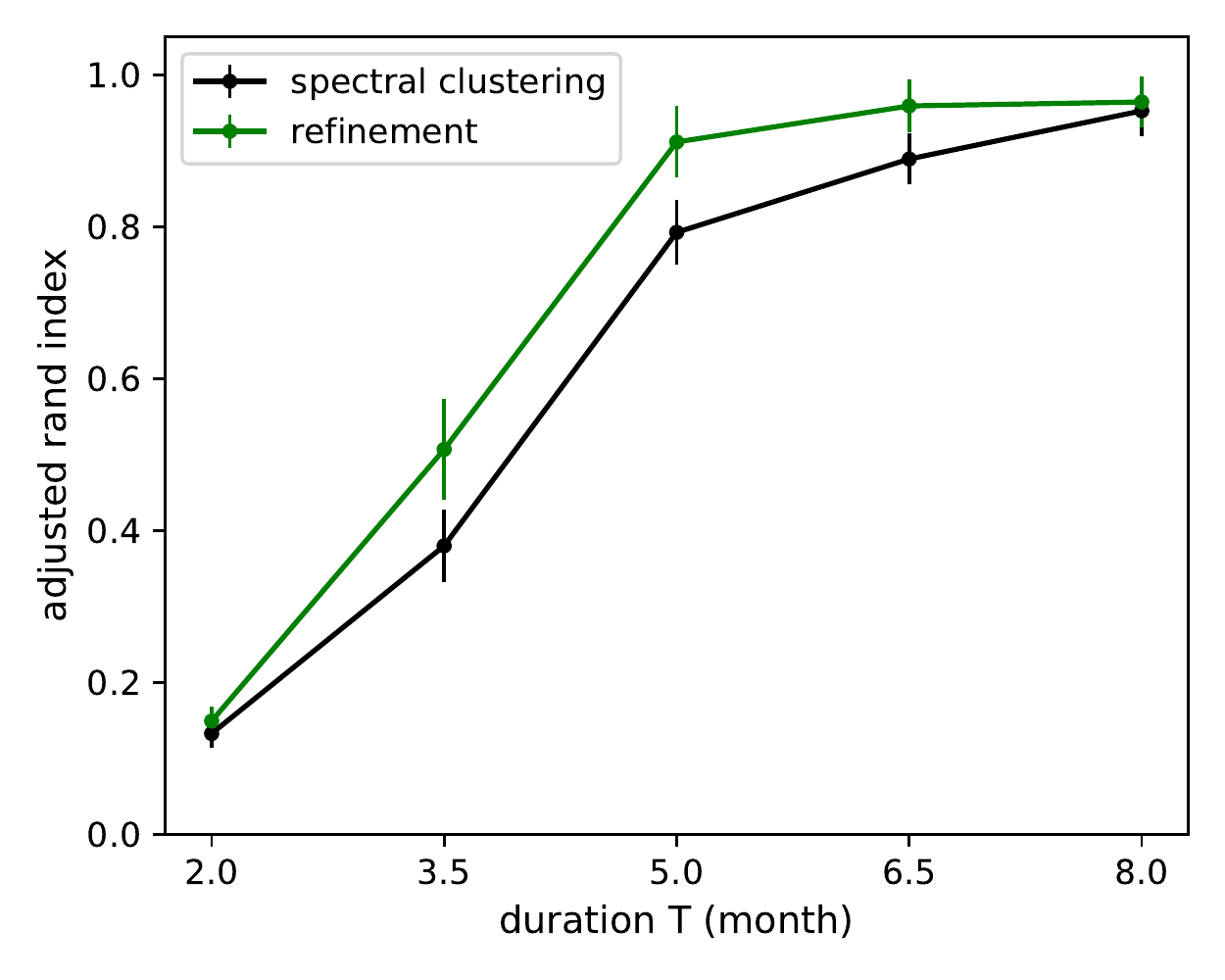}
        \caption{}
        \label{fig:RI_ref_sp_N70}
    \end{subfigure}
    \hfill
    \begin{subfigure}[c]{\figwidth}
        \centering
        \includegraphics[width=\textwidth]{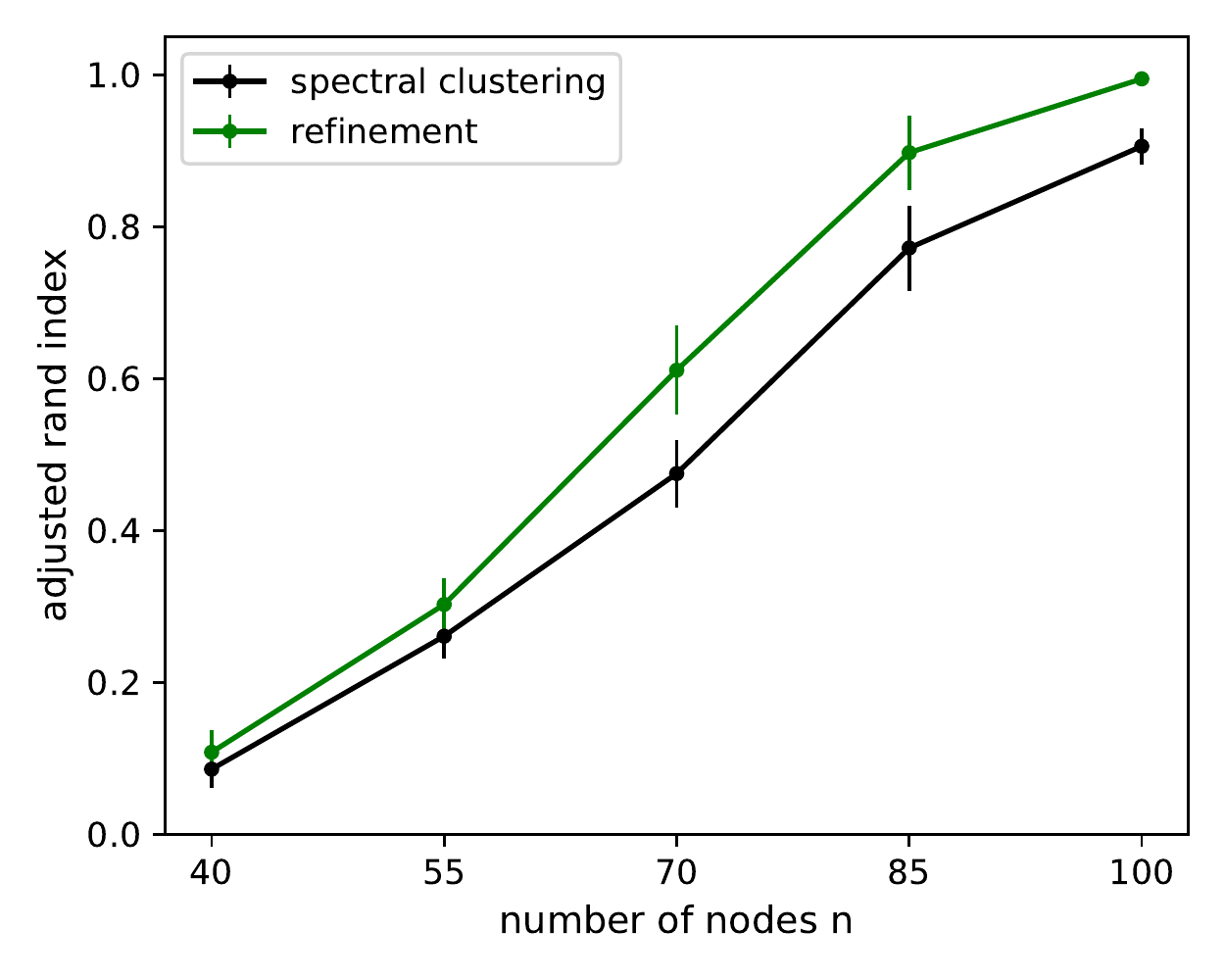}
        \caption{}
        \label{fig:RI_ref_sp_T105}
    \end{subfigure}
    \hfill
    \caption{Block estimation accuracy averaged over 10 simulated networks. 
    \subref{fig:RI_matrix} Heat map of adjusted Rand index for spectral clustering at $K=4$ while varying $T, n$.
    Comparison between adjusted Rand index achieved by spectral clustering and refinement algorithm for \subref{fig:RI_ref_sp_N70} fixed $n=70$, varying $T$ and \subref{fig:RI_ref_sp_T105} fixed $T=3.5$ months, varying $n$ ($\pm$ standard error over 10 runs).}
\end{figure*}

\section{Experiments}

\subsection{Simulated Networks}
We first test the ability of both spectral clustering and our likelihood refinement procedure to recover true node memberships on networks simulated from MULCH. In Appendix \ref{sec:supp_exp_param}, we present an additional experiment evaluating parameter estimation accuracy.

\paragraph{Spectral Clustering Accuracy}

We simulate networks at $K = 4$ while varying both $n$ and $T$. 
For each $(n, T)$ value, we simulate a network from the MULCH model, run spectral clustering on the count matrix, and calculate the adjusted Rand index \citep{hubert1985comparing} between the true and estimated node memberships, where a score of $1$ indicates perfect clustering and $0$ is the expected score for a random estimate. As shown in Figure \ref{fig:RI_matrix}, the accuracy of estimated node memberships improves as both $n, T$ increase. The average score over $10$ simulations is shown and indicates that spectral clustering can recover true node memberships for large $n$ and $T$.
Parameter values used to simulate the networks are listed in Appendix \ref{sec:supp_exp_rand}.

\paragraph{Likelihood Refinement Accuracy}

Figure \ref{fig:RI_ref_sp_N70} shows the adjusted Rand index from spectral clustering and from applying our refinement algorithm to networks simulated at $n=70$ and varying $T$. Similarly, Figure \ref{fig:RI_ref_sp_T105} shows the adjusted Rand index at $T=3.5$ months and varying $n$. Each point is averaged over $10$ simulations.
Notice that the adjusted Rand index always improves after applying our refinement algorithm, and in some cases, allows for perfect estimation of node memberships for cases where spectral clustering still makes errors.

\subsection{Real Networks}

\begin{table}[t]
    \centering
    \caption{Summary statistics of real network datasets}
    \label{tab:dataStats}
    \begin{tabular}{cccc}
    \toprule
    Dataset  & Nodes    & Total Events & Test Events \\
    \midrule
    Reality  & $70$     & $2,161$      & $661$ \\
    Enron    & $142$    & $4,000$      & $1,000$ \\
    MID      & $147$    & $5,117$      & $1,078$\\
    Facebook & $43,953$ & $852,833$    & $170,567$ \\
    \bottomrule
    \end{tabular}
\end{table}

We perform benchmark experiments on 4 real network datasets to evaluate the predictive and generative accuracy of our proposed MULCH model against several other models\footnote{Python code is available at \url{https://github.com/IdeasLabUT/Multivariate-Community-Hawkes}}. 
Summary statistics for the datasets are shown in Table \ref{tab:dataStats}, with additional details in Appendix \ref{sec:supp_datasets}.  
Each dataset consists of a set of events where each event is denoted by a sender, a receiver, and a timestamp.

\paragraph{Models for Comparison}
We compare against several other TPP models for continuous-time networks:
REM \citep{Dubois2013}, BHM \citep{junuthula2019block}, CHIP \citep{arastuie2020chip}, DLS \citep{yang2017decoupling}, and ADM4 \citep{zhou2013learning}. 
Each of these models can be fit to a network and used to evaluate test log-likelihood on future events and to simulate networks from the fit.
REM, BHM, CHIP, and DLS are continuous-time network models, while ADM4 is a sparse and low-rank regularized model for general multivariate Hawkes processes. 
Additional details on these models is provided in Appendix \ref{sec:supp_other_models}.
We present also experiments on scalability and on other parameterizations for MULCH in Appendices \ref{sec:scalibility} and \ref{sec:supp_exp_alphas}. 

\subsubsection{Predictive Accuracy}
\label{sec:exp_pred}
\paragraph{Experiment Set-up}
We first evaluate the ability of our proposed MULCH model to predict future events between nodes. 
To do this, we split the data into training and test sets, with the first $l_{\text{train}}$ events being used to fit the model and the remaining $l_{\text{test}}$ events (shown in Table \ref{tab:dataStats}) being used to evaluate the model's predictive ability. 
We assign all new nodes present in the test set but not the training set to the largest block in the training set, consistent with \citet{arastuie2020chip}. 
For the DLS model, we randomly sample latent positions for new nodes from a multivariate Gaussian.

We consider two evaluation metrics previously established in the literature.
The first is the mean test data log-likelihood per event \citep{Dubois2013, arastuie2020chip}. 
We use the same train and test splits as in \citet{Dubois2013} for the Reality and Enron datasets, which allows us to compare against their reported results. 
The second evaluation metric is the area under the receiver operating characteristic curve (AUC) for dynamic link prediction. 
Specifically, we adopt the dynamic link prediction setting proposed by  \citet{yang2017decoupling}. 
We divide the test set into $100$ random short time windows and compute the mean and standard deviation of the link prediction AUC over the $100$ windows.

\begin{table}[t]
	\centering
    \caption{Mean test log-likelihood per event for each real network dataset across all models. Larger (less negative) values indicate better predictive ability. 
    Bold entry denotes highest accuracy for a dataset. 
    Results for REM are reported values from \citet{Dubois2013}, so results on MID and Facebook are not available.
    DLS does not scale to Facebook; ADM4 does not scale beyond Reality.}
    \label{tab:testLogLik}
    \begin{tabular}{ccccc}
    \toprule
    Model & Reality      & Enron        & MID          & Facebook \\
    \midrule
    MULCH & $\bm{-3.82}$ & $\bm{-5.13}$ & $\bm{-3.53}$ & $\bm{-6.82}$ \\
    CHIP  & $-4.83$      & $-5.61$      & $-3.67$      & $-9.46$ \\
    BHM   & $-5.37$      & $-7.49$      & $-5.33$      & $-14.4$ \\
    DLS   & $-5.74$      & $-7.75$      & $-5.52$      &  \\
    REM   & $-6.11$      & $-6.84$      &              &  \\
    ADM4  & $-8.52$      &              &              &  \\
    \bottomrule
    \end{tabular}

\end{table}

\begin{table}[t]
    \setlength{\tabcolsep}{3pt}
	\centering
    \caption{Dynamic link prediction AUC for each real network dataset across all models. Mean (standard deviation) of AUC over 100 random short time windows is shown. 
    Bold entry denotes highest mean link prediction AUC for a dataset.}
    \label{tab:dlpAUC}
    \begin{tabular}{cccc}
    \toprule
    Model & Reality             & Enron               & MID \\
    \midrule
    MULCH & $\bm{0.954 (.036)}$ & $0.852 (.006)$      & $0.968 (.023)$ \\
    CHIP  & $0.931 (.033)$      & $0.792 (.005)$      & $0.966 (.030)$ \\
    BHM   & $0.951 (.035)$      & $0.846 (.005)$      & $0.973 (.022)$ \\
    DLS   & $0.935 (.034)$      & $\bm{0.872 (.001)}$ & $\bm{0.981 (.013)}$ \\
    \bottomrule
    \end{tabular}
\end{table}

\paragraph{Results and Discussion}
The predictive abilities of the different models are summarized in Table \ref{tab:testLogLik}. 
Notice that MULCH achieves the highest test log-likelihood on all 4 datasets, and by a large margin on the Reality and Facebook data. 
CHIP performed second best on all of the datasets, indicating the importance of self excitation. 

MULCH also includes reciprocal and other excitations, which is partially responsible for the improved predictive ability. 
We find that using a sum of exponential kernels in MULCH also helps to improve predictive log-likelihood compared to a single exponential kernel in CHIP and BHM.
The importance of the SBM-structured excitation matrix $\alphaMat$ in MULCH is clearly visible when comparing it to ADM4, which estimates a sparse and low rank $\alphaMat$ without additional network structure and is not competitive.

Table \ref{tab:dlpAUC} shows the dynamic link prediction AUC values\footnote{We exclude the Facebook data because the dynamic link prediction experiment does not scale to a network of its size.}. 
MULCH performs the best on Reality and is competitive on the other two datasets. 
Notice that the BHM performs better than CHIP in dynamic link prediction AUC, while it was substantially worse than CHIP in test log-likelihood in Table \ref{tab:testLogLik}. 
This is due to a difference in the evaluation metrics---test log-likelihood is evaluated on all events, so that repeated events between node pairs are counted multiple times. Conversely, dynamic link prediction considers only whether a pair of nodes had at least a single event within a short time window, so each node pair is only counted once.

\subsubsection{Generative Accuracy}
\label{sec:exp_gen}

\begin{figure}[t]
\centering
\includegraphics[width=2.6in]{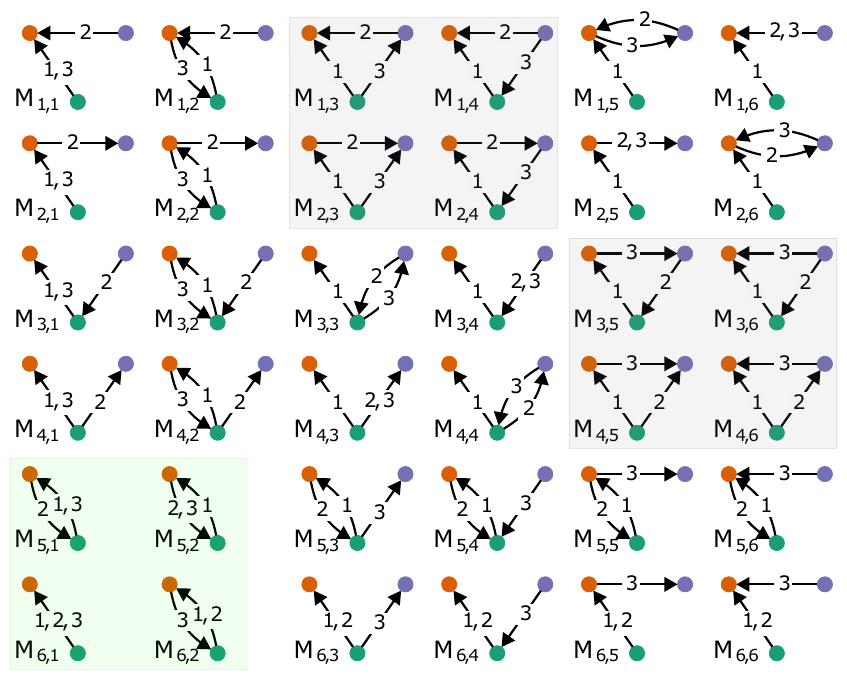}
\caption{All possible 3-edge temporal motifs. Green and grey shaded boxes denotes 2-node and triangle motifs, respectively. 
All other motifs are stars.
Figure credit: \citet{paranjape2017motifs}. }
\label{fig:Temporal_motifs}
\end{figure}

\begin{figure*}[t]
    \newcommand{\figwidth}{0.245\textwidth}
    \centering
    \hfill
    \begin{subfigure}[c]{\figwidth}
        \centering
        \includegraphics[width=\textwidth]{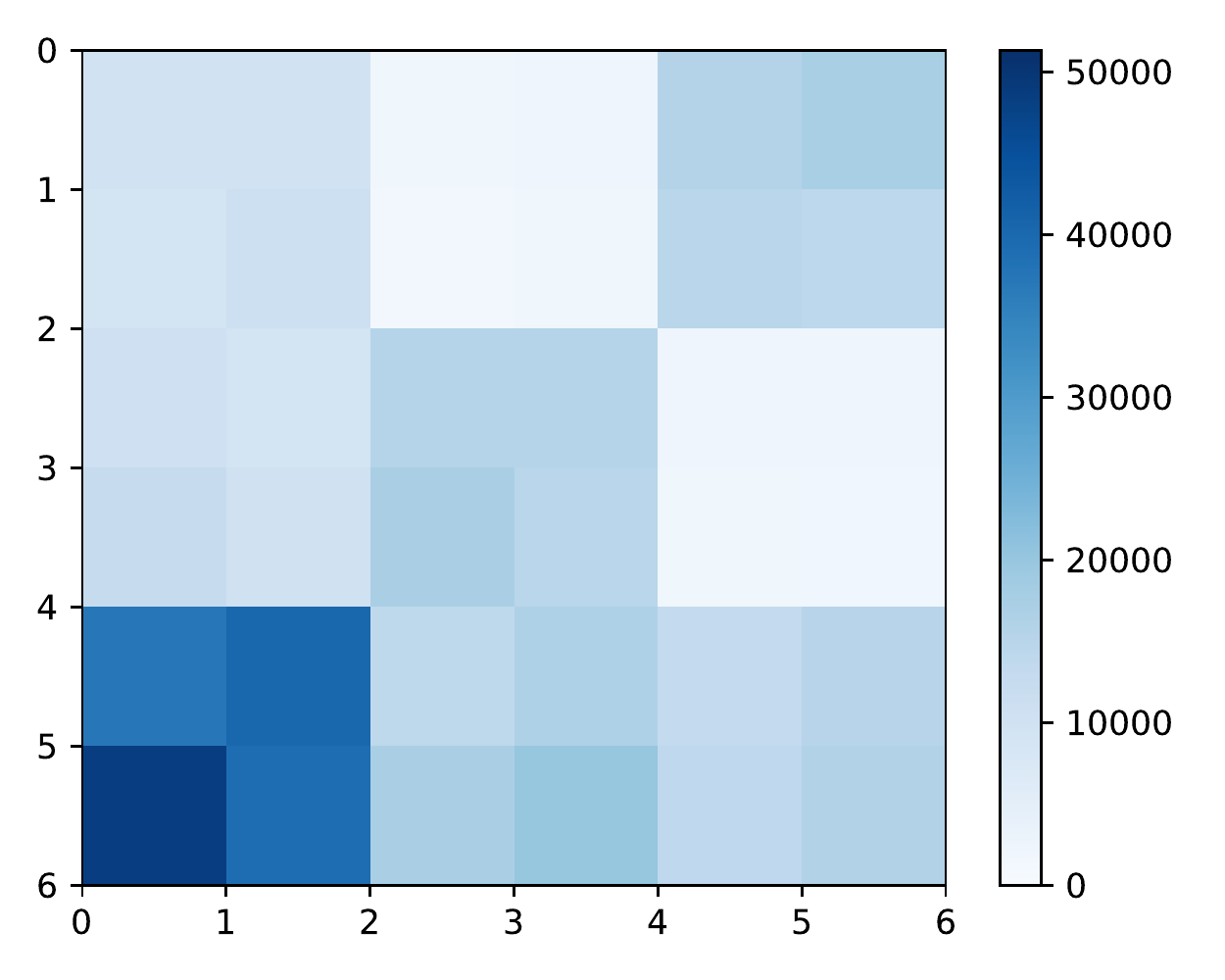}
        \caption{Reality: Actual network}
        \label{fig:MotifCountsR}
    \end{subfigure}
    \hfill
    \begin{subfigure}[c]{\figwidth}
        \centering
        \includegraphics[width=\textwidth]{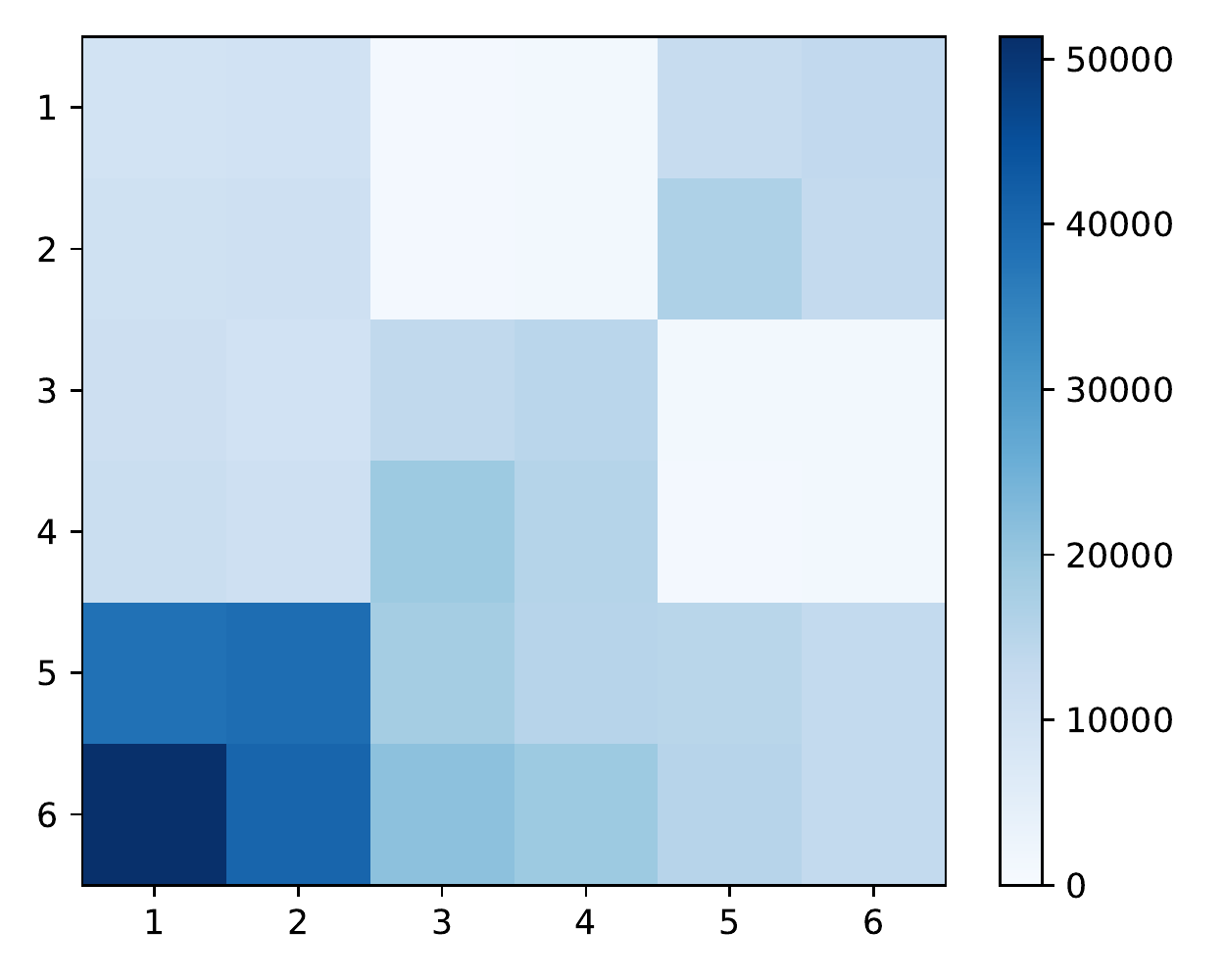}
        \caption{Reality: MULCH simulated}
        \label{fig:ModelMotifCountsR}
    \end{subfigure}
    \hfill
    \begin{subfigure}[c]{\figwidth}
        \centering
        \includegraphics[width=\textwidth]{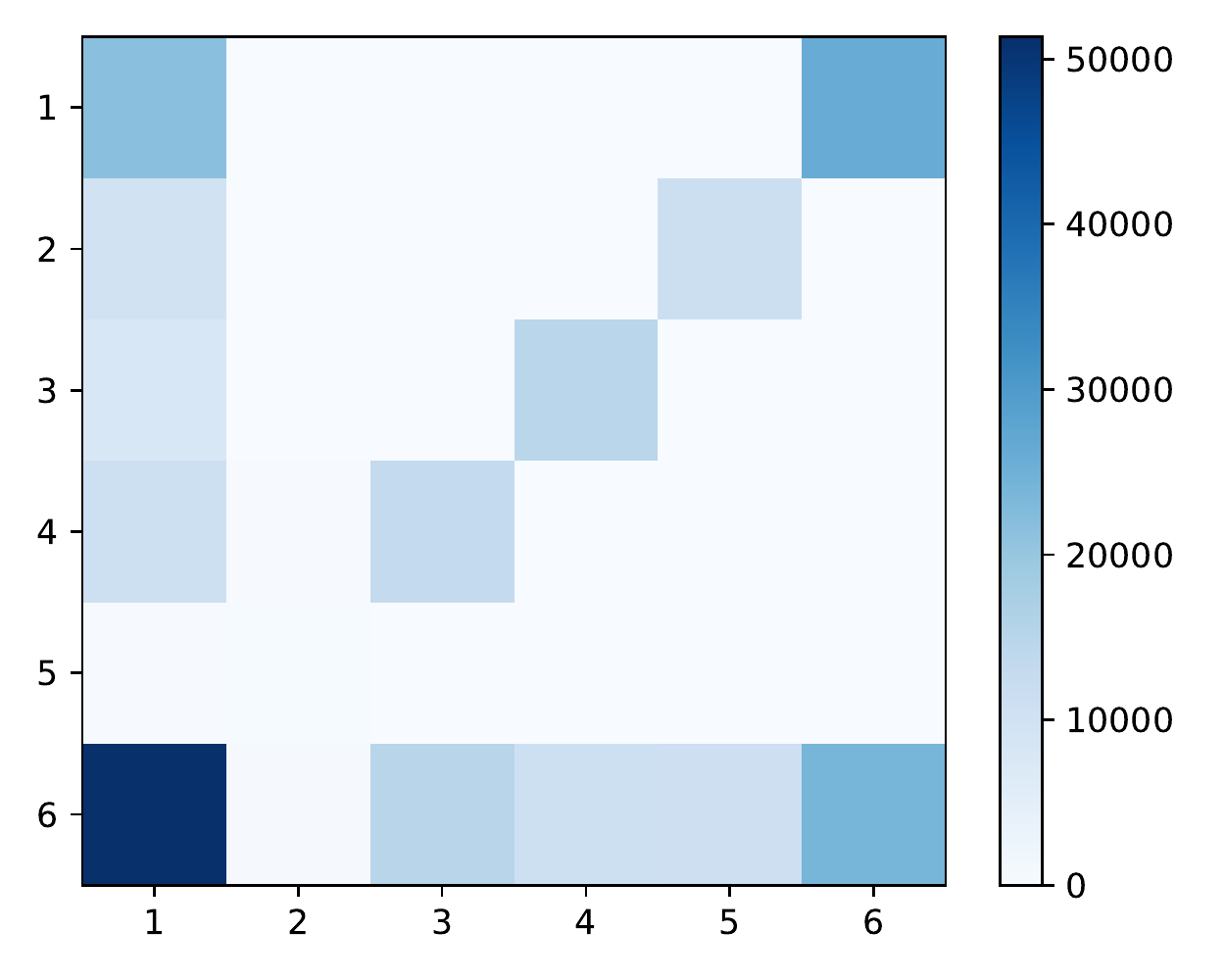}
        \caption{Reality: CHIP simulated}
        \label{fig:CHIPMotifCountsR}
    \end{subfigure}
    \hfill
    \begin{subfigure}[c]{\figwidth}
        \centering
        \includegraphics[width=\textwidth]{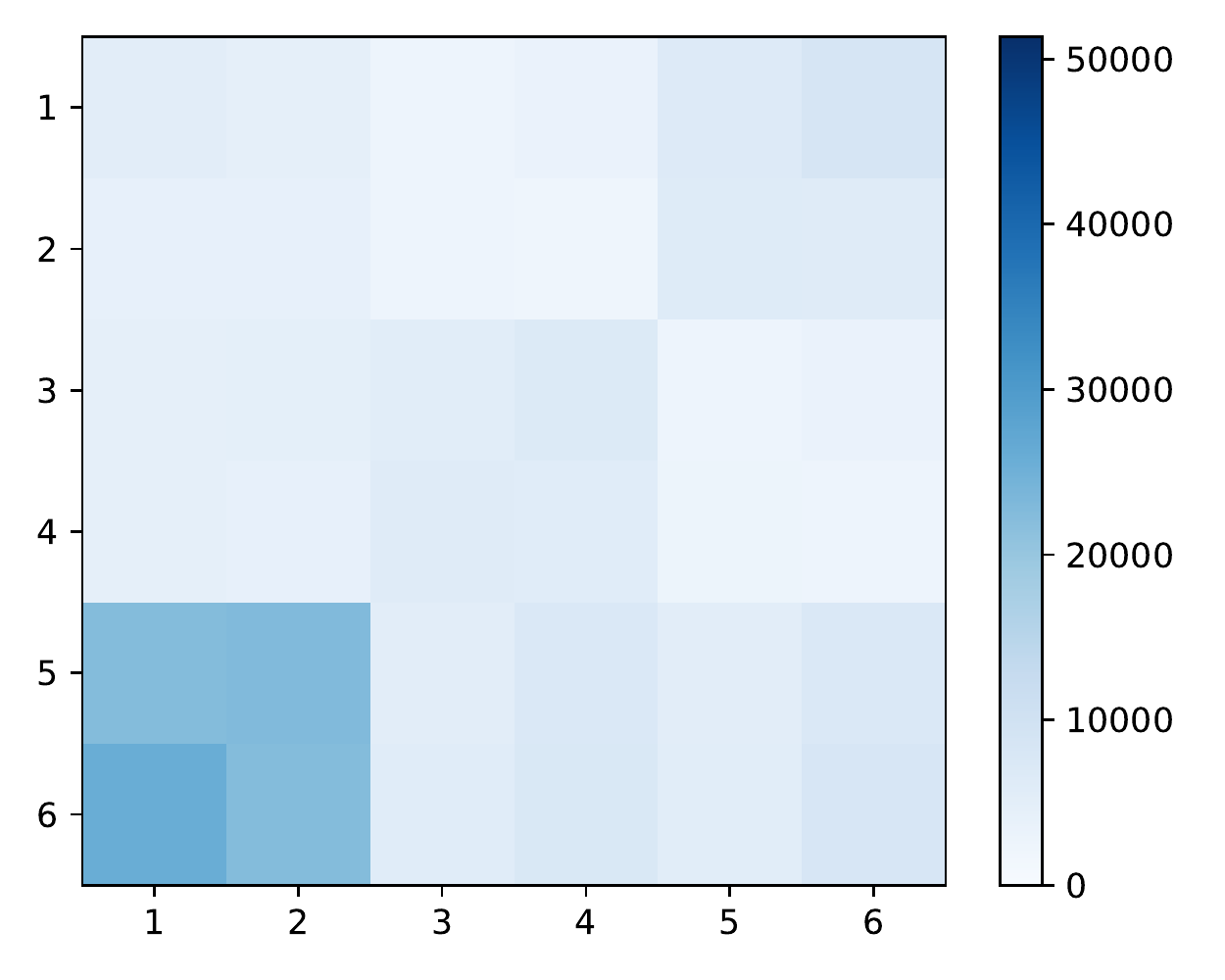}
        \caption{Reality: BHM simulated}
        \label{fig:BHMMotifCountsR}
    \end{subfigure}
    \hfill
    \\[6pt]
    \hfill
    \begin{subfigure}[c]{\figwidth}
        \centering
        \includegraphics[width=\textwidth]{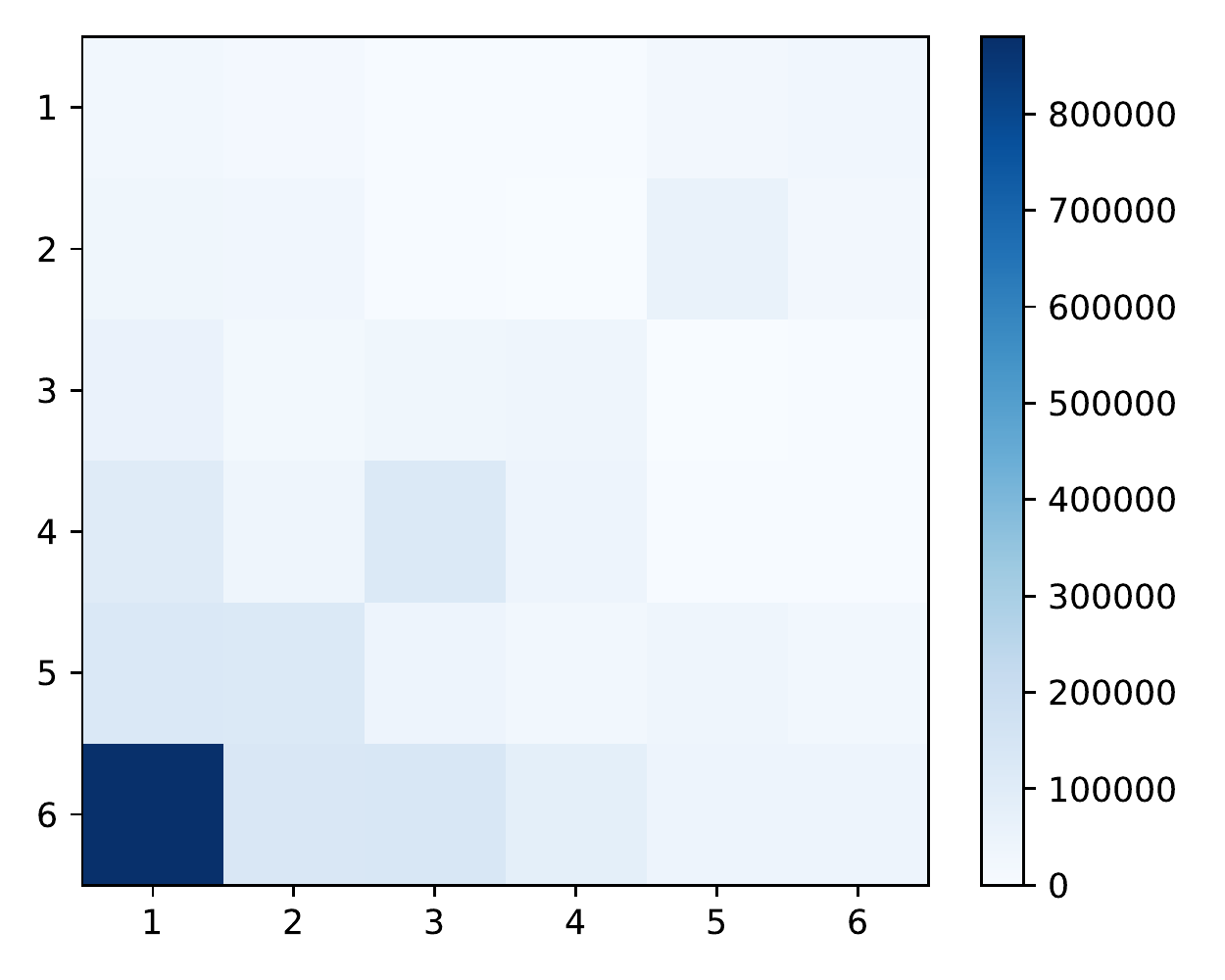}
        \caption{Enron: Actual network}
        \label{fig:MotifCountsE}
    \end{subfigure}
    \hfill
    \begin{subfigure}[c]{\figwidth}
        \centering
        \includegraphics[width=\textwidth]{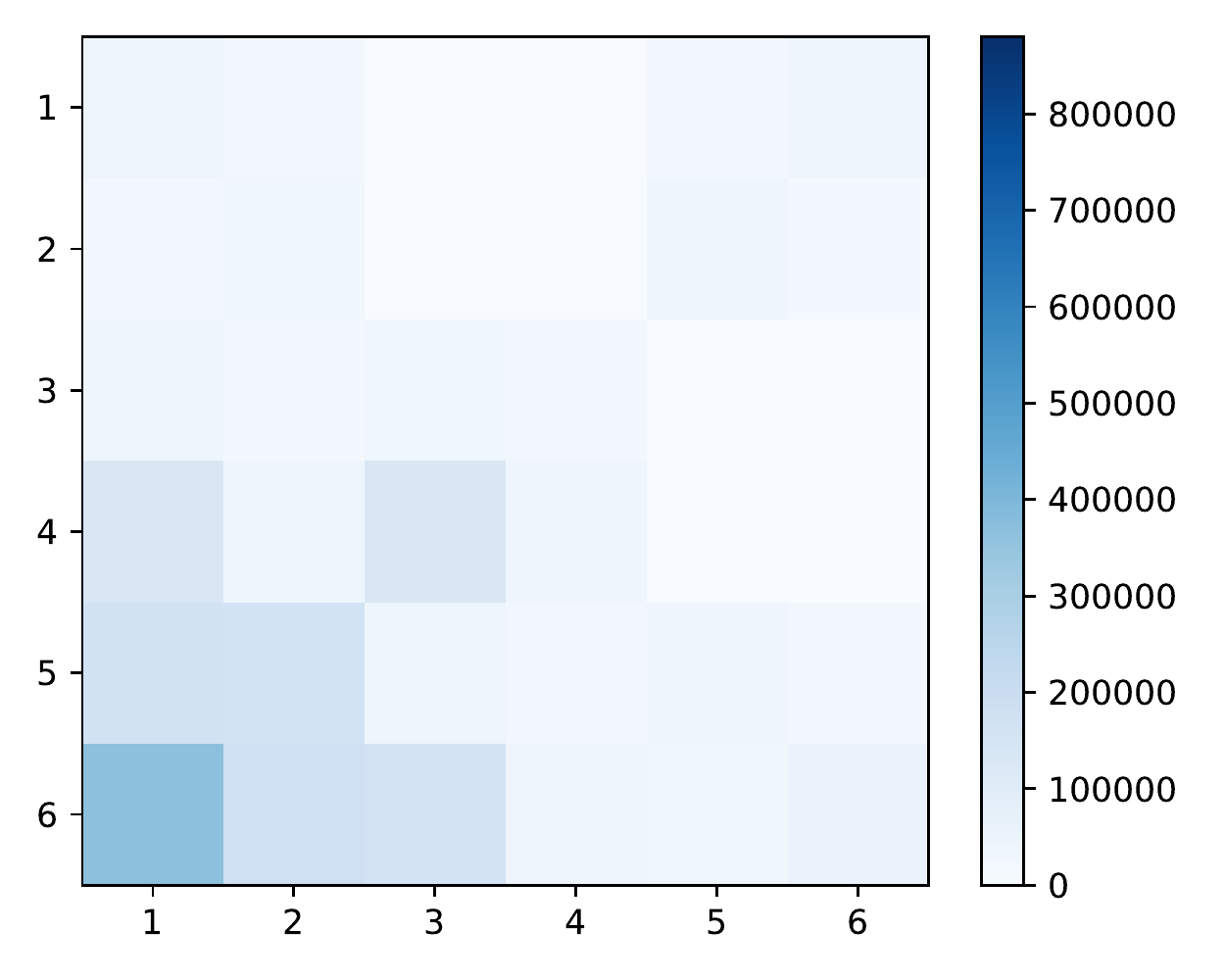}
        \caption{Enron: MULCH simulated}
        \label{fig:ModelMotifCountsE}
    \end{subfigure}
    \hfill
    \begin{subfigure}[c]{\figwidth}
        \centering
        \includegraphics[width=\textwidth]{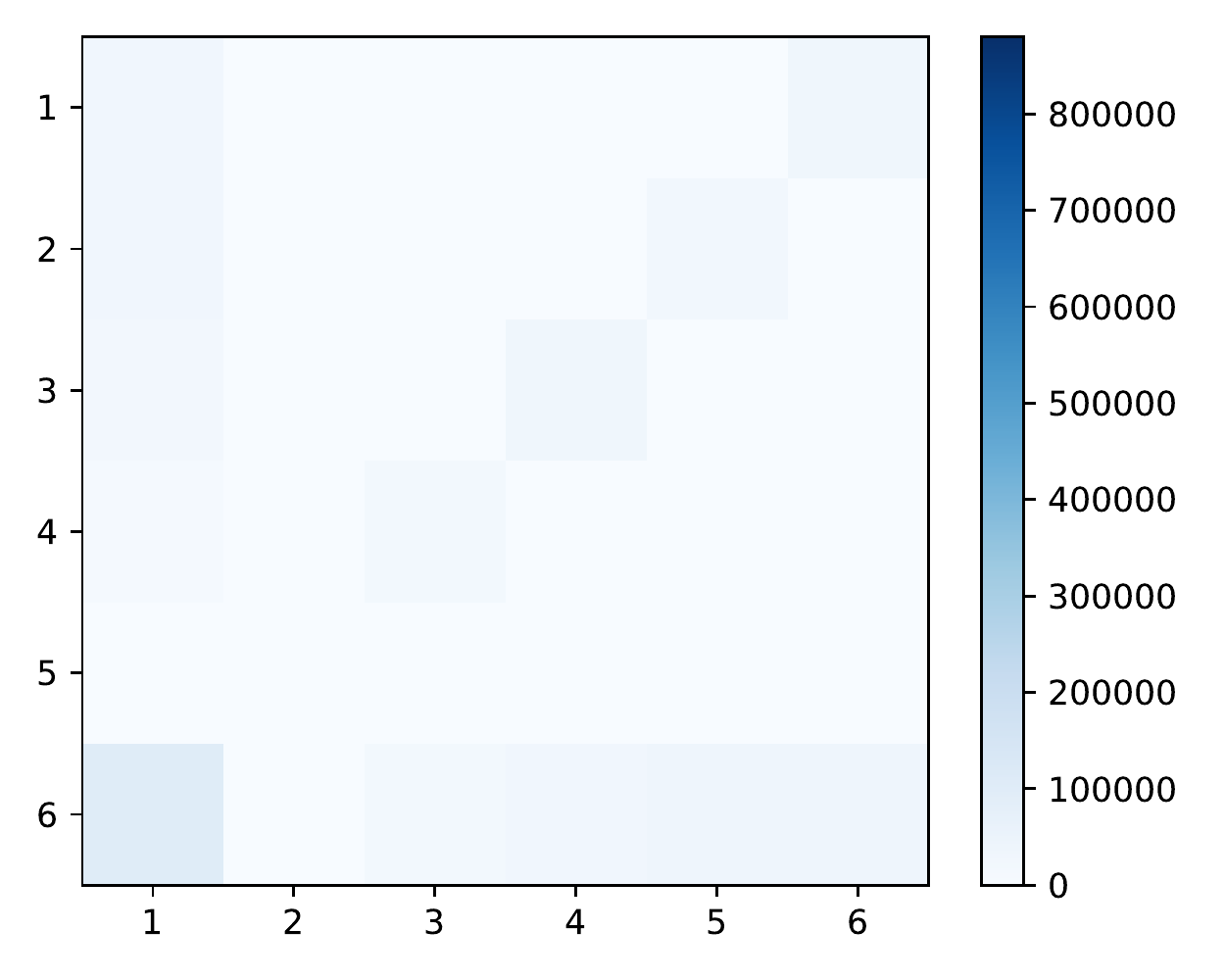}
        \caption{Enron: CHIP simulated}
        \label{fig:CHIPMotifCountsE}
    \end{subfigure}
    \hfill
    \begin{subfigure}[c]{\figwidth}
        \centering
        \parbox{\textwidth}{}
    \end{subfigure}
    \hfill
    \caption{Average temporal motif counts for time window $\delta=1$ week on $10$ simulated networks from MULCH, CHIP, and BHM models fitted on the \subref{fig:MotifCountsR}-\subref{fig:BHMMotifCountsR} Reality and \subref{fig:MotifCountsE}-\subref{fig:CHIPMotifCountsE} Enron datasets. 
    The simulations are unstable for the BHM model fit to Enron.
    Our proposed MULCH model best replicates the temporal motif counts from the actual networks.}
    \label{fig:AllMotifCounts}
\end{figure*}

\paragraph{Experiment Set-up}
We evaluate the generative accuracy of our proposed MULCH model by simulating networks from the fitted model and comparing the counts of temporal motifs in the simulated networks to those in the actual network. 
We consider all 36 possible temporal motifs with 2 or 3 nodes and 3 edges arranged in the same $6 \times 6$ matrix as defined by \citet{paranjape2017motifs}. 
The matrix of different motifs is shown in Figure \ref{fig:Temporal_motifs}. 
One would expect a good generative model to replicate the number of temporal motifs observed in the actual network. 
We consider temporal motifs over 1 week for the Reality and Enron datasets and 1 month for the MID data\footnote{We do not evaluate generative accuracy on the Facebook data due to its size. 
We also do not include DLS in this comparison because it resulted in unstable models that cannot be used to generate new networks, also been noted by \citet{huang2022mutually}.}.

\begin{table}[t]
	\centering
    \caption{Mean absolute percentage error (MAPE) on temporal motif counts for each real network dataset across all models. Smaller values indicate better generative ability. 
    Bold entry denotes best fit for a dataset. 
    The BHM fit to the Enron data results in an unstable Hawkes process that cannot simulate networks.}
    \label{tab:mapeGenerative}
    \begin{tabular}{cccc}
    \toprule
    Model & Reality      & Enron        & MID \\
    \midrule
    MULCH & $\bm{16.5}$  & $\bm{32.0}$  & $92.3$ \\
    CHIP  & $79.3$       & $74.5$       & $\bm{91.0}$\\
    BHM   & $51.1$       &              & $97.6$ \\
    \bottomrule
    \end{tabular}

\end{table}

To provide a quantitative assessment of generative ability, we compute the mean absolute percentage error (MAPE) on the temporal motif count matrix. 
The MAPE is defined by 
\begin{equation*}
\text{MAPE} = \frac{100}{36} \sum_{i=1}^6 \sum_{j=1}^6 \left| \frac{M_{i,j}^A - M_{i,j}^S}{M_{i,j}^A} \right|,
\end{equation*}
where $M_{i,j}^A$ denotes the number of occurrences of motif $M_{i,j}$ in the actual network, and $M_{i,j}^S$ denotes the mean number of occurrences of the motif over $10$ simulated networks.

\paragraph{Results and Discussion}
The MAPE values for the different models and datasets are shown in Table \ref{tab:mapeGenerative}. 
MULCH is by far the best at replicating temporal motif counts on Reality and Enron, while all of the models struggle on the MID data.
\citet{do2021analyzing} found that the majority of motifs occurred during several major international conflicts in the years 1999 and 2000, which are not accurately replicated by any of the models. 
More sophisticated models that incorporate change points may be required to capture these dynamics.

\begin{figure*}[t]
    \newcommand{\figwidth}{0.245\textwidth}
    \centering
    \hfill
    \begin{subfigure}[c]{\figwidth}
        \centering
        \includegraphics[width=\textwidth]{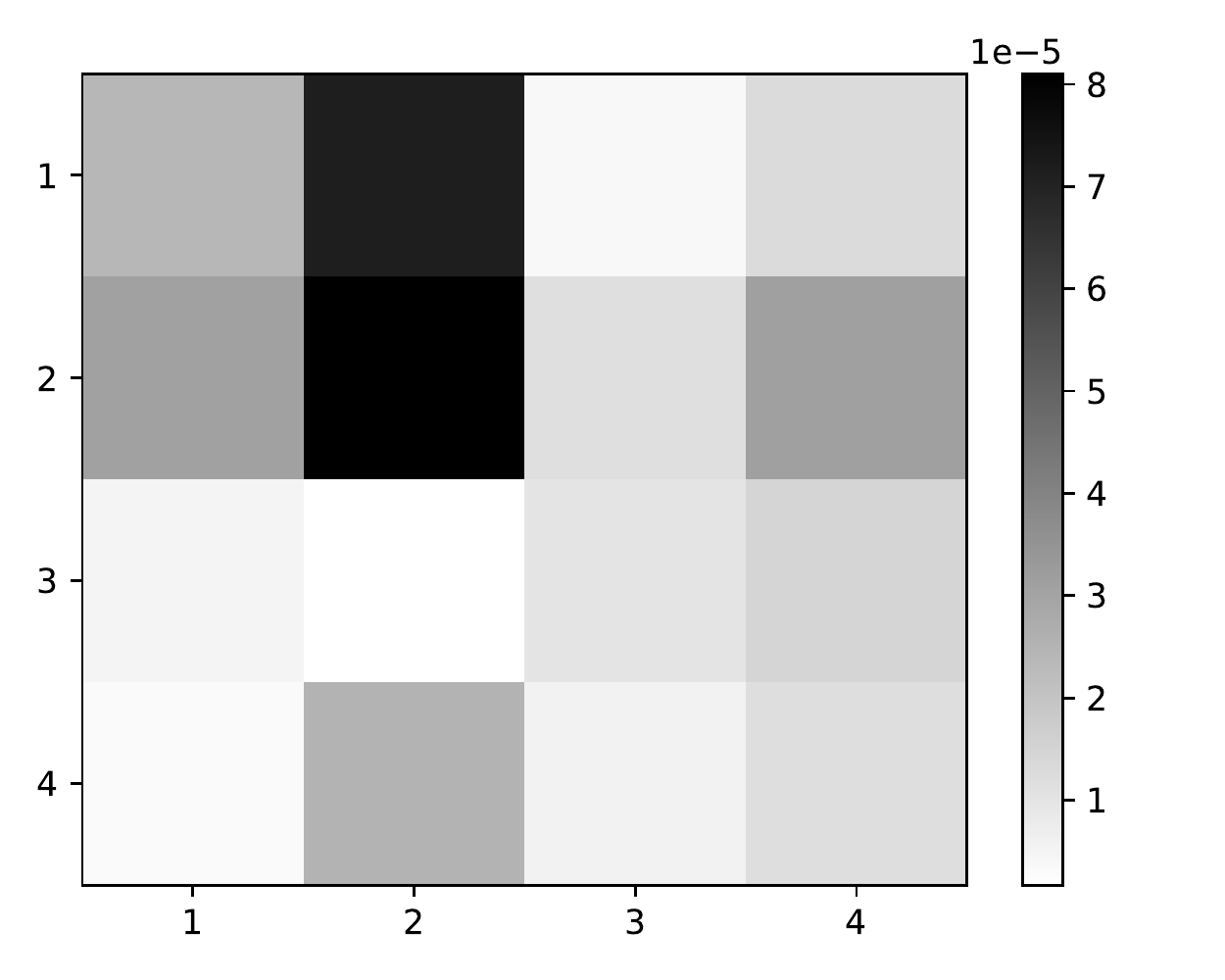}
        \caption{Base intensity $\mu_{ab}$}
        \label{fig:MID_mu_ht}
    \end{subfigure}
    \hfill
    \begin{subfigure}[c]{\figwidth}
        \centering
        \includegraphics[width=\textwidth]{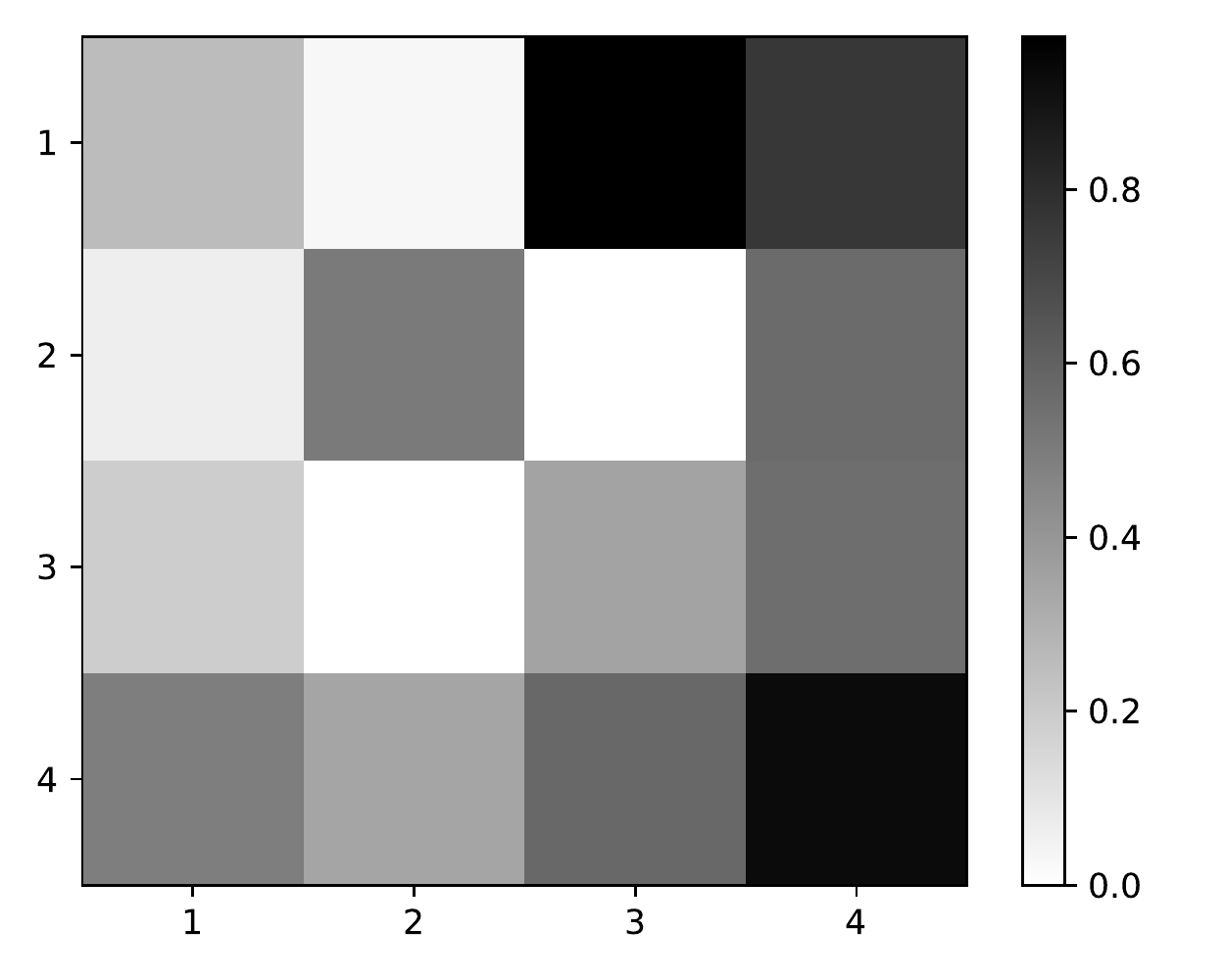}
        \caption{Self excitation $\alpha_{ab}^{xy \rightarrow xy}$}
        \label{fig:MID_s_ht}
    \end{subfigure}
    \hfill
    \begin{subfigure}[c]{\figwidth}
        \centering
        \includegraphics[width=\textwidth]{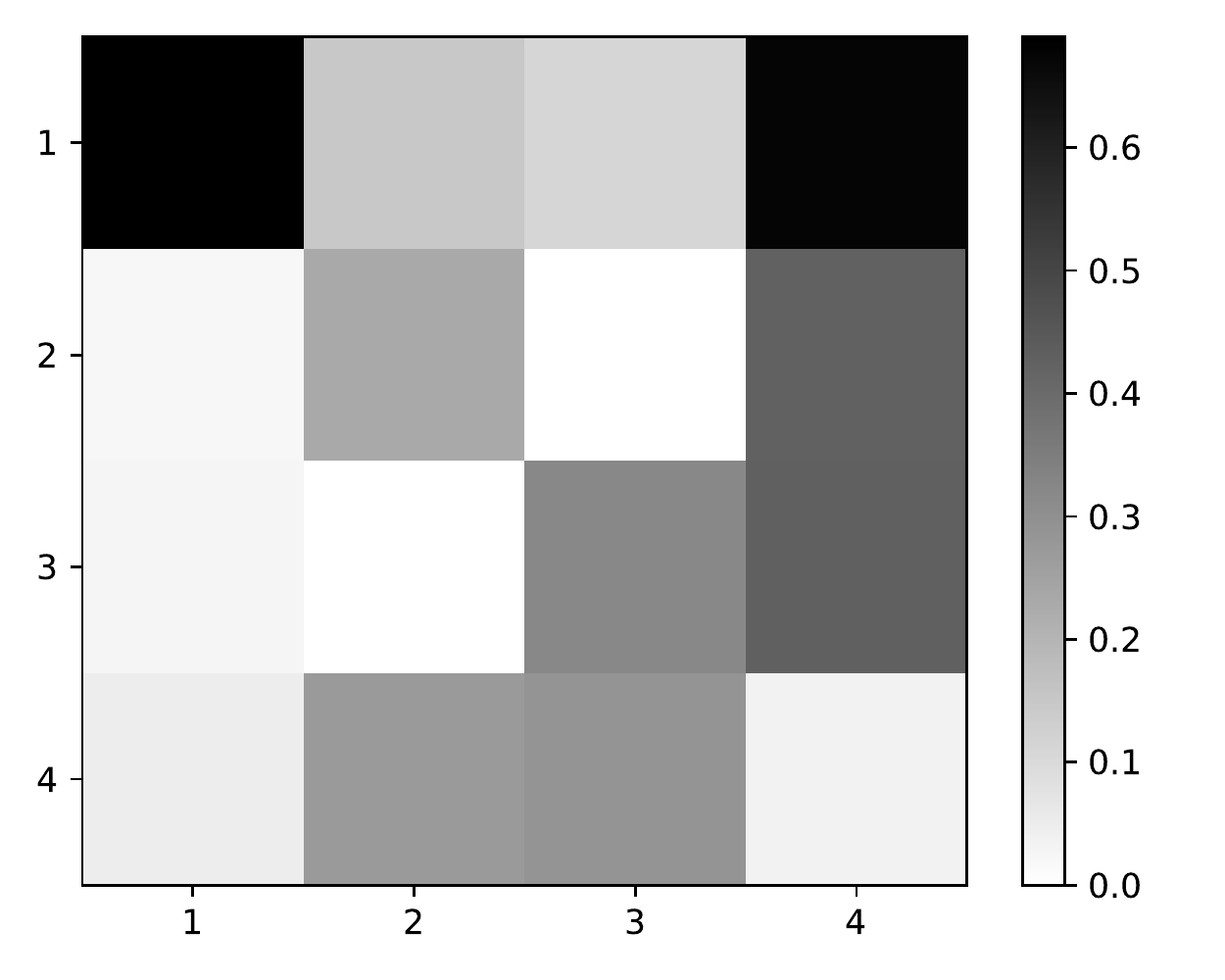}
        \caption{Reciprocal excitation $\alpha_{ab}^{xy \rightarrow yx}$}
        \label{fig:MID_r_ht}
    \end{subfigure}
    \hfill
    \begin{subfigure}[c]{\figwidth}
        \centering
        \includegraphics[width=\textwidth]{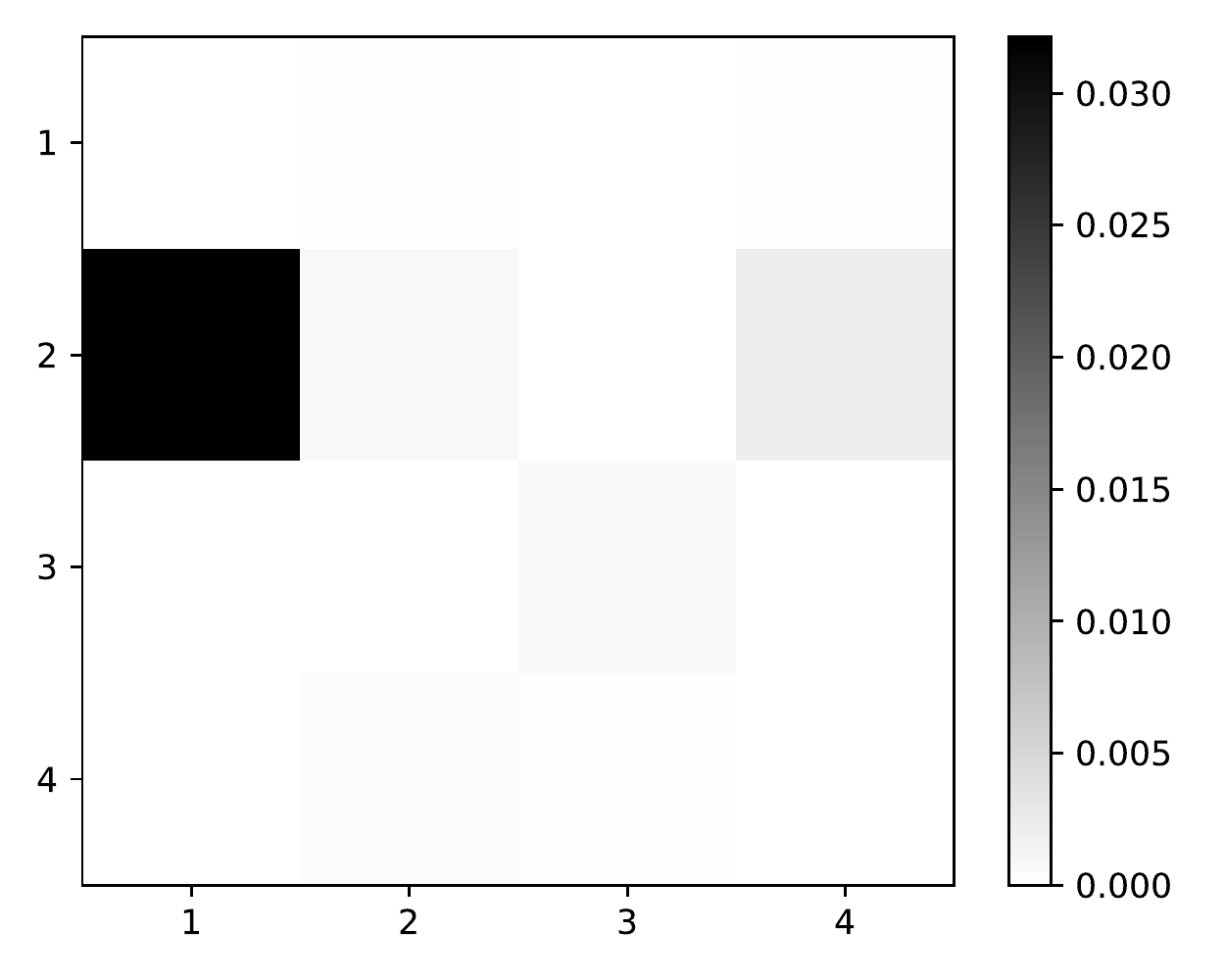}
        \caption{Turn continuation $\alpha_{ab}^{xy \rightarrow xb}$}
        \label{fig:MID_br_ht}
    \end{subfigure}
    \hfill
    \begin{subfigure}[c]{\figwidth}
        \centering
        \includegraphics[width=\textwidth]{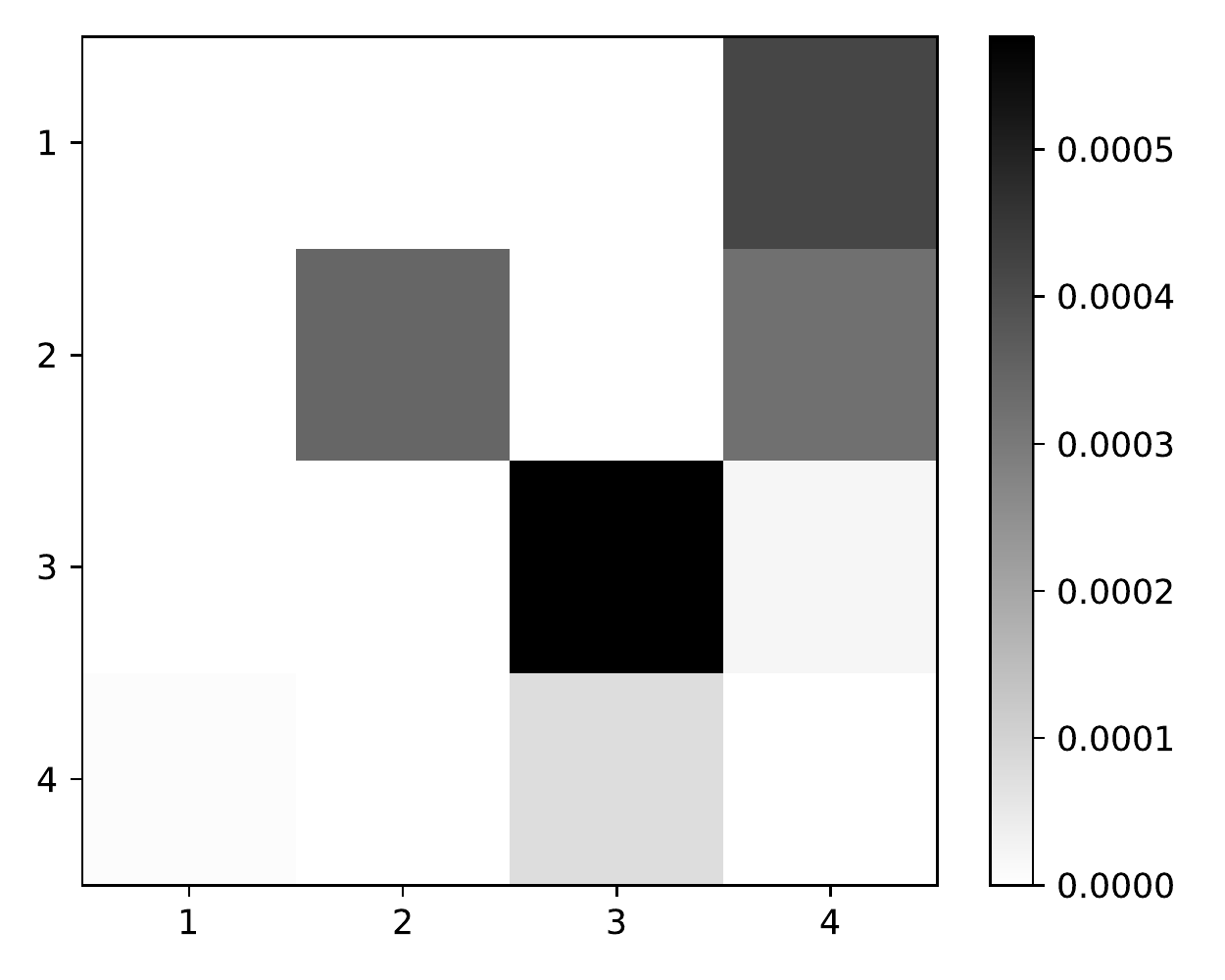}
        \caption{Generalized recip.~$\alpha_{ab}^{xy \rightarrow ya}$}
        \label{fig:MID_gr_ht}
    \end{subfigure}
    \hfill
    \begin{subfigure}[c]{\figwidth}
        \centering
        \includegraphics[width=\textwidth]{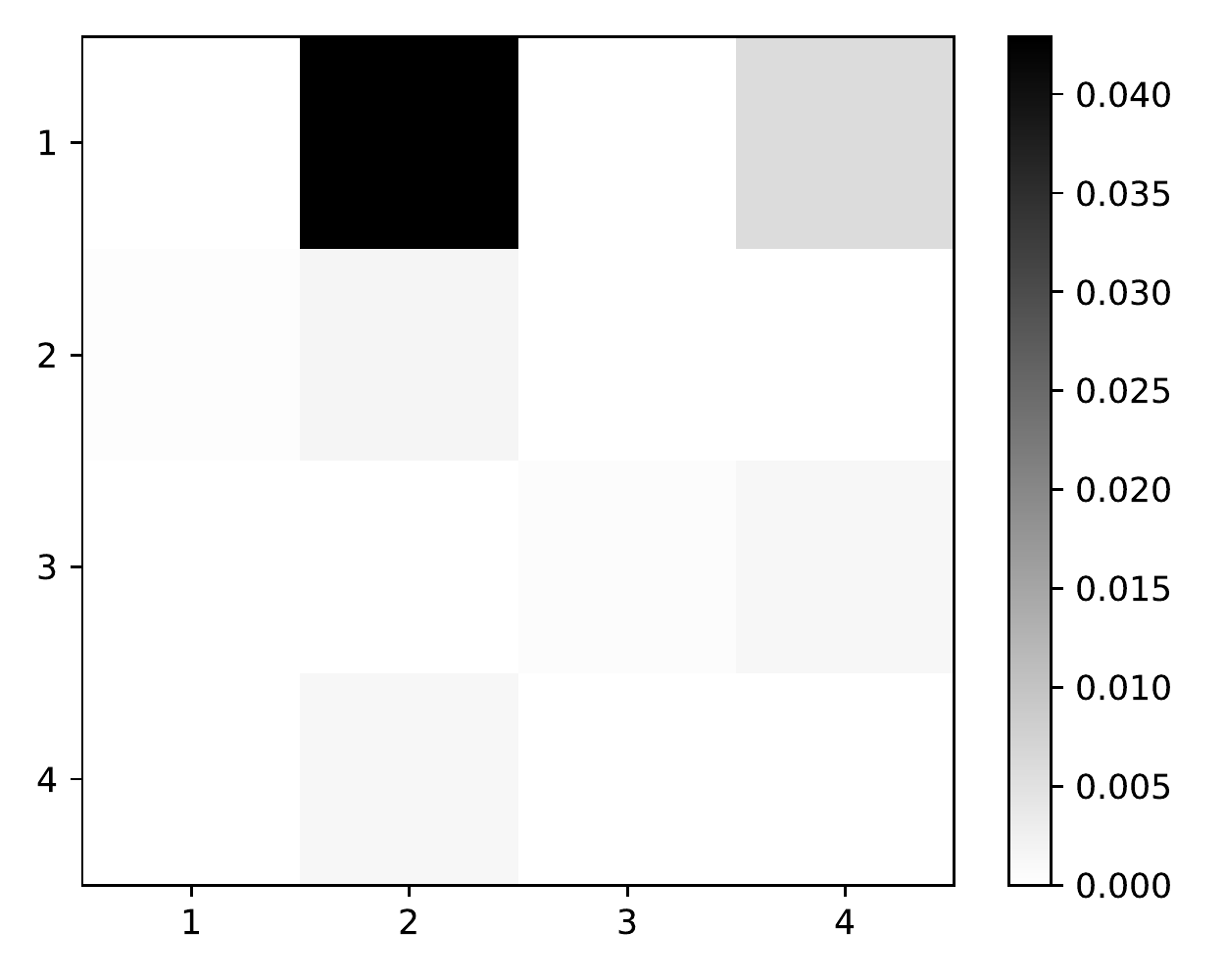}
        \caption{Allied continuation $\alpha_{ab}^{xy \rightarrow ay}$}
        \label{fig:MID_al_ht}
    \end{subfigure}
    \hfill
    \begin{subfigure}[c]{\figwidth}
        \centering
        \includegraphics[width=\textwidth]{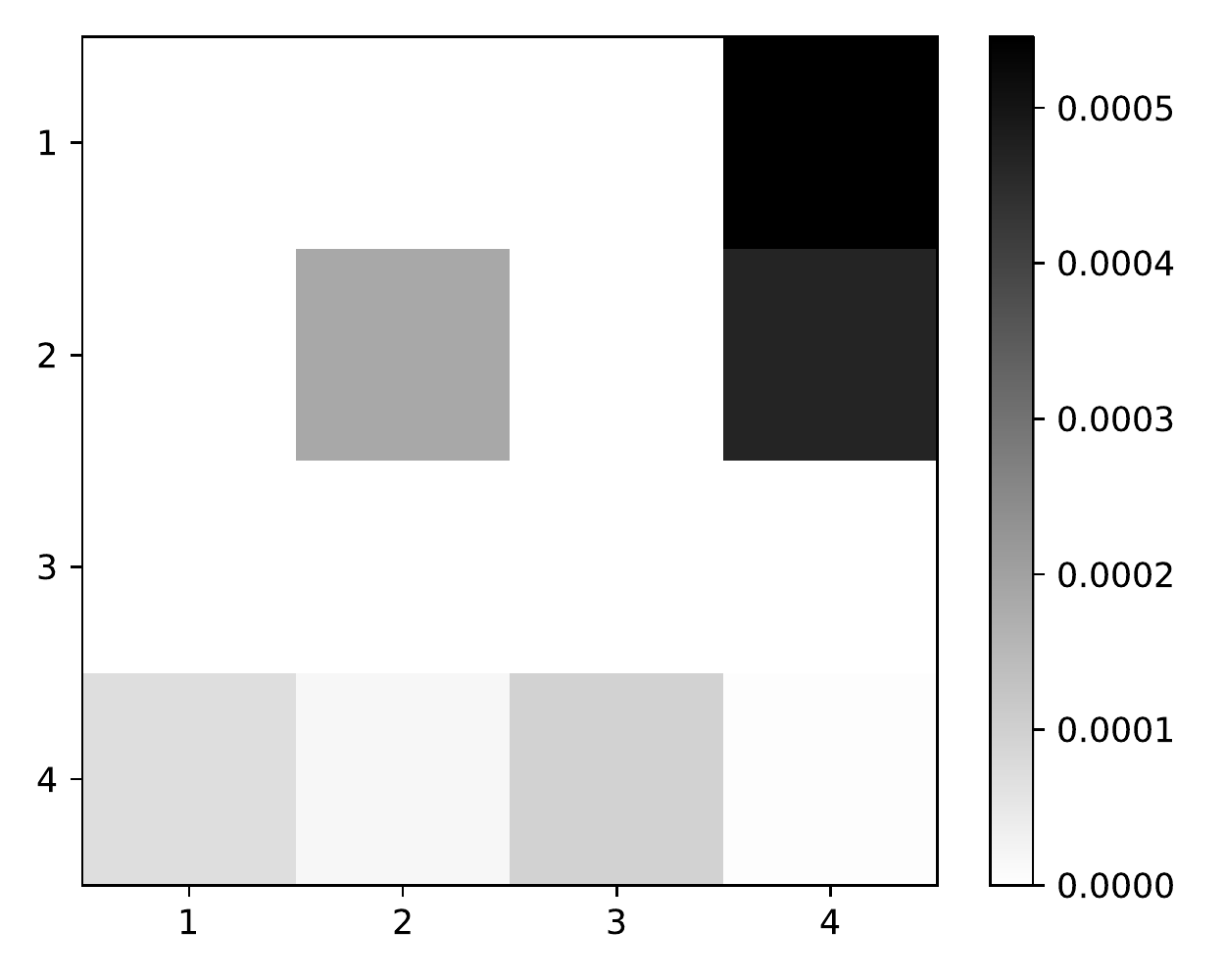}
        \caption{Allied reciprocity $\alpha_{ab}^{xy \rightarrow bx}$}
        \label{fig:MID_alr_ht}
    \end{subfigure}
    \hfill
    \begin{subfigure}[c]{\figwidth}
        \centering
        \includegraphics[width=\textwidth]{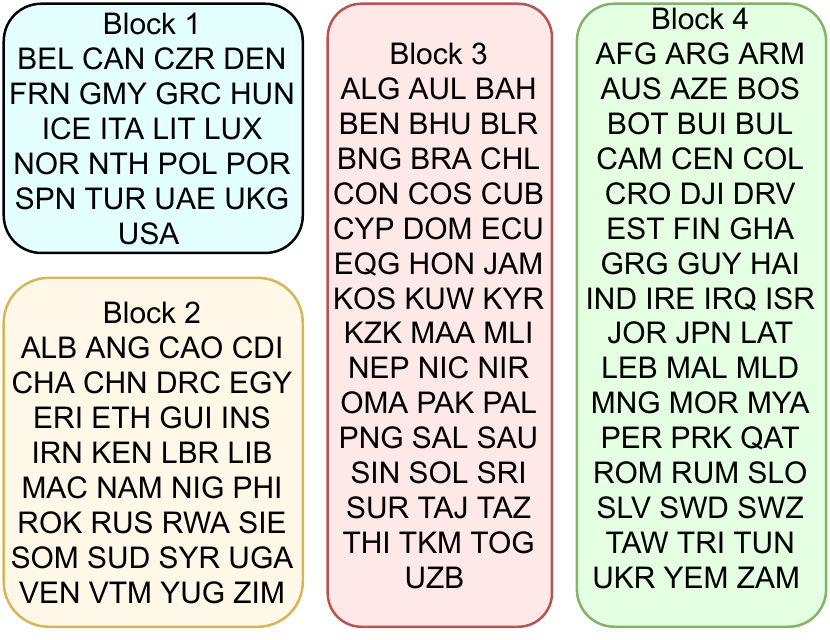}
        \caption{Node memberships}
        \label{fig:MID_blocks}
    \end{subfigure}
    \hfill
    \caption{Estimated values for MULCH parameter estimates and node memberships fit to the Military Interstate Disputes data. 
    Nodes are represented by their 3-letter country codes from \citet{cowcountrycodes}.}
    \label{fig:HT}
\end{figure*}

In Figure \ref{fig:AllMotifCounts}, we show a comparison of the average temporal motif counts on simulated networks from the MULCH, CHIP, and BHM model fits to the Reality and Enron datasets to the actual counts. 
Notice that CHIP, which uses only self excitation, does not generate motifs with reciprocated edges with any appreciable frequency. 
The BHM uses self excitation at the block level and then randomly assigns events to node pairs, which results in the uniform-like distribution of 3-node temporal motifs seen in Figure \ref{fig:BHMMotifCountsR}. 
It generates a wide variety of motifs, but not at frequencies similar to the actual network. 
On the other hand, our proposed MULCH model generates both a variety of different temporal motifs and at relative frequencies similar to the actual network, as shown by the 
similarities of the checkerboard patterns between Figures \ref{fig:MotifCountsR} and \ref{fig:ModelMotifCountsR} as well as Figures \ref{fig:MotifCountsE} and \ref{fig:ModelMotifCountsE} for the Enron data. 
This indicates that the added excitations in MULCH that create dependence between node pairs can indeed replicate higher-order motifs found in real networks.

\section{Case Study}
\label{sec:caseStudy}

We now present a case study on the Militarized Interstate Disputes (MID) dataset. 
A node denotes a (sovereign) state. 
An edge denotes a threat, display, or use of force one state directs towards another.
We apply MULCH to perform model-based exploratory analysis using the different excitations to reveal insights into behaviors of different states.

Our parameter and block estimates are shown in Figure \ref{fig:HT}. 
From examining the magnitudes of the different excitations, one can see that self and reciprocal excitation are the strongest, with highest $\alpha$ values around $0.9$ and $0.7$, respectively. 
Next strongest are allied continuation and turn continuation, with highest $\alpha$ values around $0.04$ and $0.03$, respectively, which is one order of magnitude weaker. 
Generalized reciprocity and allied reciprocity have much lower $\alpha$ values, with the highest around $0.0005$ for both, two orders of magnitude weaker than allied and turn continuation.

Examining the combination of node memberships and parameter estimates also reveals some interesting insights. 
Events in this network correspond to disputes between states, and we find that most events occur between blocks. 
Every member of block 1 is a member state of the North Atlantic Treaty Organization (NATO) except the UAE, which has partnered with NATO in several disputes, including the intervention in Libya in 2011.

We observe that block pair $(1,2)$ has by far the highest excitation for allied continuation, indicating that nodes in block 1 (NATO members) tend to jointly engage states in block 2. 
By consulting the narratives that accompany the MID dataset, indeed we find that two of the most prominent disputes involving NATO correspond to the NATO bombing of Yugoslavia in 1999 and the Libya intervention in 2011. 
Both Yugoslavia (YUG) and Libya (LIB) are in block 2, so the high allied continuation $\alpha_{1,2}^{xy \rightarrow ay}$ is  correctly modeling these incidents.
Note also that block pair $(2,1)$ has the highest excitation for turn continuation. 
This is dominated again by the dispute between NATO and Yugoslavia, with Yugoslavia threatening multiple NATO states in rapid succession.

We present this case study to illustrate the type of analysis that MULCH can be used for. 
Our analysis is exploratory rather than confirmatory, and we caution against jumping to conclusions about the behaviors of states from our results.

\section{Conclusion}
We proposed the multivariate community Hawkes (MULCH) model for continuous-time dynamic networks and demonstrated that it is superior to existing models both in terms of predictive and generative abilities on several real network datasets. 
The main innovation in our model is introducing dependence between node pairs in a tractable manner by using multivariate Hawkes processes with a structured excitation matrix $\alphaMat$ inspired by the SBM. 
In addition to self and reciprocal excitation, 
we also incorporated excitations motivated by sociological concepts of turn continuing and generalized reciprocity, which can replicate higher-order temporal motifs. 
We emphasize that these are not the only types of excitations that can be incorporated into our modeling framework---an investigation of other potential excitations would be a useful avenue for future research.

\section*{Acknowledgements}
This material is based upon work supported by the National Science Foundation grants IIS-1755824, DMS-1830412, IIS-2047955, and DMS-1830547. 


\bibliography{references}

\begin{thebibliography}{56}
\providecommand{\natexlab}[1]{#1}
\providecommand{\url}[1]{\texttt{#1}}
\expandafter\ifx\csname urlstyle\endcsname\relax
  \providecommand{\doi}[1]{doi: #1}\else
  \providecommand{\doi}{doi: \begingroup \urlstyle{rm}\Url}\fi

\bibitem[Amini et~al.(2013)Amini, Chen, Bickel, and Levina]{amini2013pseudo}
Amini, A.~A., Chen, A., Bickel, P.~J., and Levina, E.
\newblock Pseudo-likelihood methods for community detection in large sparse
  networks.
\newblock \emph{The Annals of Statistics}, 41\penalty0 (4):\penalty0
  2097--2122, 2013.

\bibitem[Arastuie et~al.(2020)Arastuie, Paul, and Xu]{arastuie2020chip}
Arastuie, M., Paul, S., and Xu, K.~S.
\newblock {CHIP}: A {Hawkes} process model for continuous-time networks with
  scalable and consistent estimation.
\newblock In \emph{Advances in Neural Information Processing Systems},
  volume~33, 2020.

\bibitem[Bacry et~al.(2015)Bacry, Mastromatteo, and Muzy]{bacry2015hawkes}
Bacry, E., Mastromatteo, I., and Muzy, J.-F.
\newblock Hawkes processes in finance.
\newblock \emph{Market Microstructure and Liquidity}, 1\penalty0 (01):\penalty0
  1550005, 2015.

\bibitem[Bacry et~al.(2017)Bacry, Bompaire, Deegan, Ga{\"\i}ffas, and
  Poulsen]{bacry2017tick}
Bacry, E., Bompaire, M., Deegan, P., Ga{\"\i}ffas, S., and Poulsen, S.
\newblock tick: a {Python} library for statistical learning, with an emphasis
  on {Hawkes} processes and time-dependent models.
\newblock \emph{Journal of Machine Learning Research}, 18\penalty0
  (1):\penalty0 7937--7941, 2017.

\bibitem[Benson et~al.(2016)Benson, Gleich, and Leskovec]{benson2016higher}
Benson, A.~R., Gleich, D.~F., and Leskovec, J.
\newblock Higher-order organization of complex networks.
\newblock \emph{Science}, 353\penalty0 (6295):\penalty0 163--166, 2016.

\bibitem[Blundell et~al.(2012)Blundell, Beck, and Heller]{Blundell2012}
Blundell, C., Beck, J., and Heller, K.~A.
\newblock Modelling reciprocating relationships with {Hawkes} processes.
\newblock In \emph{Advances in Neural Information Processing Systems 25}, pp.\
  2600--2608, 2012.

\bibitem[Bollob{\'a}s et~al.(2011)Bollob{\'a}s, Janson, and
  Riordan]{bollobas2011sparse}
Bollob{\'a}s, B., Janson, S., and Riordan, O.
\newblock Sparse random graphs with clustering.
\newblock \emph{Random Structures \& Algorithms}, 38\penalty0 (3):\penalty0
  269--323, 2011.

\bibitem[Byrd et~al.(1995)Byrd, Lu, Nocedal, and Zhu]{byrd1995limited}
Byrd, R.~H., Lu, P., Nocedal, J., and Zhu, C.
\newblock A limited memory algorithm for bound constrained optimization.
\newblock \emph{SIAM Journal on Scientific Computing}, 16\penalty0
  (5):\penalty0 1190--1208, 1995.

\bibitem[Chaudhuri et~al.(2012)Chaudhuri, Chung, and
  Tsiatas]{chaudhuri2012spectral}
Chaudhuri, K., Chung, F., and Tsiatas, A.
\newblock Spectral clustering of graphs with general degrees in the extended
  planted partition model.
\newblock In \emph{Proceedings of the 25th Annual Conference on Learning
  Theory}, pp.\  35.1--35.23, 2012.

\bibitem[Chen et~al.(2020)Chen, Liu, and Ma]{chen2020global}
Chen, S., Liu, S., and Ma, Z.
\newblock Global and individualized community detection in inhomogeneous
  multilayer networks.
\newblock \emph{arXiv preprint arXiv:2012.00933}, 2020.

\bibitem[Chin et~al.(2015)Chin, Rao, and Vu]{chin2015stochastic}
Chin, P., Rao, A., and Vu, V.
\newblock Stochastic block model and community detection in sparse graphs: A
  spectral algorithm with optimal rate of recovery.
\newblock In \emph{Proceedings of the 28th Conference on Learning Theory}, pp.\
   391--423, 2015.

\bibitem[Corneli et~al.(2018)Corneli, Latouche, and Rossi]{corneli2018multiple}
Corneli, M., Latouche, P., and Rossi, F.
\newblock Multiple change points detection and clustering in dynamic networks.
\newblock \emph{Statistics and Computing}, 28\penalty0 (5):\penalty0 989--1007,
  2018.

\bibitem[Do \& Xu(2021)Do and Xu]{do2021analyzing}
Do, H.~N. and Xu, K.~S.
\newblock Analyzing escalations in militarized interstate disputes using motifs
  in temporal networks.
\newblock In \emph{Proceedings of the 10th International Conference on Complex
  Networks and Their Applications}, pp.\  527--538, 2021.

\bibitem[DuBois et~al.(2013)DuBois, Butts, and Smyth]{Dubois2013}
DuBois, C., Butts, C.~T., and Smyth, P.
\newblock {Stochastic blockmodeling of relational event dynamics}.
\newblock In \emph{Proceedings of the 16th International Conference on
  Artificial Intelligence and Statistics}, pp.\  238--246, 2013.

\bibitem[Eagle et~al.(2009)Eagle, Pentland, and Lazer]{eagle2009inferring}
Eagle, N., Pentland, A.~S., and Lazer, D.
\newblock Inferring friendship network structure by using mobile phone data.
\newblock \emph{Proceedings of the National Academy of Sciences}, 106\penalty0
  (36):\penalty0 15274--15278, 2009.

\bibitem[Farajtabar et~al.(2015)Farajtabar, Wang, Rodriguez, Li, Zha, and
  Song]{Farajtabar2015}
Farajtabar, M., Wang, Y., Rodriguez, M.~G., Li, S., Zha, H., and Song, L.
\newblock {COEVOLVE: A joint point process model for information diffusion and
  network co-evolution}.
\newblock In \emph{Advances in Neural Information Processing Systems 28}, pp.\
  1945--1953, 2015.

\bibitem[Gao et~al.(2017)Gao, Ma, Zhang, and Zhou]{gao2017achieving}
Gao, C., Ma, Z., Zhang, A.~Y., and Zhou, H.~H.
\newblock Achieving optimal misclassification proportion in stochastic block
  models.
\newblock \emph{The Journal of Machine Learning Research}, 18\penalty0
  (1):\penalty0 1980--2024, 2017.

\bibitem[Gao et~al.(2018)Gao, Ma, Zhang, and Zhou]{gao2018community}
Gao, C., Ma, Z., Zhang, A.~Y., and Zhou, H.~H.
\newblock Community detection in degree-corrected block models.
\newblock \emph{The Annals of Statistics}, 46\penalty0 (5):\penalty0
  2153--2185, 2018.

\bibitem[Gibson(2003)]{gibson2003participation}
Gibson, D.~R.
\newblock Participation shifts: Order and differentiation in group
  conversation.
\newblock \emph{Social Forces}, 81\penalty0 (4):\penalty0 1335--1380, 2003.

\bibitem[Gibson(2005)]{gibson2005taking}
Gibson, D.~R.
\newblock Taking turns and talking ties: Networks and conversational
  interaction.
\newblock \emph{American Journal of Sociology}, 110\penalty0 (6):\penalty0
  1561--1597, 2005.

\bibitem[Goldenberg et~al.(2010)Goldenberg, Zheng, Fienberg, and
  Airoldi]{goldenberg09}
Goldenberg, A., Zheng, A.~X., Fienberg, S.~E., and Airoldi, E.~M.
\newblock {A survey of statistical network models}.
\newblock \emph{Foundations and Trends in Machine Learning}, 2\penalty0
  (2):\penalty0 129--233, 2010.

\bibitem[Han et~al.(2015)Han, Xu, and Airoldi]{han2015consistent}
Han, Q., Xu, K.~S., and Airoldi, E.
\newblock Consistent estimation of dynamic and multi-layer block models.
\newblock In \emph{Proceedings of the 32nd International Conference on Machine
  Learning}, pp.\  1511--1520, 2015.

\bibitem[Hawkes(1971)]{hawkes1971spectra}
Hawkes, A.~G.
\newblock Spectra of some self-exciting and mutually exciting point processes.
\newblock \emph{Biometrika}, 58\penalty0 (1):\penalty0 83--90, 1971.

\bibitem[Hawkes(2018)]{hawkes2018hawkes}
Hawkes, A.~G.
\newblock Hawkes processes and their applications to finance: a review.
\newblock \emph{Quantitative Finance}, 18\penalty0 (2):\penalty0 193--198,
  2018.

\bibitem[Hoff et~al.(2002)Hoff, Raftery, and Handcock]{hrh02}
Hoff, P.~D., Raftery, A.~E., and Handcock, M.~S.
\newblock Latent space approaches to social network analysis.
\newblock \emph{Journal of the American Statistical Association}, 97\penalty0
  (460):\penalty0 1090--1098, 2002.

\bibitem[Holland et~al.(1983)Holland, Laskey, and
  Leinhardt]{holland1983stochastic}
Holland, P.~W., Laskey, K.~B., and Leinhardt, S.
\newblock Stochastic blockmodels: First steps.
\newblock \emph{Social Networks}, 5\penalty0 (2):\penalty0 109--137, 1983.

\bibitem[Huang et~al.(2022)Huang, Soliman, Paul, and Xu]{huang2022mutually}
Huang, Z., Soliman, H., Paul, S., and Xu, K.~S.
\newblock A mutually exciting latent space {Hawkes} process model for
  continuous-time networks.
\newblock \emph{arXiv preprint arXiv:2205.09263}, 2022.

\bibitem[Hubert \& Arabie(1985)Hubert and Arabie]{hubert1985comparing}
Hubert, L. and Arabie, P.
\newblock Comparing partitions.
\newblock \emph{Journal of Classification}, 2\penalty0 (1):\penalty0 193--218,
  1985.

\bibitem[Junuthula et~al.(2019)Junuthula, Haghdan, Xu, and
  Devabhaktuni]{junuthula2019block}
Junuthula, R., Haghdan, M., Xu, K.~S., and Devabhaktuni, V.
\newblock The block point process model for continuous-time event-based dynamic
  networks.
\newblock In \emph{Proceedings of the World Wide Web Conference}, pp.\
  829--839, 2019.

\bibitem[Klimt \& Yang(2004)Klimt and Yang]{klimt2004enron}
Klimt, B. and Yang, Y.
\newblock The {Enron} corpus: A new dataset for email classification research.
\newblock In \emph{Proceedings of the 15th European Conference on Machine
  Learning}, pp.\  217--226, 2004.

\bibitem[Laub et~al.(2021)Laub, Lee, and Taimre]{laub2021elements}
Laub, P.~J., Lee, Y., and Taimre, T.
\newblock \emph{The Elements of Hawkes Processes}.
\newblock Springer Nature, 2021.

\bibitem[Lei \& Rinaldo(2015)Lei and Rinaldo]{lei2015consistency}
Lei, J. and Rinaldo, A.
\newblock Consistency of spectral clustering in stochastic block models.
\newblock \emph{The Annals of Statistics}, 43\penalty0 (1):\penalty0 215--237,
  2015.

\bibitem[Linderman \& Adams(2014)Linderman and Adams]{Linderman2014}
Linderman, S.~W. and Adams, R.~P.
\newblock {Discovering latent network structure in point process data}.
\newblock In \emph{Proceedings of the 31st International Conference on Machine
  Learning}, pp.\  1413--1421, 2014.

\bibitem[Masuda et~al.(2013)Masuda, Takaguchi, Sato, and Yano]{masuda2013self}
Masuda, N., Takaguchi, T., Sato, N., and Yano, K.
\newblock Self-exciting point process modeling of conversation event sequences.
\newblock In \emph{Temporal Networks}, pp.\  245--264. Springer, 2013.

\bibitem[Matias et~al.(2018)Matias, Rebafka, and
  Villers]{matias2015semiparametric}
Matias, C., Rebafka, T., and Villers, F.
\newblock {A semiparametric extension of the stochastic block model for
  longitudinal networks}.
\newblock \emph{Biometrika}, 105\penalty0 (3):\penalty0 665--680, 2018.

\bibitem[Miscouridou et~al.(2018)Miscouridou, Caron, and
  Teh]{miscouridou2018modelling}
Miscouridou, X., Caron, F., and Teh, Y.~W.
\newblock Modelling sparsity, heterogeneity, reciprocity and community
  structure in temporal interaction data.
\newblock In \emph{Advances in Neural Information Processing Systems},
  volume~31, pp.\  2343--2352, 2018.

\bibitem[Nowicki \& Snijders(2001)Nowicki and Snijders]{nowicki2001estimation}
Nowicki, K. and Snijders, T. A.~B.
\newblock Estimation and prediction for stochastic blockstructures.
\newblock \emph{Journal of the American Statistical Association}, 96\penalty0
  (455):\penalty0 1077--1087, 2001.

\bibitem[Palmer et~al.(2022)Palmer, McManus, D’Orazio, Kenwick, Karstens,
  Bloch, Dietrich, Kahn, Ritter, and Soules]{palmer2021mid5}
Palmer, G., McManus, R.~W., D’Orazio, V., Kenwick, M.~R., Karstens, M.,
  Bloch, C., Dietrich, N., Kahn, K., Ritter, K., and Soules, M.~J.
\newblock The {MID5} dataset, 2011--2014: Procedures, coding rules, and
  description.
\newblock \emph{Conflict Management and Peace Science}, 39\penalty0
  (4):\penalty0 470--482, 2022.

\bibitem[Paranjape et~al.(2017)Paranjape, Benson, and
  Leskovec]{paranjape2017motifs}
Paranjape, A., Benson, A.~R., and Leskovec, J.
\newblock Motifs in temporal networks.
\newblock In \emph{Proceedings of the 10th ACM International Conference on Web
  Search and Data Mining}, pp.\  601--610, 2017.

\bibitem[Paul et~al.(2018)Paul, Milenkovic, and Chen]{paul2018higher}
Paul, S., Milenkovic, O., and Chen, Y.
\newblock Higher-order spectral clustering under superimposed stochastic block
  model.
\newblock \emph{arXiv preprint arXiv:1812.06515}, 2018.

\bibitem[Peixoto(2022)]{peixoto2021disentangling}
Peixoto, T.~P.
\newblock Disentangling homophily, community structure, and triadic closure in
  networks.
\newblock \emph{Physical Review X}, 12:\penalty0 011004, 2022.

\bibitem[Qin \& Rohe(2013)Qin and Rohe]{qin13}
Qin, T. and Rohe, K.
\newblock {Regularized spectral clustering under the degree-corrected
  stochastic blockmodel}.
\newblock In \emph{Advances in Neural Information Processing Systems 26}, pp.\
  3120--3128, 2013.

\bibitem[Rasmussen(2018)]{rasmussen2018lecture}
Rasmussen, J.~G.
\newblock Lecture notes: Temporal point processes and the conditional intensity
  function.
\newblock \emph{arXiv preprint arXiv:1806.00221}, 2018.

\bibitem[Rohe et~al.(2011)Rohe, Chatterjee, and Yu]{rcy11}
Rohe, K., Chatterjee, S., and Yu, B.
\newblock Spectral clustering and the high-dimensional stochastic blockmodel.
\newblock \emph{The Annals of Statistics}, 39\penalty0 (4):\penalty0
  1878--1915, 2011.

\bibitem[Rohe et~al.(2016)Rohe, Qin, and Yu]{rohe2016co}
Rohe, K., Qin, T., and Yu, B.
\newblock Co-clustering directed graphs to discover asymmetries and directional
  communities.
\newblock \emph{Proceedings of the National Academy of Sciences}, 113\penalty0
  (45):\penalty0 12679--12684, 2016.

\bibitem[Santos et~al.(2021)Santos, Lemmerich, and Helic]{santos2021surfacing}
Santos, T., Lemmerich, F., and Helic, D.
\newblock Surfacing estimation uncertainty in the decay parameters of {Hawkes}
  processes with exponential kernels.
\newblock \emph{arXiv preprint arXiv:2104.01029}, 2021.

\bibitem[Sussman et~al.(2012)Sussman, Tang, Fishkind, and Priebe]{Sussman2012}
Sussman, D.~L., Tang, M., Fishkind, D.~E., and Priebe, C.~E.
\newblock {A consistent adjacency spectral embedding for stochastic blockmodel
  graphs}.
\newblock \emph{Journal of the American Statistical Association}, 107\penalty0
  (499):\penalty0 1119--1128, 2012.

\bibitem[{The Correlates of War Project}(2021)]{cowcountrycodes}
{The Correlates of War Project}.
\newblock {COW} country codes, 2021.
\newblock URL \url{https://correlatesofwar.org/data-sets/cow-country-codes}.

\bibitem[Tran et~al.(2015)Tran, Farajtabar, Song, and Zha]{Tran2015}
Tran, L., Farajtabar, M., Song, L., and Zha, H.
\newblock {NetCodec: Community detection from individual activities}.
\newblock In \emph{Proceedings of the SIAM International Conference on Data
  Mining}, pp.\  91--99, 2015.

\bibitem[Viswanath et~al.(2009)Viswanath, Mislove, Cha, and
  Gummadi]{viswanath2009evolution}
Viswanath, B., Mislove, A., Cha, M., and Gummadi, K.~P.
\newblock On the evolution of user interaction in {Facebook}.
\newblock In \emph{Proceedings of the 2nd ACM Workshop on Online Social
  Networks}, pp.\  37--42, 2009.

\bibitem[Xin et~al.(2017)Xin, Zhu, and Chipman]{xin2015continuous}
Xin, L., Zhu, M., and Chipman, H.
\newblock A continuous-time stochastic block model for basketball networks.
\newblock \emph{The Annals of Applied Statistics}, 11\penalty0 (2):\penalty0
  553--597, 2017.

\bibitem[Xu et~al.(2020)Xu, Morse, and Gonz{\'a}lez]{xu2020modeling}
Xu, S., Morse, S., and Gonz{\'a}lez, M.~C.
\newblock Modeling human dynamics and lifestyle using digital traces.
\newblock \emph{arXiv preprint arXiv:2005.06542}, 2020.

\bibitem[Yamagishi \& Kiyonari(2000)Yamagishi and Kiyonari]{yamagishi2000group}
Yamagishi, T. and Kiyonari, T.
\newblock The group as the container of generalized reciprocity.
\newblock \emph{Social Psychology Quarterly}, 63\penalty0 (2):\penalty0
  116--132, 2000.

\bibitem[Yang et~al.(2017)Yang, Rao, and Neville]{yang2017decoupling}
Yang, J., Rao, V., and Neville, J.
\newblock Decoupling homophily and reciprocity with latent space network
  models.
\newblock In \emph{Proceedings of the Conference on Uncertainty in Artificial
  Intelligence}, 2017.

\bibitem[Zhou et~al.(2013{\natexlab{a}})Zhou, Zha, and
  Song]{zhou2013a_learning}
Zhou, K., Zha, H., and Song, L.
\newblock Learning triggering kernels for multi-dimensional {Hawkes} processes.
\newblock In \emph{Proceedings of the 30th International Conference on Machine
  Learning}, pp.\  1301--1309, 2013{\natexlab{a}}.

\bibitem[Zhou et~al.(2013{\natexlab{b}})Zhou, Zha, and Song]{zhou2013learning}
Zhou, K., Zha, H., and Song, L.
\newblock Learning social infectivity in sparse low-rank networks using
  multi-dimensional {Hawkes} processes.
\newblock In \emph{Proceedings of the 16th International Conference on
  Artificial Intelligence and Statistics}, pp.\  641--649, 2013{\natexlab{b}}.

\end{thebibliography}
\bibliographystyle{icml2022}

\newpage
\appendix
\onecolumn

\section{Additional Details on MULCH Model and Estimation Procedure}

\subsection{Sum of Kernels Hawkes Process Intensity}
\label{sec:supp_sum_of_kernels}
By allowing different block pairs to have different scaling parameters $C_{ab}^q$, their intensities can decay at different rates, as in the CHIP \citep{arastuie2020chip} and BHM \citep{junuthula2019block} models where a single $\beta_{ab}$ value was estimated for each block pair. 
As shown in Figure \ref{fig:Sum_kernels}, $\bm{C}_{ab}^q$ controls decay rate, and also, allows every block pair to have distinct kernel shape. The intensity of $(i,j)\in \bp(a,b)$ becomes
\begin{equation}
    \lambda_{ij}(t) = \mu_{ab} + \sum_{\substack{(x,y) \in \bp(a,b) \\ (x,y) \in \bp(b,a)}} \alpha_{ab}^{xy \rightarrow ij} \sum_{t_s \in T_{xy}} \sum_{q=1}^{Q} C_{ab}^{q} \beta_{q} e^{-\beta_{q}\left(t-t_{s}\right)}.
    \label{MBHM_sum_kernels}
\end{equation}

\begin{figure}[t]
\centering
\includegraphics[width=3.5in]{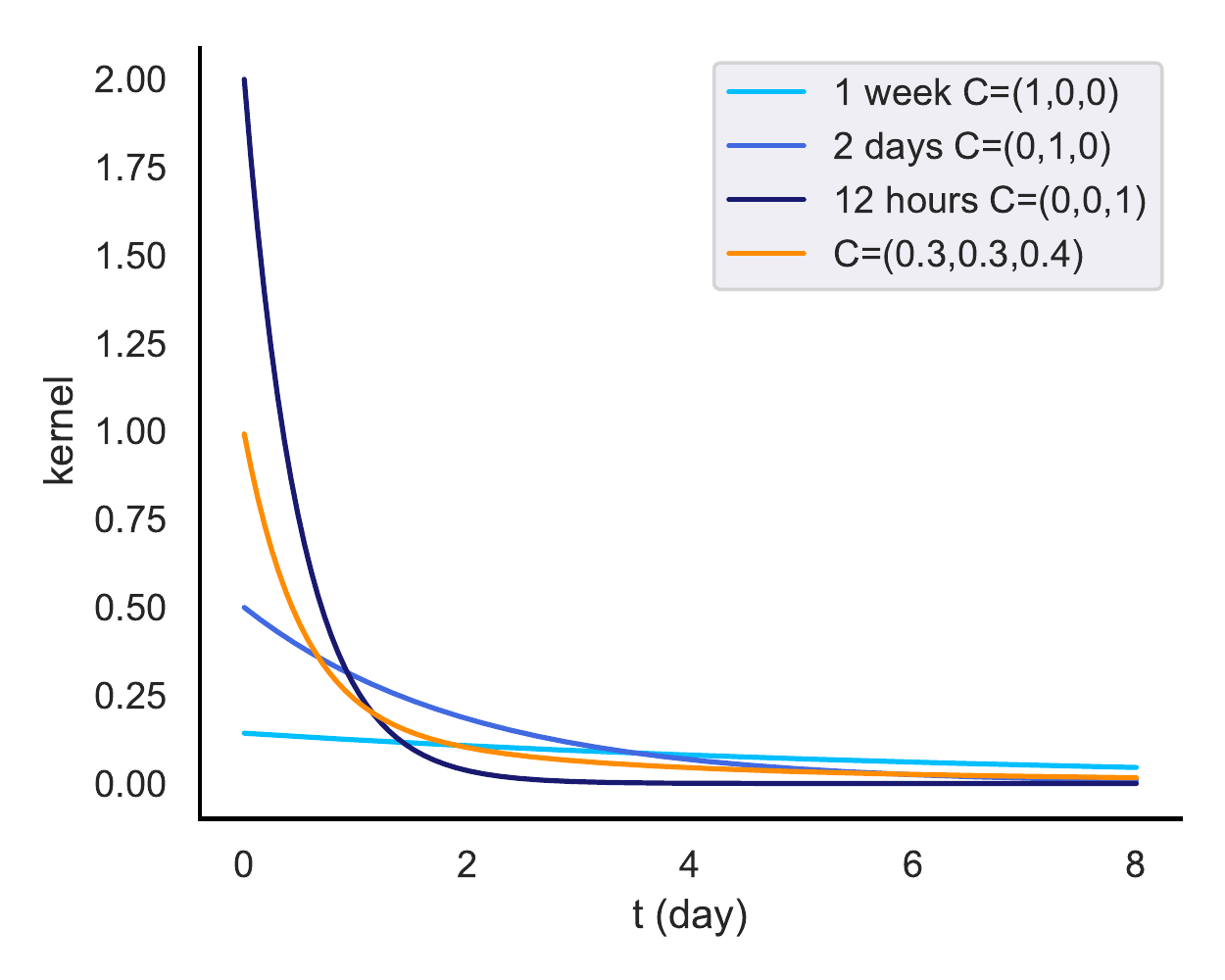}
\caption{Sum of exponential kernels shapes at $\bm{\beta = (}$1 week, 2 days, 12 hours) and different $\bm{C}_{ab}^q$ values. }
\label{fig:Sum_kernels}
\end{figure}

\subsection{Full Log-likelihood Function}
\label{sec:full_log_lik}
Based on our intensity function (\ref{MBHM_sum_kernels}) and the log-likelihood function for exponential Hawkes processes, we can derive our block pair wise log-likelihood function to be
\begin{align*}
    \ell_{ab}(\vec\theta_{ab}|\vec Z,\mathcal{H}_t) &= \sum_{Z_i=a, Z_j=b, i\neq j} \left\{ -\mu_{ab}T 
    - \sum_{\substack{(x,y) \in \bp(a,b) \\ (x,y) \in \bp(b,a)}}\alpha_{ab}^{xy\rightarrow ij}\sum_{t_s \in T_{xy}}\sum_{q=1}^QC_{ab}^q\left[\left(1-e^{-\beta_q(T-t_s)}\right)\right] \right.\\
    &\left. +\sum_{t_s\in T_{ij}}\ln \left[\mu_{ab} + \sum_{\substack{(x,y) \in \bp(a,b) \\ (x,y) \in \bp(b,a)}}\alpha_{ab}^{xy\rightarrow ij}\sum_{q=1}^QC_{ab}^q\beta_qR^q_{xy\rightarrow ij}(t_s)\right]\right\}
\end{align*}
where 
\begin{equation*}
R^q_{xy\rightarrow ij}(t_s) = \sum_{\substack{t_r \in T_{xy}\\t_r <t_s}}e^{-\beta_q(t_s-t_r)}
\end{equation*}
can be computed recursively \cite{arastuie2020chip}.

\subsection{Block Structure in Expected Count Matrix}
\label{sec:supp_block_structure}
As shown in Figure \ref{fig:Block_structure_count}, the expected count matrix $E[\bm{N}(T)]$ follows a block structure. 
This block structure is stated explicitly in Theorem \ref{thm:block_structure}, and the proof is below.
\begin{proof}[Proof of Theorem \ref{thm:block_structure}]
When the process is stationary, we can derive the vectorized version of the expected intensity as $vec(\boldsymbol{\lambda})=(\boldsymbol{I}-\boldsymbol{\Gamma})^{-1} \boldsymbol{\mu}$, where all node pairs in a block pair $(a,b)$ are kept together and they are ordered such that block pairs $(a,b)$, $(b,a)$ occupy consecutive positions. Then $\boldsymbol{\Gamma}$ is a $n(n-1) \times n(n-1)$ matrix whose elements are
$\Gamma_{(i,j)\rightarrow(x,y)} = {\alpha_{a b}^{ij \rightarrow xy}}$ \cite{hawkes1971spectra} and it has a diagonal block structure, i.e.,
$$\boldsymbol{\Gamma} = \left(\begin{array}{cccc}
\boldsymbol\Gamma_{(a,b)}  & & \\
& \boldsymbol\Gamma_{(a',b')}  & \\
& &  \dots& 
\end{array}\right)$$
where $\boldsymbol\Gamma_{(a,b)}$ contains the rows  $(i,j)\rightarrow (x,y)$ such that $(i,j) \in (a,b)$ or $(b,a)$ and $(x,y) \in (a,b)$ or $(b,a)$. This is because $\Gamma_{(i,j)\rightarrow(x,y)} = 0$ for any $(i,j) \in (a,b)$ and $(x,y) \notin \{(a,b),(b,a)\}$.
Denote $\boldsymbol G_{(a,b)}=\boldsymbol I- \boldsymbol \Gamma_{(a,b)}$, where $\boldsymbol I$ is the identity matrix. Then, 
$$\boldsymbol{I-\Gamma} = \left(\begin{array}{cccc}
\boldsymbol G_{(a,b)}  & & \\
& \boldsymbol G_{(a',b')}  & \\
& &  \dots& 
\end{array}\right)$$
also have the diagonal block structure, and so does
$$({ \boldsymbol I-\boldsymbol \Gamma)^{-1}} = \left(\begin{array}{cccc}
\boldsymbol G_{(a,b)}^{-1}  & & \\
& \boldsymbol G_{(a',b')}^{-1}  & \\
& &  \dots& 
\end{array}\right)$$

By our construction of the $\boldsymbol \Gamma$, we can write the $\boldsymbol G_{(a,b)}$ as a block matrix:
$$\boldsymbol G_{(a,b)} = \left(\begin{array}{cc}
\boldsymbol G_{ab\rightarrow ab}  & \boldsymbol G_{ab\rightarrow ba} \\
\boldsymbol G_{ba\rightarrow ab}  & \boldsymbol G_{ba\rightarrow ba}\\
\end{array}\right)$$
and in the same block matrix, the rows have the same row sum.

Note for a block matrix, the inversion is 
$$\left(\begin{array}{ll}
\mathbf{A} & \mathbf{B} \\
\mathbf{C} & \mathbf{D}
\end{array}\right)^{-1}= \left(\begin{array}{cc}
\left(\mathbf{A}-\mathbf{B D}^{-1} \mathbf{C}\right)^{-1} & -\left(\mathbf{A}-\mathbf{B D}^{-1} \mathbf{C}\right)^{-1} \mathbf{B D}^{-1} \\
-\mathbf{D}^{-1} \mathbf{C}\left(\mathbf{A}-\mathbf{B D}^{-1} \mathbf{C}\right)^{-1} & \mathbf{D}^{-1}+\mathbf{D}^{-1} \mathbf{C}\left(\mathbf{A}-\mathbf{B D}^{-1} \mathbf{C}\right)^{-1} \mathbf{B D}^{-1}
\end{array}\right)$$

We summarize a few simple observations in the following proposition.
\begin{proposition} For any square matrix $\boldsymbol A$ the following holds.
\begin{enumerate}
    \item If $\boldsymbol A\boldsymbol{1} = a \boldsymbol 1$, i.e., if the row sum of $\boldsymbol{A}$ are identical, and if $\boldsymbol A^{-1}$ exists, then $\boldsymbol A^{-1}\boldsymbol{1} = a^{-1} \boldsymbol{1}$, i.e., the row sum of $\boldsymbol A^{-1}$ are also identical.
    \item If $\boldsymbol A \boldsymbol 1 = a \boldsymbol 1$ and $\boldsymbol B \boldsymbol 1 = b \boldsymbol 1$, then $\boldsymbol A \boldsymbol B \boldsymbol 1 = ab \boldsymbol 1$
    \item If $\boldsymbol A \boldsymbol 1 = a \boldsymbol 1$ and $\boldsymbol B \boldsymbol 1 = b\boldsymbol 1$, then $(\boldsymbol A - \boldsymbol B)\boldsymbol 1  = (a-b)\boldsymbol 1$
\end{enumerate}
\end{proposition}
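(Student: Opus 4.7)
The plan is to prove each of the three items by a one-line manipulation, since the proposition is really a collection of elementary linear-algebra observations about matrices that fix (up to scale) the all-ones vector $\boldsymbol 1$. The unifying viewpoint is that the hypotheses say exactly that $\boldsymbol 1$ is a right eigenvector of $\boldsymbol A$ (respectively $\boldsymbol B$) with eigenvalue $a$ (respectively $b$). Once phrased that way, each claim follows from standard properties: inversion preserves eigenvectors and inverts the eigenvalue, products of matrices sharing an eigenvector act multiplicatively on that eigenvector, and differences act additively. I would present the proof as three numbered items mirroring the statement.

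For item (1), I would start from $\boldsymbol A \boldsymbol 1 = a \boldsymbol 1$ and first argue $a \neq 0$: if $a = 0$, then $\boldsymbol 1 \in \ker(\boldsymbol A)$ is a nonzero kernel vector, contradicting invertibility of $\boldsymbol A$. Having established $a \neq 0$, I would apply $\boldsymbol A^{-1}$ from the left to both sides and divide by $a$ to obtain $\boldsymbol A^{-1} \boldsymbol 1 = a^{-1} \boldsymbol 1$. The fact that $\boldsymbol A^{-1}$ also has constant row sums is then read off, since $\boldsymbol A^{-1}\boldsymbol 1$ is precisely the vector of row sums of $\boldsymbol A^{-1}$.

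For item (2), I would compute $\boldsymbol A \boldsymbol B \boldsymbol 1$ by applying $\boldsymbol B$ first, giving $\boldsymbol B \boldsymbol 1 = b \boldsymbol 1$, and then pulling the scalar out: $\boldsymbol A(\boldsymbol B \boldsymbol 1) = \boldsymbol A(b\boldsymbol 1) = b(\boldsymbol A \boldsymbol 1) = b(a\boldsymbol 1) = ab \boldsymbol 1$. For item (3), I would use linearity of matrix-vector multiplication to write $(\boldsymbol A - \boldsymbol B)\boldsymbol 1 = \boldsymbol A \boldsymbol 1 - \boldsymbol B \boldsymbol 1 = a\boldsymbol 1 - b\boldsymbol 1 = (a-b)\boldsymbol 1$.

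There is no substantive obstacle here; the only step that requires any care is verifying $a \neq 0$ in item (1), which is needed before writing $a^{-1}$. Because the three claims are so short, the entire proof should fit in a few lines. I would write it as a single paragraph with three labeled sentences, one for each item, so that the lemma can be cited later (in the preceding proof of Theorem~\ref{thm:block_structure}) without digression.
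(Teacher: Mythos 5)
Your proposal is correct and matches the intended argument exactly: the paper states this proposition as unproven ``simple observations,'' and your three one-line manipulations (treating $\boldsymbol 1$ as an eigenvector, applying $\boldsymbol A^{-1}$ for item (1), and using linearity for items (2) and (3)) are precisely the justification the authors leave implicit. Your explicit check that $a \neq 0$ before writing $a^{-1}$ is a small but genuine improvement in rigor over the paper's bare statement.
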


Thus, since $\boldsymbol G_{ab\rightarrow ab},\boldsymbol G_{ab\rightarrow ba}, \boldsymbol G_{ba\rightarrow ab}, \boldsymbol G_{ba\rightarrow ba}$ have the same row sum, and if all of them and $\boldsymbol G_{(a,b)}$ are invertible, then using the proposition above, we have 
\begin{align*}\boldsymbol G_{(a,b)}^{-1} \left(\begin{array}{c}
\boldsymbol 1_{n_an_b} \\
\boldsymbol 0_{n_an_b} \\
\end{array}\right) & = \left(\begin{array}{c}
\left(\mathbf{G}_{ab\rightarrow ab}-\mathbf{G}_{ab\rightarrow ba} {\mathbf{G}_{ba\rightarrow ba}}^{-1} \mathbf{G}_{ba\rightarrow ab}\right)^{-1}\boldsymbol 1_{n_an_b} \\
-\mathbf{G}_{ba\rightarrow ba}^{-1} \mathbf{G}_{ba\rightarrow ab}\left(\mathbf{G}_{ab\rightarrow ab}-\mathbf{G}_{ab\rightarrow ba} {\mathbf{G}_{ba\rightarrow ba}}^{-1} \mathbf{G}_{ba\rightarrow ab}\right)^{-1} \boldsymbol 1_{n_an_b} \\ 
\end{array}\right) \\
& = \left(\begin{array}{c}
g^1_{ab}\boldsymbol 1_{n_an_b} \\
g^2_{ab}\boldsymbol 1_{n_an_b} \\
\end{array}\right)
\end{align*}

where $g^1_{ab}, g^2_{ab}$ are some real valued functions of the parameters $\boldsymbol\alpha_{ab}, \boldsymbol\alpha_{ba}$, and $n_a$ denotes the number of nodes in block $a$.

Similarly, 
$$\boldsymbol G_{(a,b)}^{-1} \left(\begin{array}{c}
\boldsymbol 0_{n_an_b} \\
\boldsymbol 1_{n_an_b} \\
\end{array}\right) = \left(\begin{array}{c}
g^1_{ba}\boldsymbol 1_{n_an_b} \\
g^2_{ba}\boldsymbol 1_{n_an_b} \\
\end{array}\right)$$

Thus, by our construction, 
\begin{align}
\text{vec}(\boldsymbol \lambda_{{(i,j)|i\in a, j \in b \text{ or } i\in b, j \in a}})&= \boldsymbol G_{(a,b)}^{-1}\left(\begin{array}{c}
\mu_{ab} \\
\vdots \\
\mu_{ab} \\
\mu_{ba} \\
\vdots \\
\mu_{ba}
\end{array}\right)\\
&=\boldsymbol G_{(a,b)}^{-1}\left(\left(\begin{array}{c}
\boldsymbol 1_{n_an_b} \\
\boldsymbol 0_{n_an_b} \\
\end{array}\right)\mu_{ab}+\left(\begin{array}{c}
\boldsymbol 0_{n_an_b} \\
\boldsymbol 1_{n_an_b} \\
\end{array}\right)\mu_{ba}\right)\\
&= \left(\begin{array}{c}
(g^1_{ab} \mu_{ab} + g^1_{ba} \mu_{ba})\boldsymbol 1_{n_an_b} \\
(g^2_{ab} \mu_{ab} + g^2_{ba} \mu_{ba})\boldsymbol 1_{n_an_b} \\
\end{array}\right)
\end{align}
which means for any $(i,j) \in (a,b)$, $\lambda_{(i,j)} =(g^1_{ab} \mu_{ab} + g^1_{ba} \mu_{ba})$, and for any $(i',j') \in (b,a)$, $\lambda_{(i',j')} = (g^2_{ab} \mu_{ab} + g^2_{ba} \mu_{ba})\boldsymbol 1_{n_an_b}$.
\end{proof}

\section{Additional Experiment Details and Results}
\label{sec:supp_exp}

\subsection{Spectral Clustering and Refinement Accuracy}
\label{sec:supp_exp_rand}
For these experiments, we generate data from the MULCH model with $K=4$ and assume parameters of the four diagonal block pairs are equal, and similarly, parameters of the off-diagonal block pairs are equal.

We denote the set of Hawkes process parameters for a block pair $(a,b)$ by
\begin{equation*}
\bm{\theta}_{ab} = 
\left(\mu_{ab}, \alpha_{ab}^{xy \rightarrow xy}, \alpha_{ab}^{xy \rightarrow yx}, \alpha_{ab}^{xy \rightarrow xb}, \alpha_{ab}^{xy \rightarrow ya}, \alpha_{ab}^{xy \rightarrow ay}, \alpha_{ab}^{xy \rightarrow bx} \right).
\end{equation*}

In these experiments, we use the following parameters:
\begin{align*}
\bm{\theta}_{aa}&=\bm{\theta}_{bb}=(0.008, 0.3, 0.3, 0.002, 0.0005, 0.001, 0.0005) \\
\bm{\theta}_{ab}&=\bm{\theta}_{ba}=(0.008, 0.1, 0.1, 0.001, 0.0001, 0.001, 0.0001) \\
\bm{C}_{aa}&=\bm{C}_{ab}=\bm{C}_{ba}=\bm{C}_{bb}=(0.33, 0.33, 0.34) \\
\bm{\beta}&=(2 \text{ weeks}, 1 \text{ day}, 2 \text{ hours}) = (1/14 , 1, 24/2)
\end{align*}

Note that, simulated networks have assortative mixing, and we assumed timestamps are in unit of days.

\subsection{Parameter Estimation Accuracy}
\label{sec:supp_exp_param}

We test the accuracy of our MLE in this experiment. 
We generate data from the MULCH model with $K=2$ and assume same structured parameters as in \ref{sec:supp_exp_rand}. However, we switch diagonal and off-diagonal block pairs parameters so simulated networks are disassortative:
\begin{align*}
\bm{\theta}_{aa}&=\bm{\theta}_{bb}=(0.008, 0.1, 0.1, 0.001, 0.0001, 0.001, 0.0001) \\
\bm{\theta}_{ab}&=\bm{\theta}_{ba}=(0.008, 0.3, 0.3, 0.002, 0.0005, 0.001, 0.0005)
\end{align*}

\begin{figure*}[tp]
    \newcommand{\figwidth}{0.245\textwidth}
    \centering
    \hfill
    \begin{subfigure}[c]{\figwidth}
        \centering
        \includegraphics[width=\textwidth]{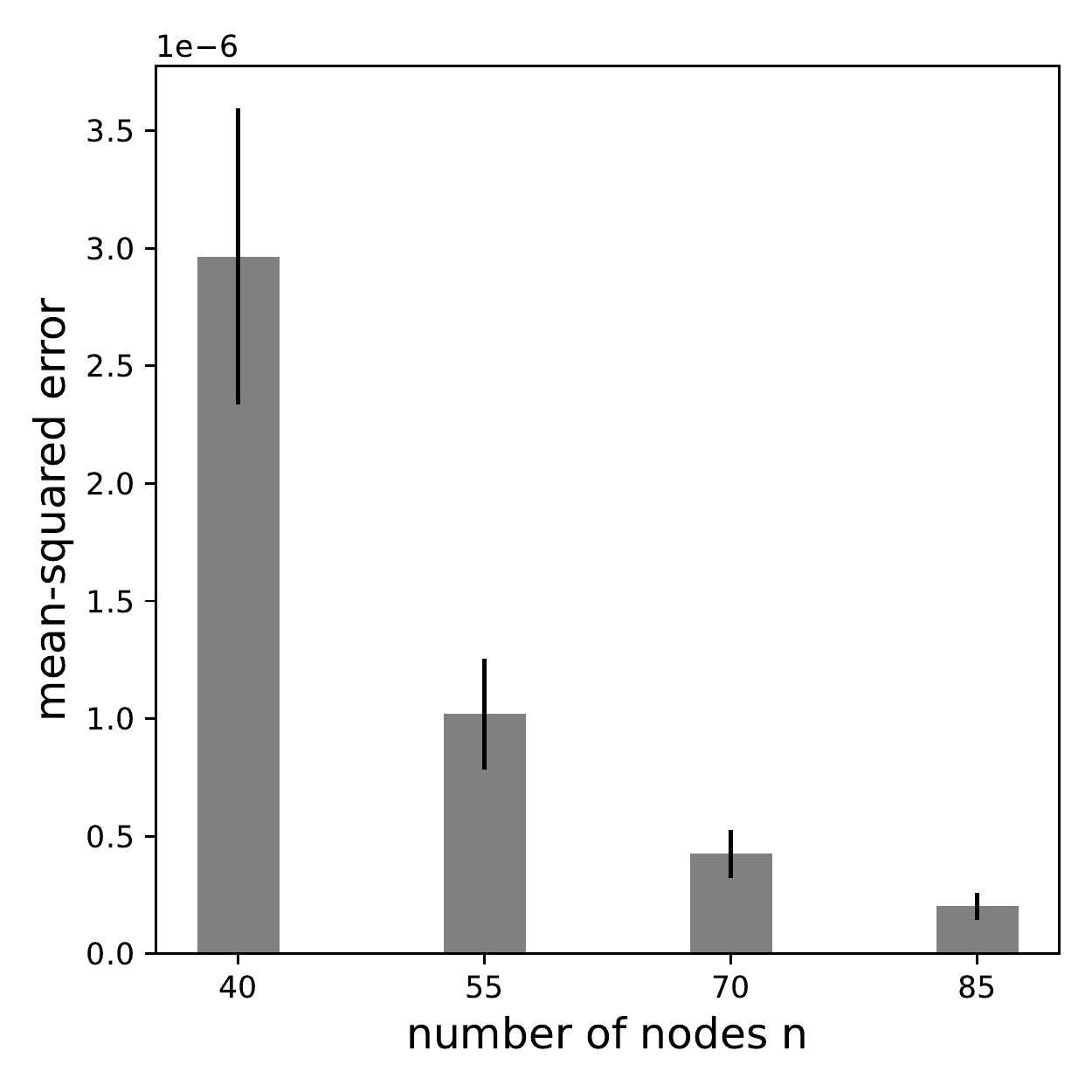}
        \caption{Base intensity $\mu_{ab}$}
    \end{subfigure}
    \hfill
    \centering
    \begin{subfigure}[c]{\figwidth}
        \centering
        \includegraphics[width=\textwidth]{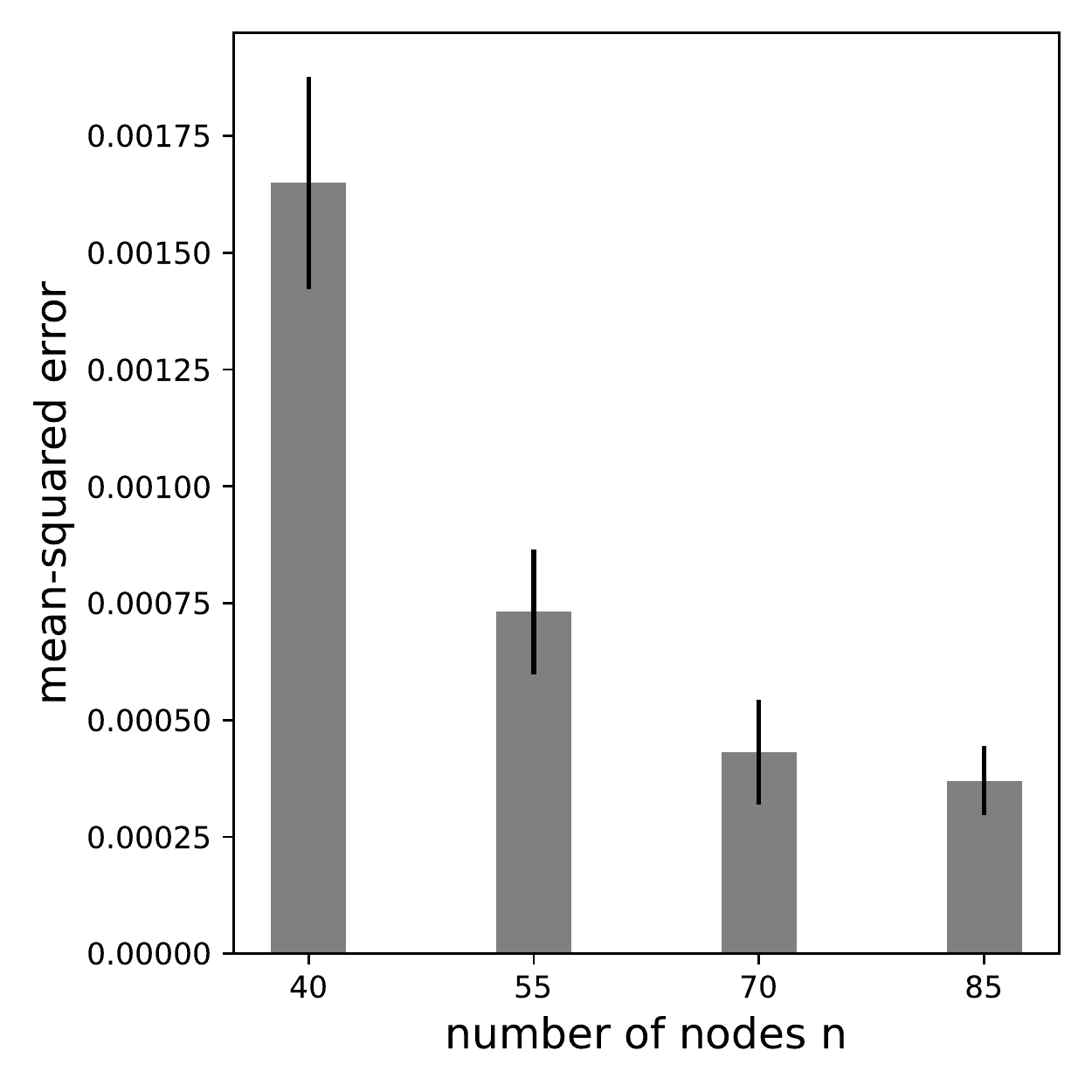}
        \caption{Self excitation $\alpha_{ab}^{xy \rightarrow xy}$}
    \end{subfigure}
    \hfill
    \centering
    \begin{subfigure}[c]{\figwidth}
        \centering
        \includegraphics[width=\textwidth]{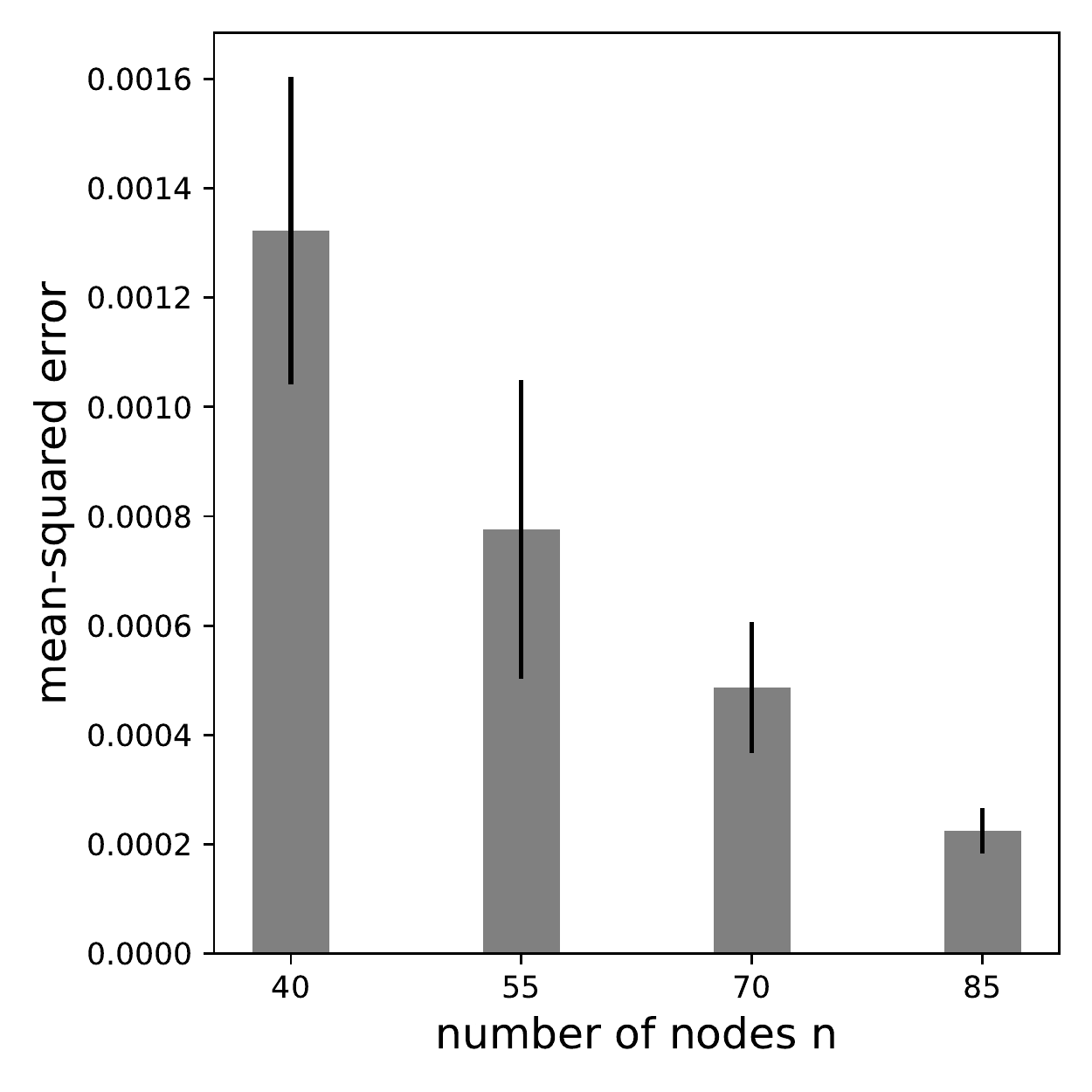}
        \caption{Reciprocal excitation $\alpha_{ab}^{xy \rightarrow yx}$}
    \end{subfigure}
    \hfill
    \centering
    \begin{subfigure}[c]{\figwidth}
        \centering
        \includegraphics[width=\textwidth]{figures/sim_tests/mse_T_150/turn_conti.pdf}
        \caption{Turn continuation $\alpha_{ab}^{xy \rightarrow xb}$}
    \end{subfigure}
    \hfill
    \\
    \begin{subfigure}[c]{\figwidth}
        \centering
        \includegraphics[width=\textwidth]{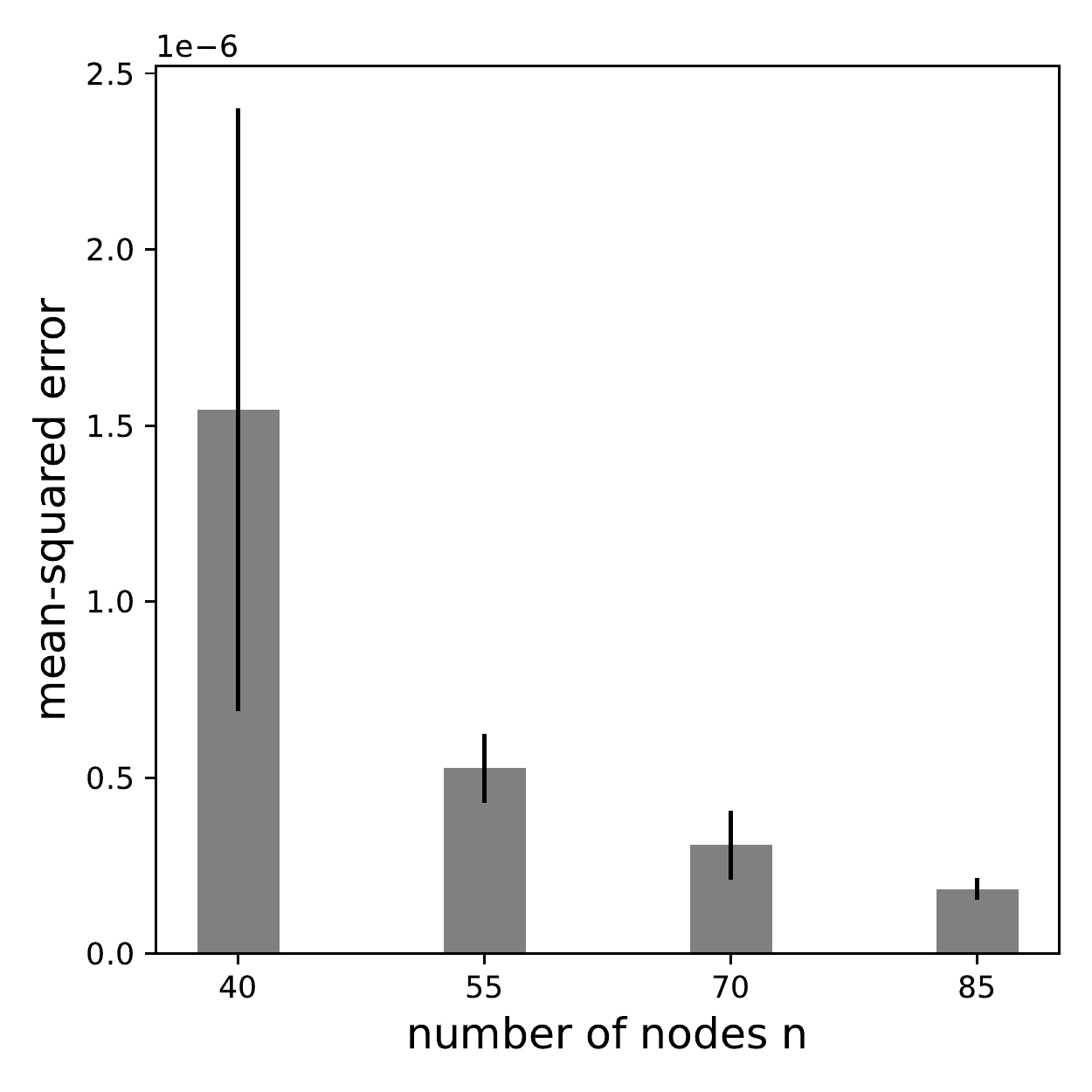}
        \caption{Generalized recip.~$\alpha_{ab}^{xy \rightarrow ya}$}
    \end{subfigure}
        \hfill
    \centering
    \begin{subfigure}[c]{\figwidth}
        \centering
        \includegraphics[width=\textwidth]{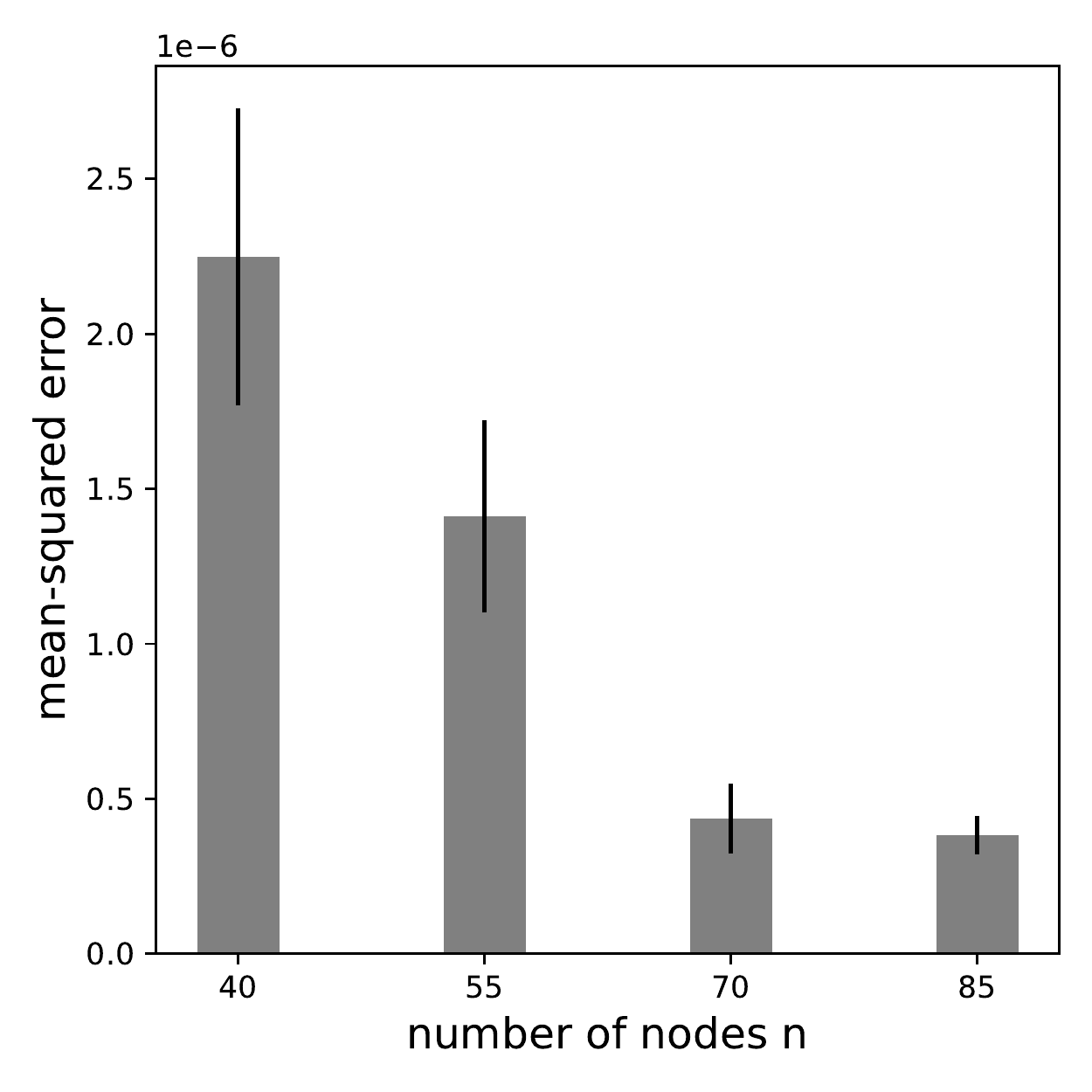}
        \caption{Allied continuation $\alpha_{ab}^{xy \rightarrow ay}$}
    \end{subfigure}
    \hfill
    \centering
    \begin{subfigure}[c]{\figwidth}
        \centering
        \includegraphics[width=\textwidth]{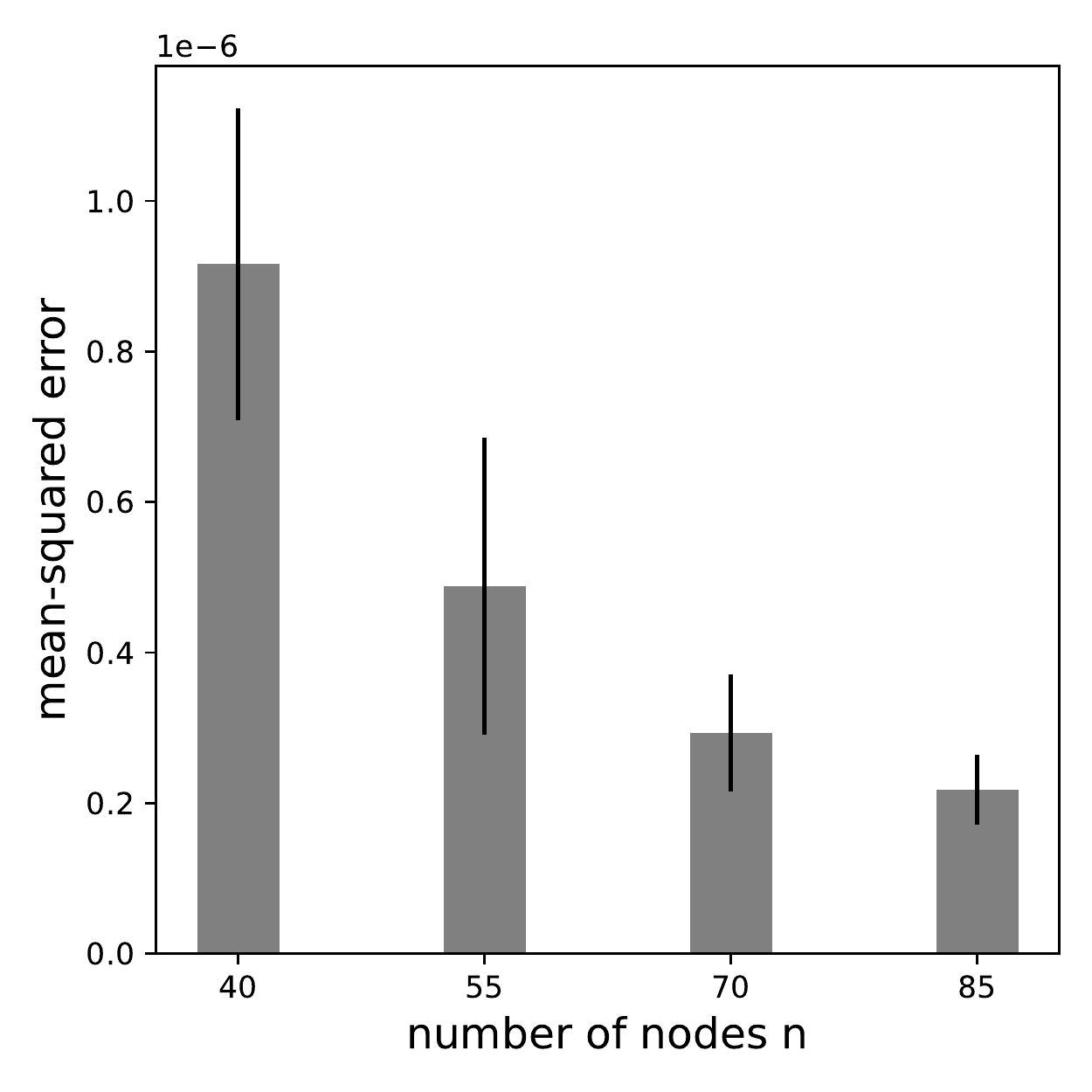}
        \caption{Allied reciprocity $\alpha_{ab}^{xy \rightarrow bx}$}
    \end{subfigure}
    \hfill
    \centering
    \begin{subfigure}[c]{\figwidth}
        \centering
        \includegraphics[width=\textwidth]{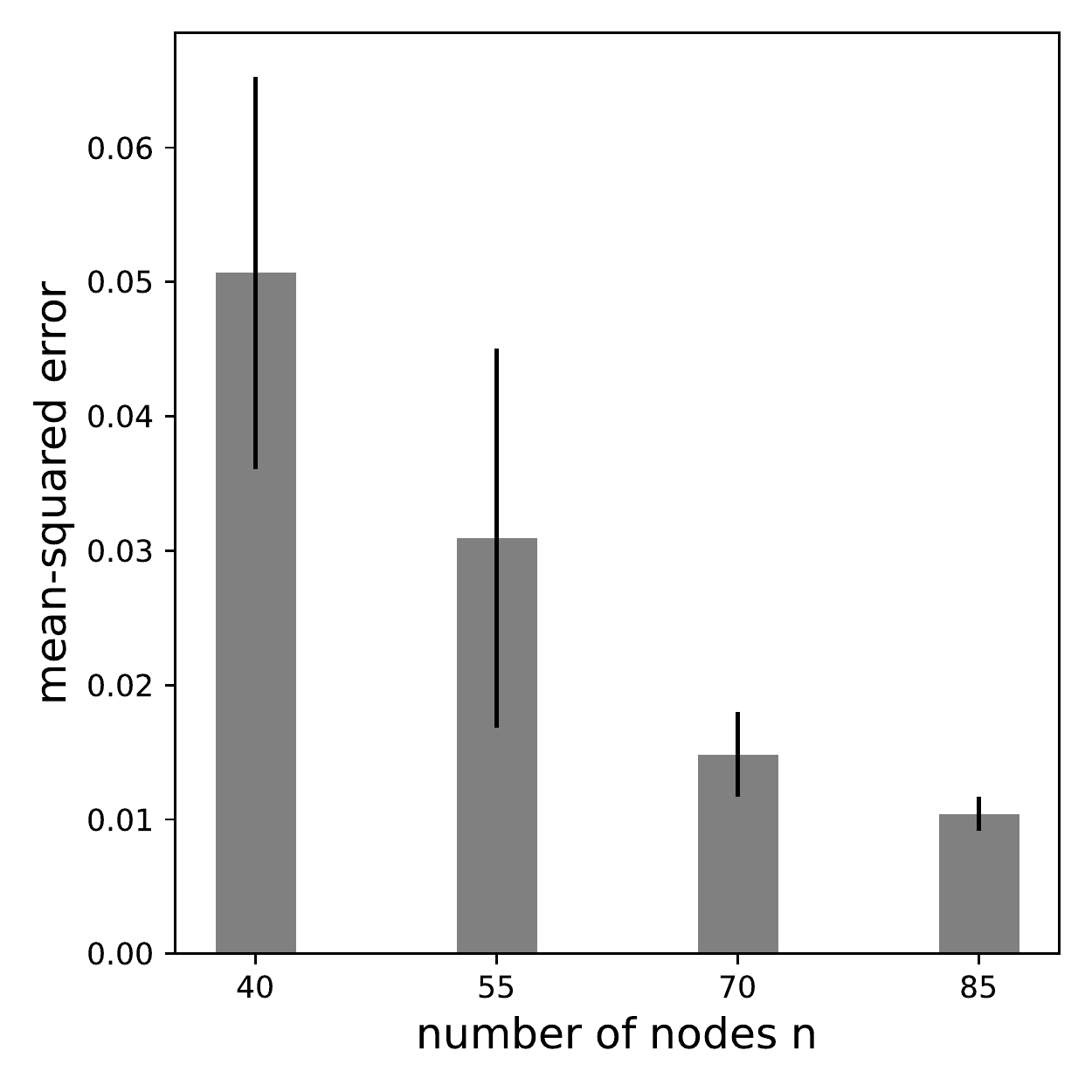}
        \caption{Kernel scaling $C_{ab}$}
    \end{subfigure}
    \hfill
    \caption{Mean squared error for Hawkes process parameters on simulated networks at $T=5$ months ($\pm$ standard error over 10 runs).}
    \label{fig:param_estimation_T}
\end{figure*}

\begin{figure*}[tp]
    \newcommand{\figwidth}{0.245\textwidth}
    \centering
    \hfill
    \begin{subfigure}[c]{\figwidth}
        \centering
        \includegraphics[width=\textwidth]{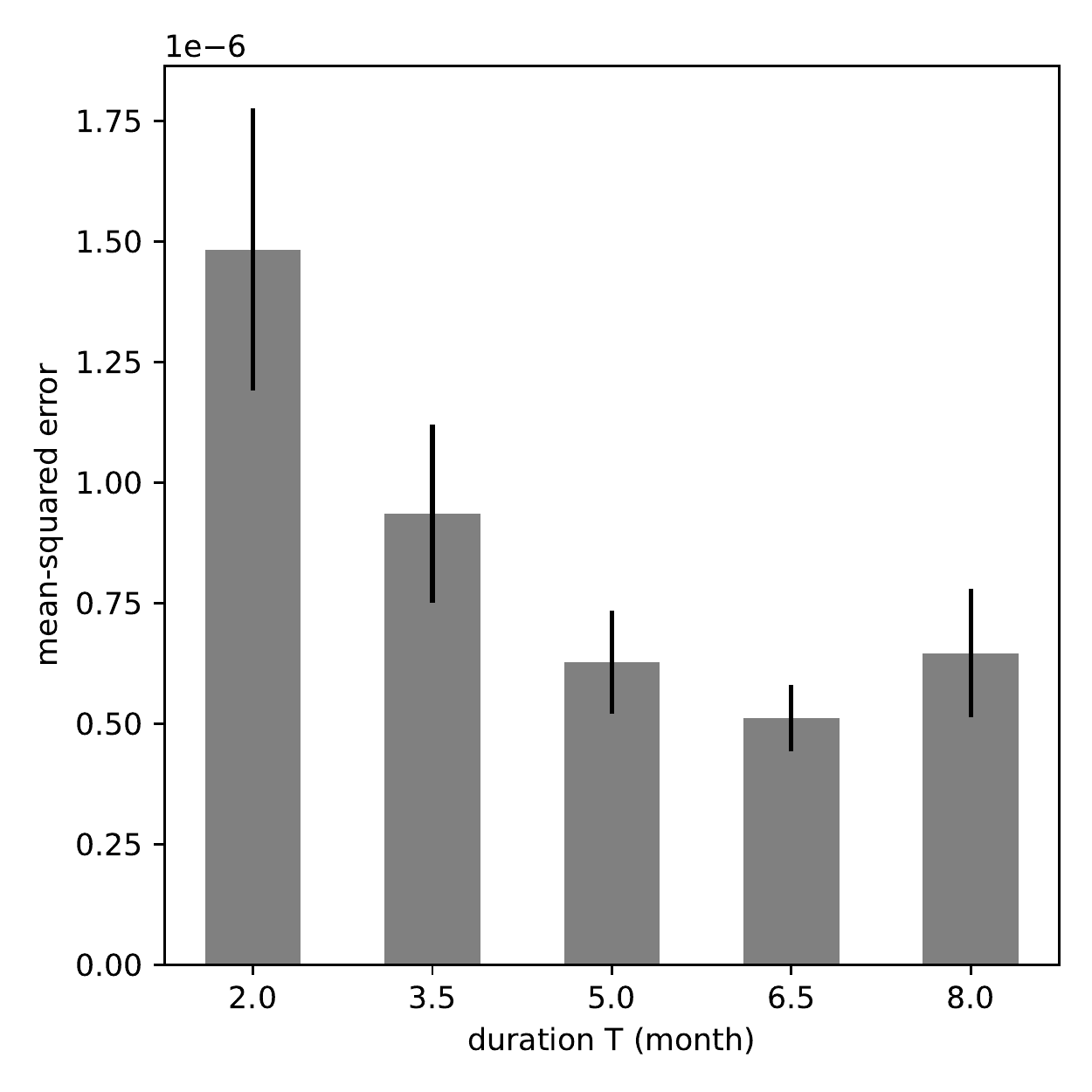}
        \caption{Base intensity $\mu_{ab}$}
    \end{subfigure}
    \hfill
    \centering
    \begin{subfigure}[c]{\figwidth}
        \centering
        \includegraphics[width=\textwidth]{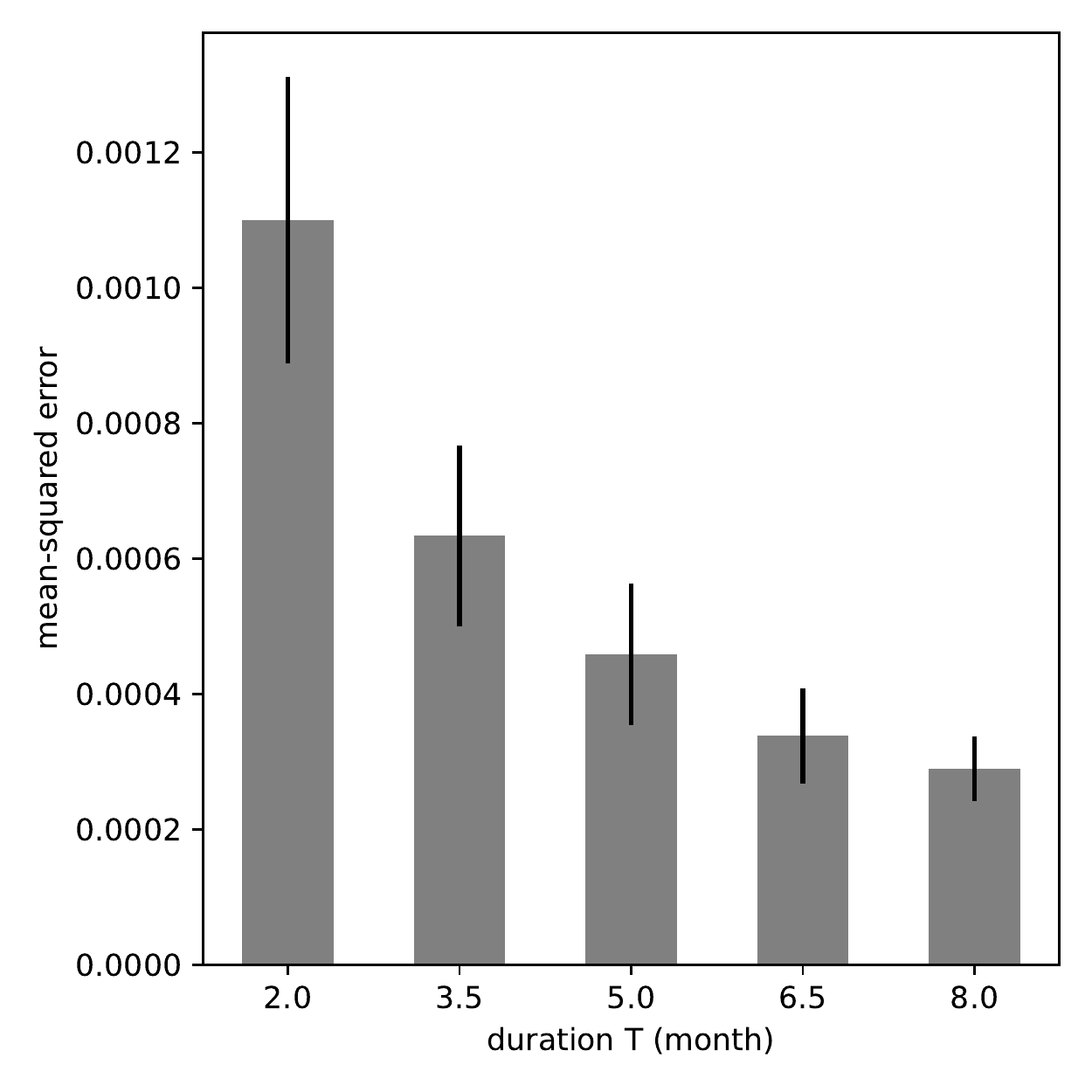}
        \caption{Self excitation $\alpha_{ab}^{xy \rightarrow xy}$}
    \end{subfigure}
    \hfill
    \centering
    \begin{subfigure}[c]{\figwidth}
        \centering
        \includegraphics[width=\textwidth]{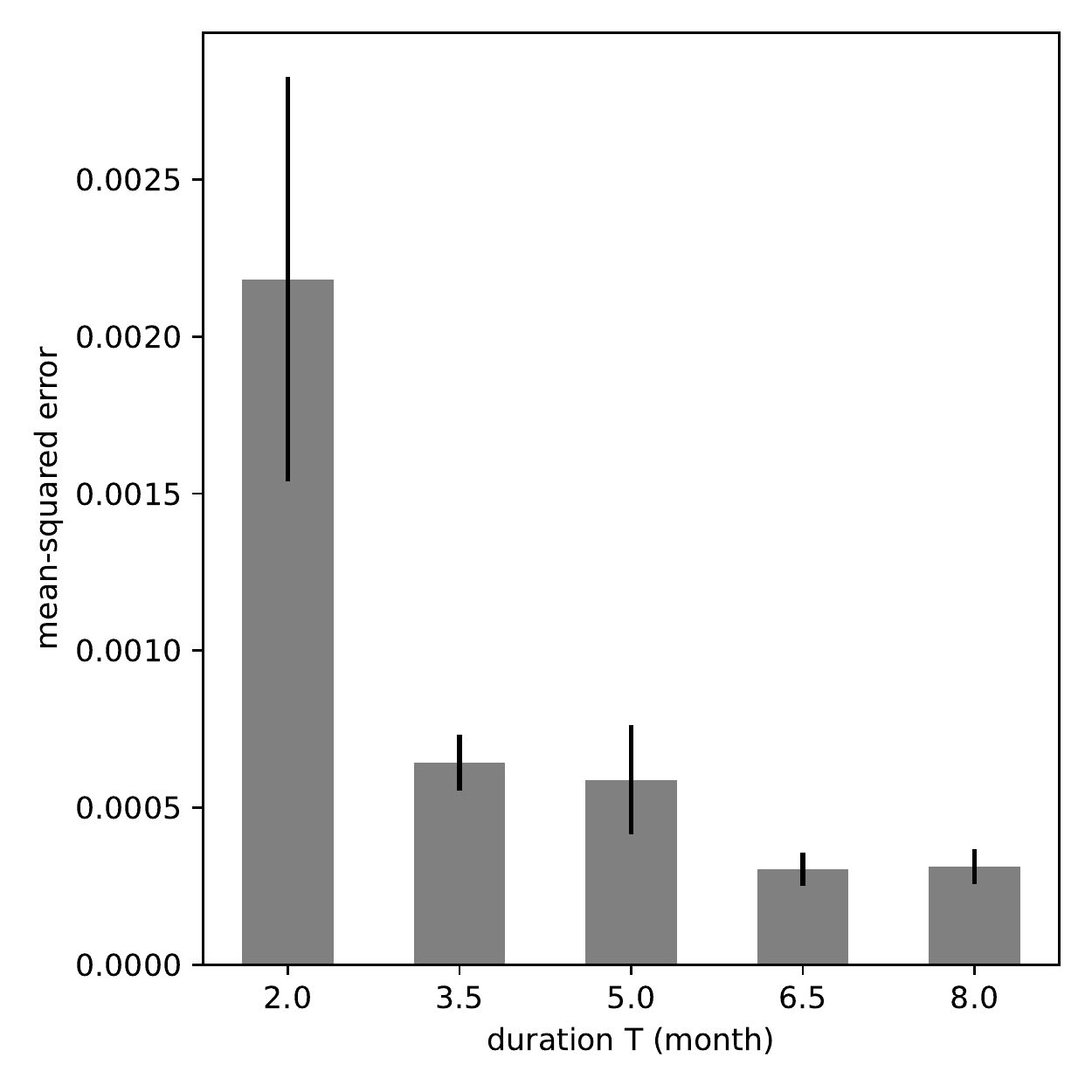}
        \caption{Reciprocal excitation $\alpha_{ab}^{xy \rightarrow yx}$}
    \end{subfigure}
    \hfill
    \centering
    \begin{subfigure}[c]{\figwidth}
        \centering
        \includegraphics[width=\textwidth]{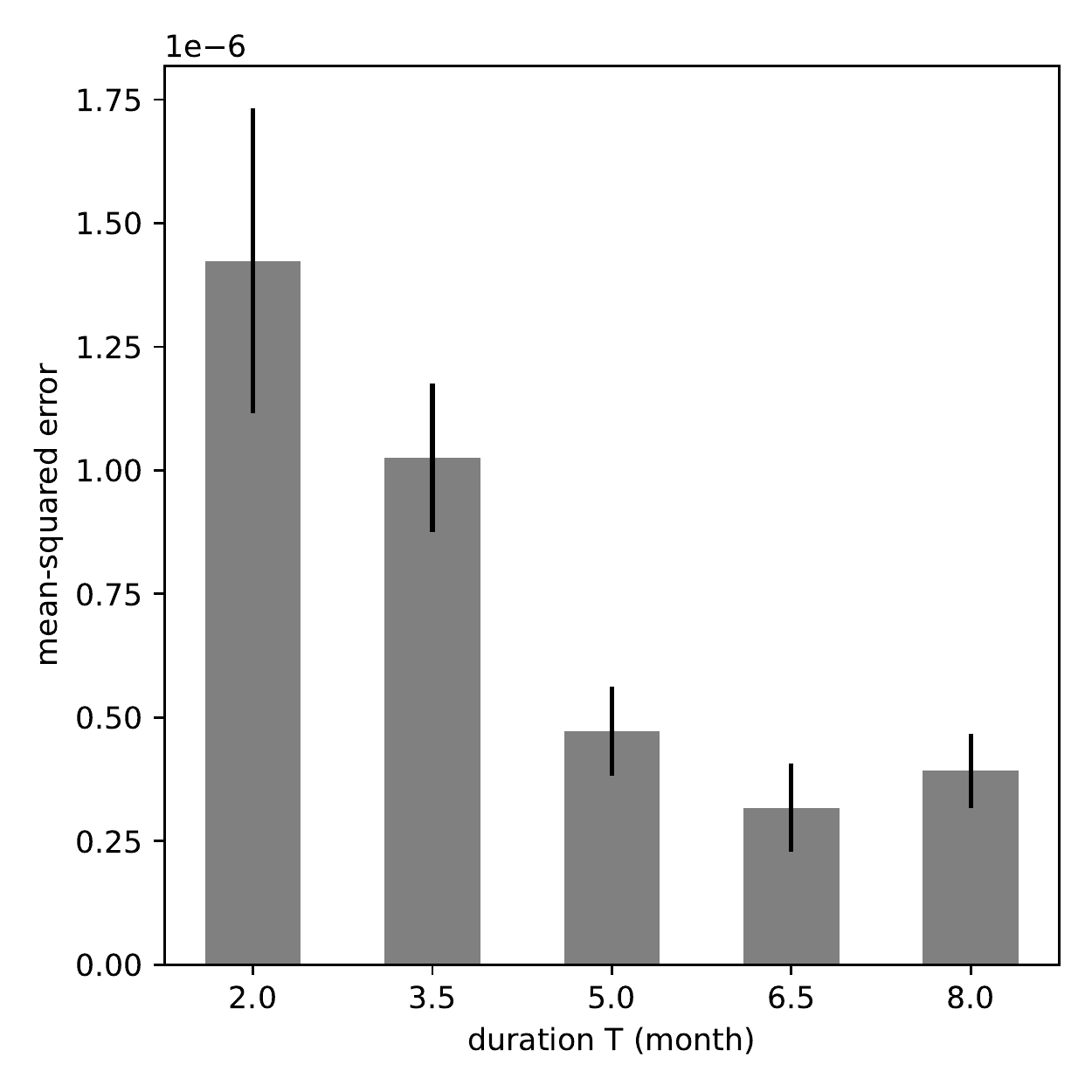}
        \caption{Turn continuation $\alpha_{ab}^{xy \rightarrow xb}$}
    \end{subfigure}
    \hfill
    \\
    \begin{subfigure}[c]{\figwidth}
        \centering
        \includegraphics[width=\textwidth]{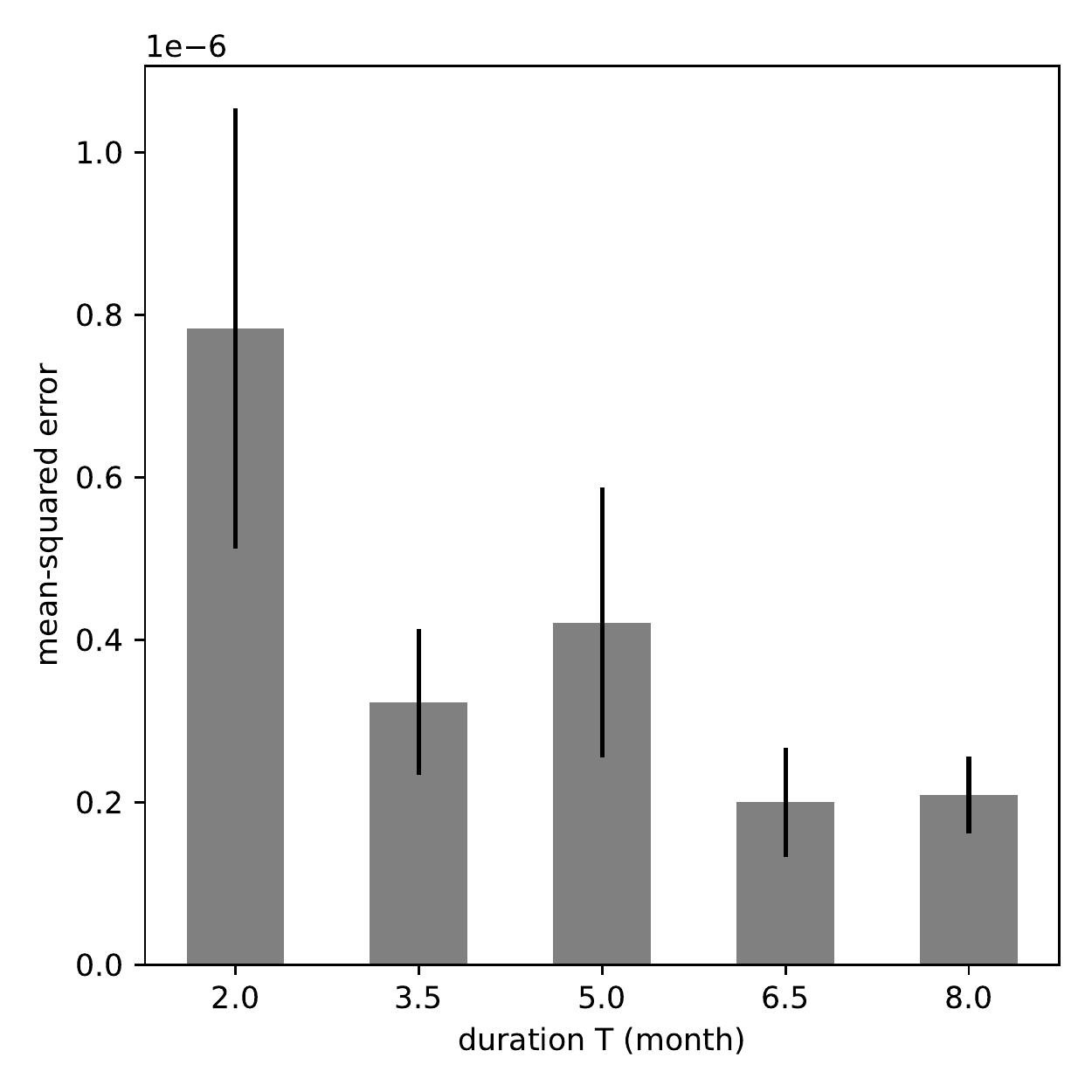}
        \caption{Generalized recip.~$\alpha_{ab}^{xy \rightarrow ya}$}
    \end{subfigure}
        \hfill
    \centering
    \begin{subfigure}[c]{\figwidth}
        \centering
        \includegraphics[width=\textwidth]{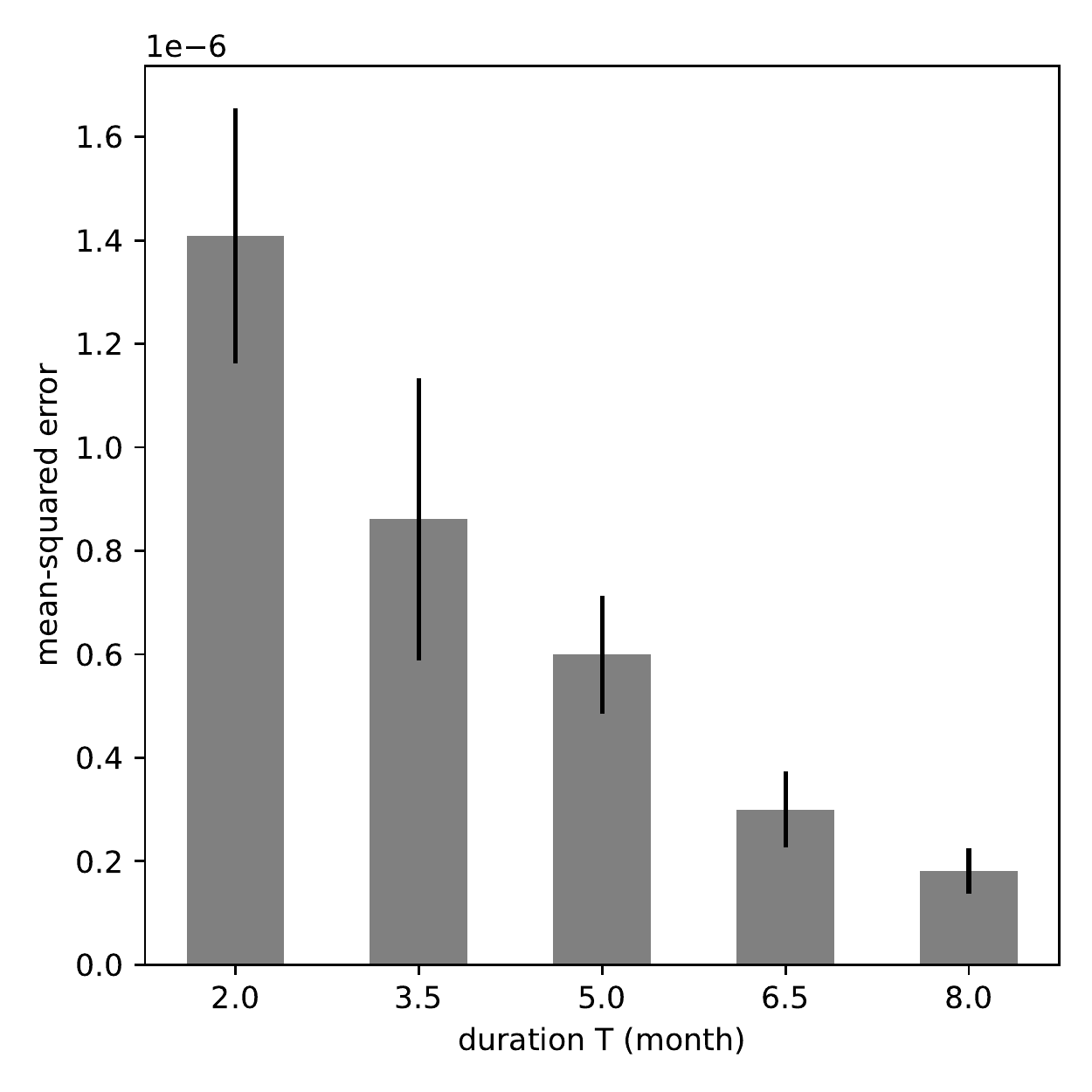}
        \caption{Allied continuation $\alpha_{ab}^{xy \rightarrow ay}$}
    \end{subfigure}
    \hfill
    \centering
    \begin{subfigure}[c]{\figwidth}
        \centering
        \includegraphics[width=\textwidth]{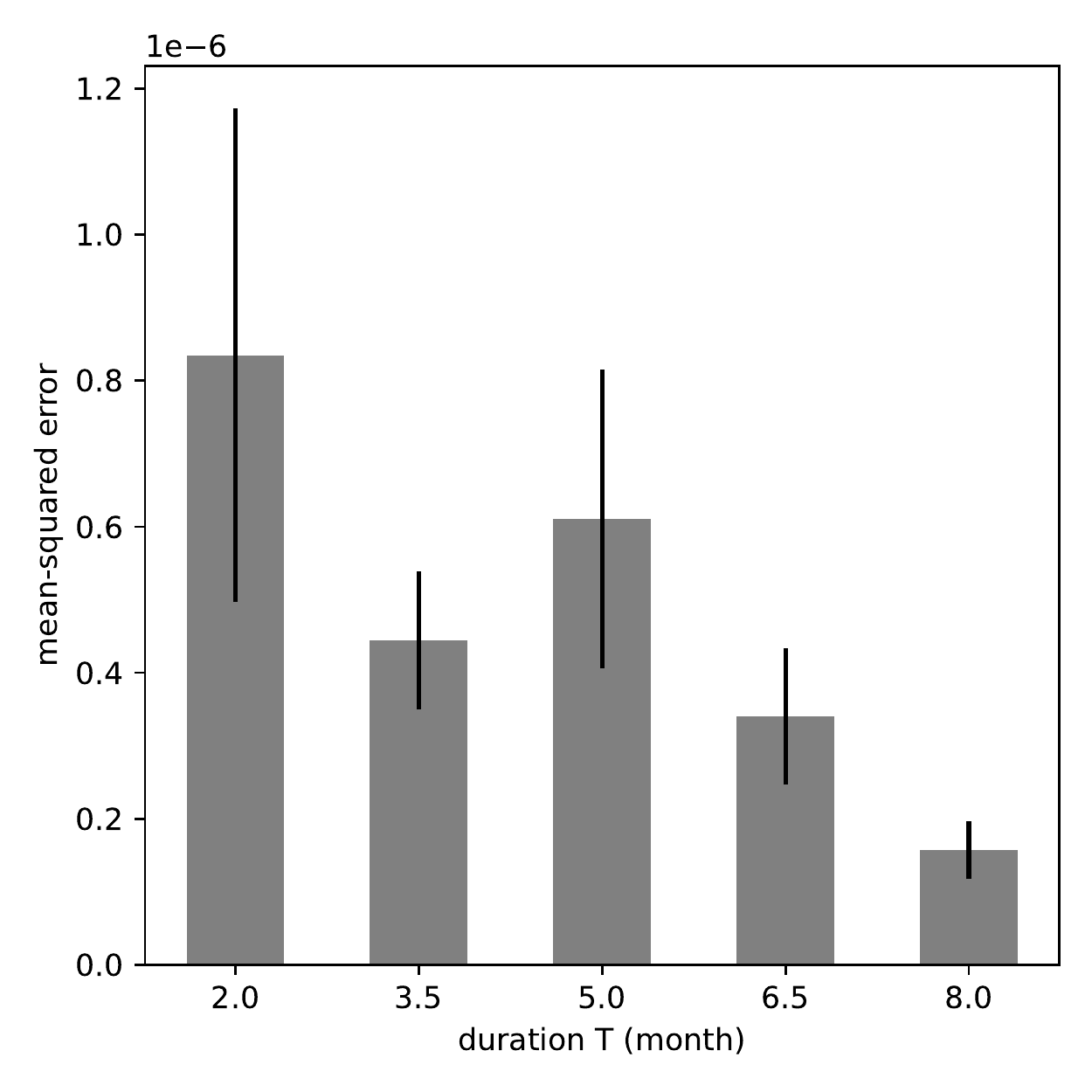}
        \caption{Allied reciprocity $\alpha_{ab}^{xy \rightarrow bx}$}
    \end{subfigure}
    \hfill
    \centering
    \begin{subfigure}[c]{\figwidth}
        \centering
        \includegraphics[width=\textwidth]{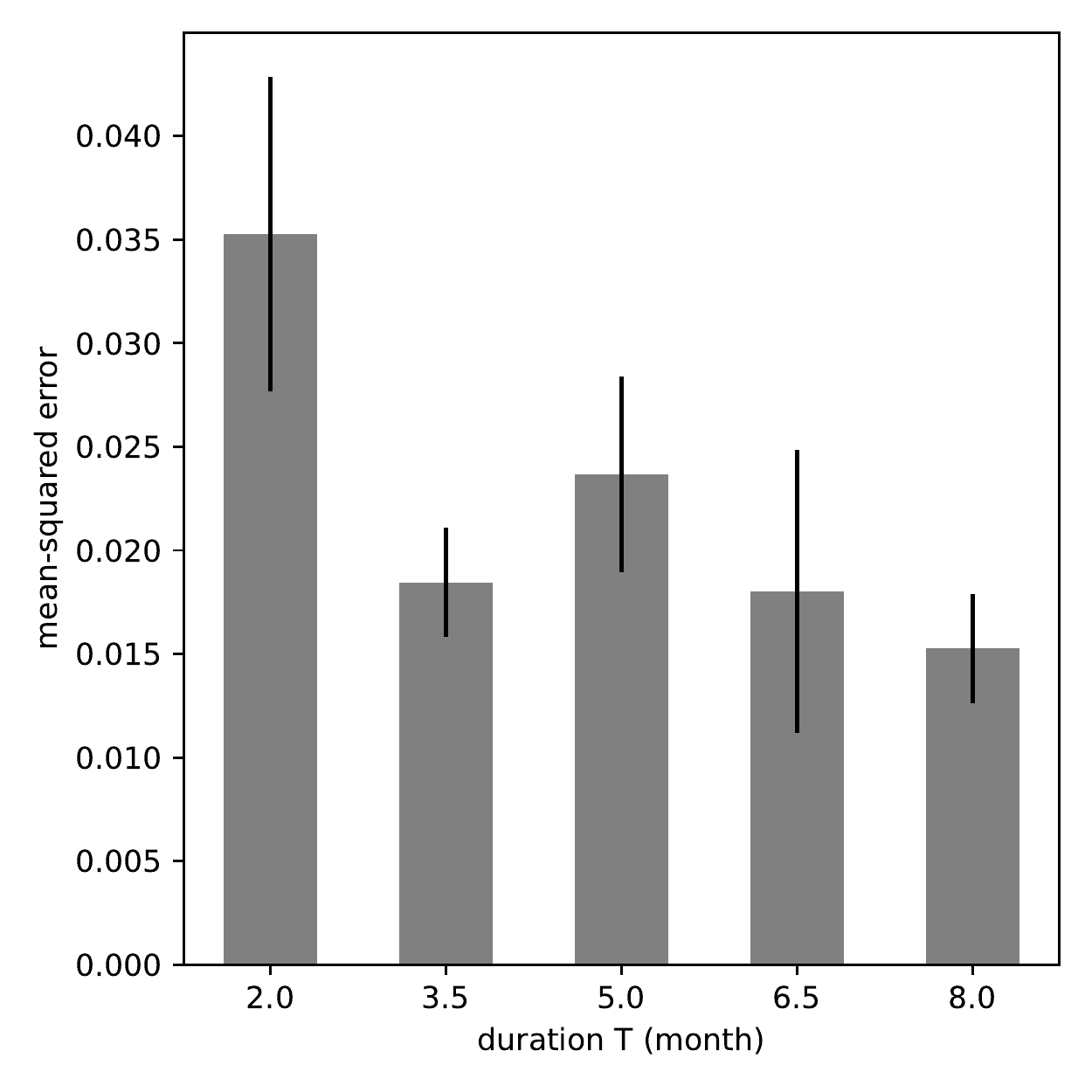}
        \caption{Kernel scaling $C_{ab}$}
    \end{subfigure}
    \hfill
    \caption{Mean squared error for Hawkes process parameters on simulated networks at $n=70$ ($\pm$ standard error over 10 runs).}
    \label{fig:param_estimation_n}
\end{figure*}

At a fixed duration $T=5$ months, and while varying number of nodes $n$, we simulate and fit our model to evaluate the accuracy of the different Hawkes process parameters. 
As shown in Figure \ref{fig:param_estimation_T}, the mean-squared error (MSE) for each estimated parameter decreases with increasing $n$ as expected.
Similarly, for a fixed number of nodes $n=70$, we vary the duration $T$.
Figure \ref{fig:param_estimation_n} shows that the MSE decreases with increasing $T$ as expected. 
For each pair values of $(n, T)$, we run 10 simulations.

\subsection{Real Dataset Descriptions}
\label{sec:supp_datasets}
Each dataset consists of a set of events where each event is denoted by a sender, a receiver, and a timestamp:
\begin{itemize}
    \item MIT Reality Mining \citep{eagle2009inferring}: We use the phone call data, where the start time of each call was used as the event timestamp.
    We consider calls between pairs of the core 70 callers and recipients. 
    We use $\beta$ values of 1 week, 2 days, and 1 hour.

    \item Enron Emails \citep{klimt2004enron}: We use the same subset of the Enron email corpus as in \citet{Dubois2013}.
    We use $\beta$ values of 1 week, 2 days, and 6 hours.

    \item Militarized Interstate Disputes (MID) \citep{palmer2021mid5}: We use the MID 5.01 dataset compiled by the Correlates of War project. 
    Nodes denote (sovereign) states. 
    Each edge denotes an incident, which is a threat, display, or use of force one state directs towards another. 
    We remove 8 nodes that are disconnected from the largest connected component. 
    Unlike the other networks, this is a conflict network, so we expect to see \emph{disassortative} rather than assortative mixing between nodes.
    We use $\beta$ values of 2 months, 2 weeks, and 12 hours.
    
    \item Facebook Wall Posts \citep{viswanath2009evolution}: 
    We consider only posts from a user to another user's wall so that there are no self-edges.
    We analyze the largest connected component of the network excluding self loops.
    We use $\beta$ values of 2 months, 1 week, and 2 hours.
\end{itemize}
The MIT Reality Mining, Enron, and Facebook datasets were loaded and preprocessed identically to \citet{arastuie2020chip}. 
Timestamps in all datasets were rescaled to be in the range $[0, 1000]$, the same as in \citet{arastuie2020chip} and \citet{Dubois2013} so that log-likelihoods are comparable with their reported figures.

\subsection{Descriptions of Other Models for Comparison}
\label{sec:supp_other_models}
We compare against several other TPP models for continuous-time networks:
\begin{itemize}
    \item Community Hawkes Independent Pairs (CHIP) \citep{arastuie2020chip}: Univariate Hawkes process network model with block structure where each node pair is independent of all others. We use the implementation at \url{https://github.com/IdeasLabUT/CHIP-Network-Model}
    
    \item Block Hawkes Model (BHM) \citep{junuthula2019block}: Univariate Hawkes process network model with block structure where an event between a node pair equally excites all node pairs in the same block pair. We use the implementation at \url{https://github.com/IdeasLabUT/CHIP-Network-Model}
    
    \item Relational Event Model (REM) \citep{Dubois2013}: Inhomogeneous Poisson process network model with piecewise constant intensities and block structure. The instantaneous intensity for a node pair depends on several network summary statistics in a manner similar to an exponential random graph model \citep{goldenberg09}. 
    We were not able to locate a working implementation so we only compare against reported results.
    
    \item Dual Latent Space (DLS) \citep{yang2017decoupling}: Bivariate Hawkes process network model with separate continuous latent spaces for the base intensities and for reciprocal excitations. We use the implementation at \url{https://github.com/jiaseny/lspp}
    
    \item ADM4 \citep{zhou2013learning}: Multivariate Hawkes process network model with penalties to encourage sparsity and low rank for the excitation matrix $\alphaMat$. We use the implementation in the Python package \texttt{tick} \citep{bacry2017tick}.
\end{itemize}

\subsection{Scalability of MULCH}
\label{sec:scalibility}

\begin{table}[t]
\centering

\caption{Wall clock times to fit CHIP, BHM, and MULCH on the Facebook and MID datasets. Test $\ell$ denotes the mean test log-likelihood per event as defined in Section \ref{sec:exp_pred}.}
\label{tab:scalability}

\begin{tabular}{cccccccccc}
\toprule
 & \multicolumn{3}{c}{CHIP} & \multicolumn{3}{c}{BHM} & \multicolumn{3}{c}{MULCH} \\ 
\cmidrule{2-10} 
Dataset & Test $\ell$  & $K$  & Time  & Test $\ell$  & $K$  & Time & Test $\ell$   & $K$  & Time  \\ 
\midrule
Facebook & $-9.48$ & $9$ & $3.8$ minutes  & $-14.3$ & $15$ & $3.5$ minutes  & $-6.82$ & $1$ & $16$ hours   \\
MID      & $-3.67$ & $2$ & $0.48$ seconds & $-5.18$  & $91$ & $3.5$ seconds & $-3.53$ & $2$ & $31$ seconds \\ 
\bottomrule
\end{tabular}

\end{table}

We compare scalability of MULCH against the CHIP and BHM models. We fit all 3 models to the  Facebook ($n=43,953$) and MID ($n=147$) datasets (see Table \ref{tab:dataStats} for other dataset statistics). We run the models without likelihood refinement over a range of number of blocks $K$ and report the wall clock time required to fit to the model at $K$ corresponding to the best test log-likelihood score. 
The wall clock times and test log-likelihoods are both shown in Table \ref{tab:scalability}.
Both CHIP and BHM, which utilize univariate Hawkes processes, are much faster to fit than our proposed MULCH model that uses multivariate Hawkes processes. 
However, MULCH is able to achieve better fits, as indicated by the higher test log-likelihood.

\subsection{Ablation Experiment on Number of Excitation Types}
\label{sec:supp_exp_alphas}

The excitation parameters $\alpha_{ab}^{xy \rightarrow ij}$ control which node pairs can mutually excite each other. For the full MULCH model, we consider 6 types of excitations, listed in Table \ref{tab:excitationDescriptions}. In this experiment, we perform an ablation study by removing some of the excitation types (i.e. set to 0), then refit the model. We test out two additional variants of our MULCH model: two $\alpha$'s (self and reciprocal excitation only) and four $\alpha$'s (add turn continuation and generalized reciprocity). 
We use the Reality Mining dataset in this experiment.

We find that the full six-$\alpha$ model has slightly better predictive accuracy. Compared to the $0.954$ link prediction AUC of the full six-$\alpha$ model, the four-$\alpha$ and two-$\alpha$ models had AUCs of $0.950$ and $0.951$, respectively. All 3 models had similar test-log likelihood ($\pm 0.01$).

We find that the six-$\alpha$ and four-$\alpha$ models significantly improve generative accuracy compared to the two-$\alpha$ model, as shown in Figure \ref{fig:AlpahsMotifCounts}.  
The improvement is because the
two-$\alpha$ model has no mechanism to generate 3-node motifs, e.g.~triangles, and needs to generate way too many 2-node motifs (see Figure \ref{fig:2alphaMotifCountsR}, $\text{MAPE}=69.8$) to generate 3-node motif counts close to the actual network in Figure \ref{fig:MotifCountsR2}. The full six-$\alpha$ model (Figure \ref{fig:6alphaMotifCountsR}) can accurately generate all motif counts due to the additional excitations and achieves a much better $\text{MAPE}=16.5$ score. Note that the four-$\alpha$ model is able to perform well and generate motifs close to actual dataset at $\text{MAPE}=16.1$.

\begin{figure*}[t]
    \newcommand{\figwidth}{0.245\textwidth}
    \centering
    \hfill
    \begin{subfigure}[c]{\figwidth}
        \centering
        \includegraphics[width=\textwidth]{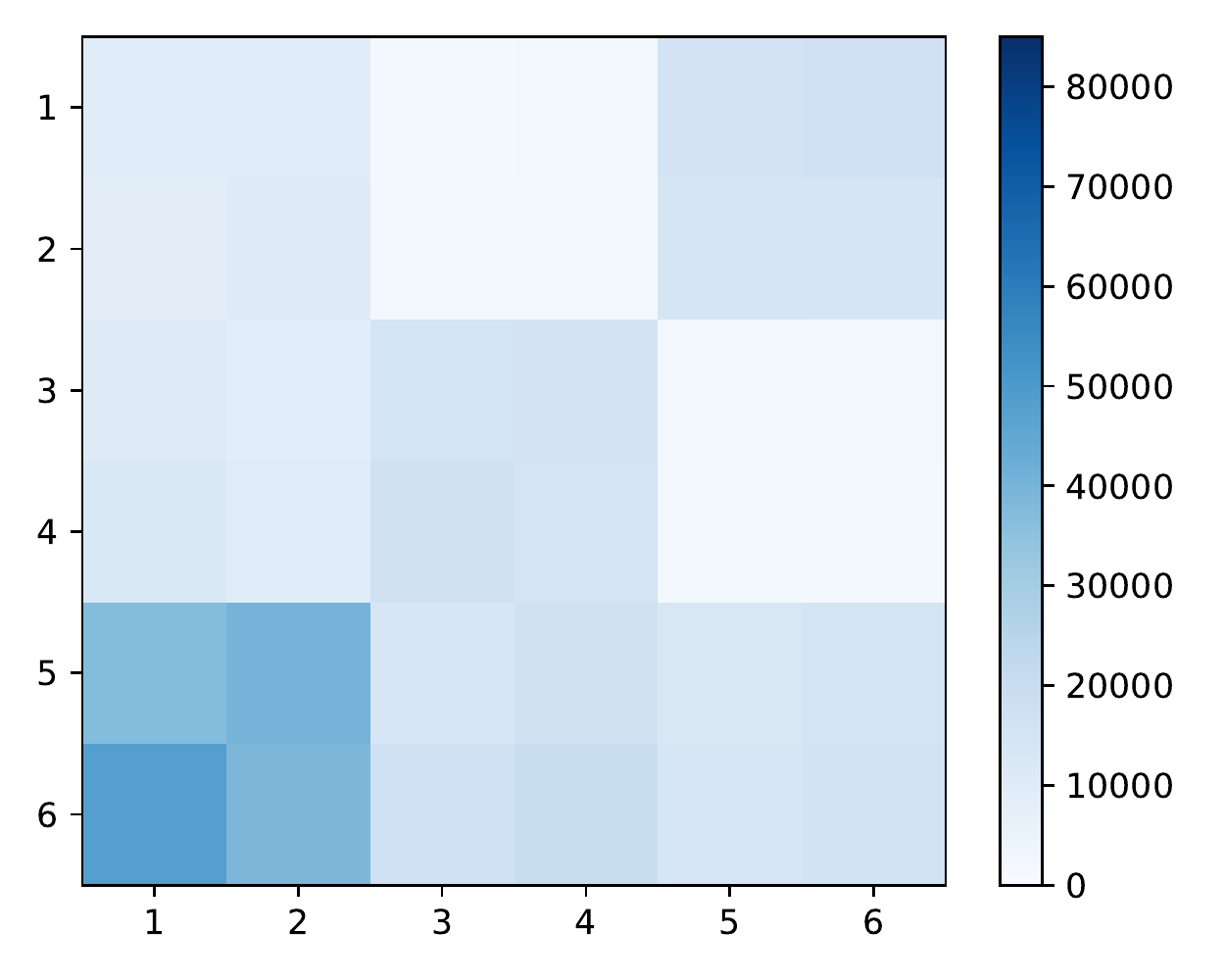}
        \caption{Actual network}
        \label{fig:MotifCountsR2}
    \end{subfigure}
    \hfill
    \begin{subfigure}[c]{\figwidth}
        \centering
        \includegraphics[width=\textwidth]{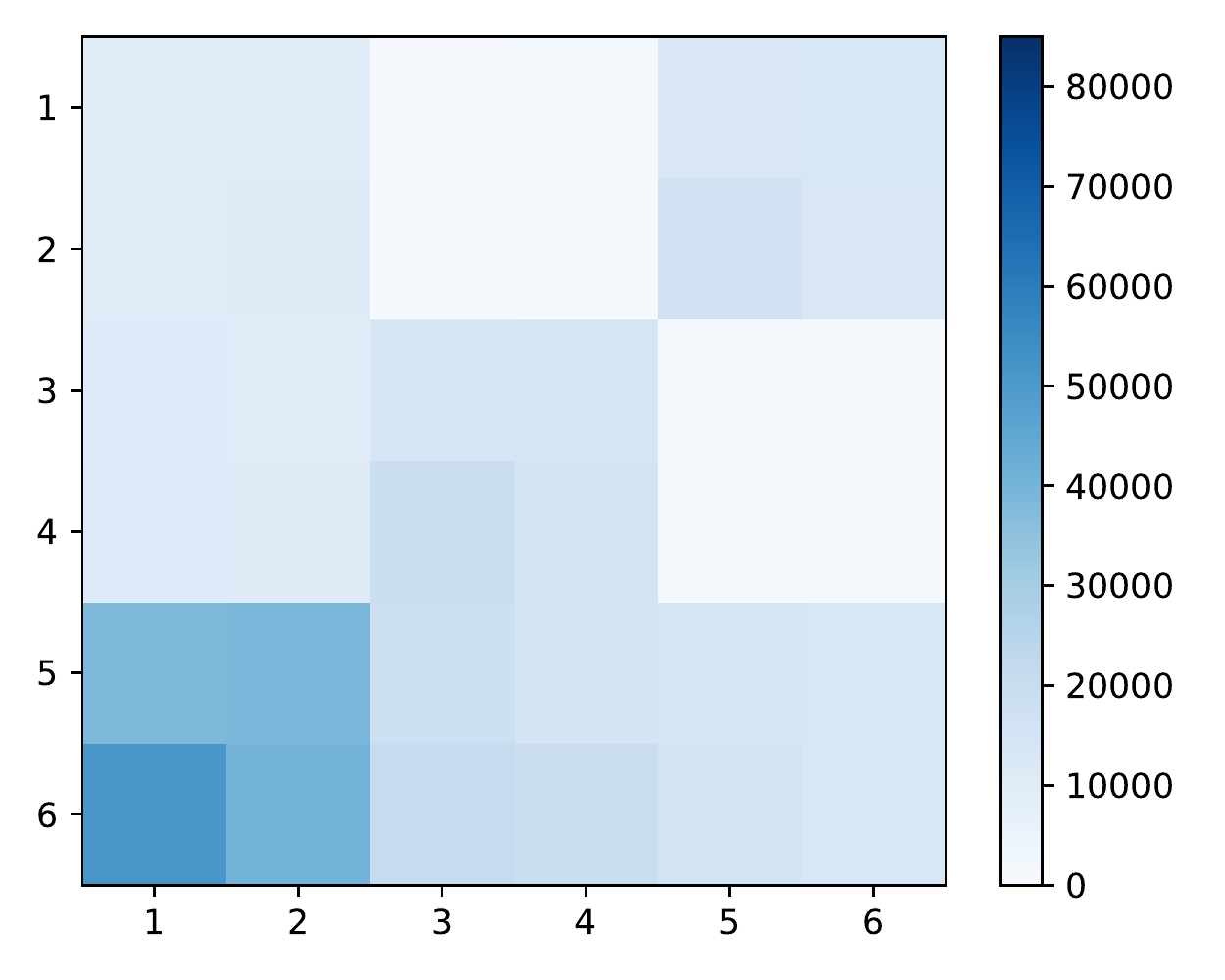}
        \caption{Six-$\alpha$ ($\text{MAPE}=16.5$)}
        \label{fig:6alphaMotifCountsR}
    \end{subfigure}
    \hfill
    \begin{subfigure}[c]{\figwidth}
        \centering
        \includegraphics[width=\textwidth]{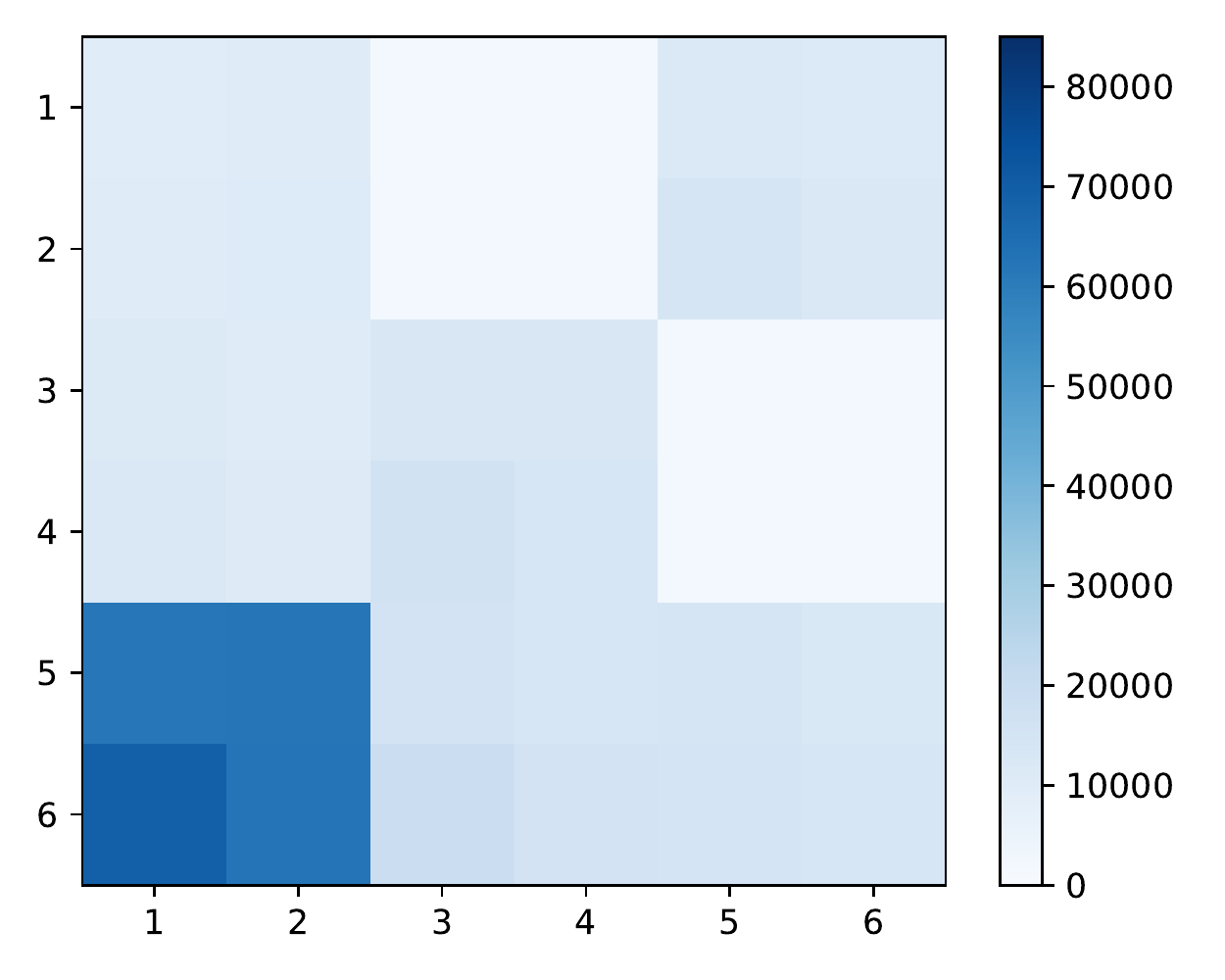}
        \caption{Four-$\alpha$ ($\text{MAPE}=16.1$)}
        \label{fig:4alphaMotifCountsR}
    \end{subfigure}
    \hfill
    \begin{subfigure}[c]{\figwidth}
        \centering
        \includegraphics[width=\textwidth]{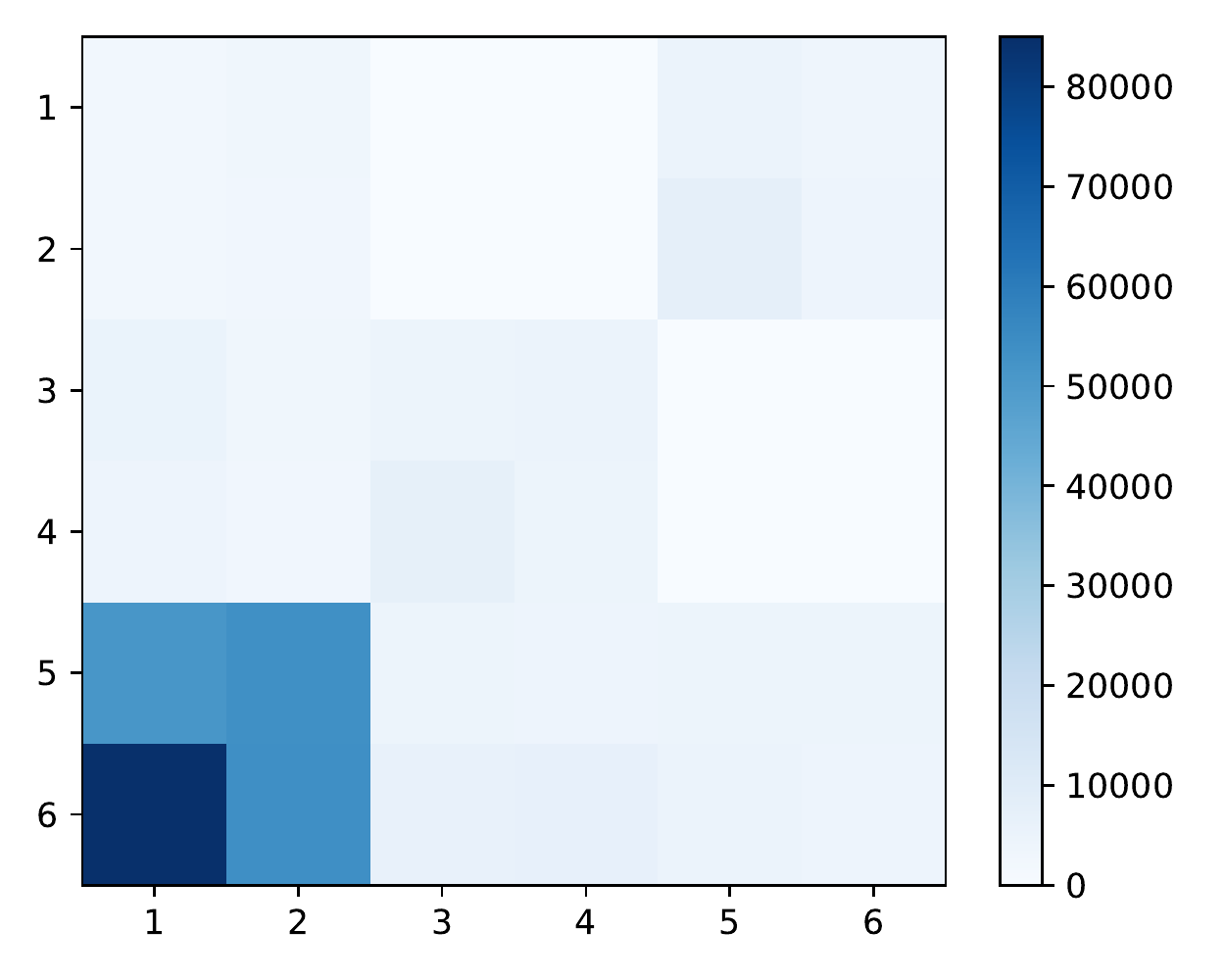}
        \caption{Two-$\alpha$ ($\text{MAPE}=69.8$)}
        \label{fig:2alphaMotifCountsR}
    \end{subfigure}
    \hfill
    \caption{Average temporal motif counts for time window $\delta=1$ week on $10$ simulated networks from six-$\alpha$, four-$\alpha$, and two-$\alpha$ MULCH models fitted on the Reality Mining dataset.}
    \label{fig:AlpahsMotifCounts}
\end{figure*}


\end{document}